\DeclareSymbolFont{AMSb}{U}{msb}{m}{n}
\documentclass[reqno,centertags,11pt,a4paper,noamsfonts]{amsart}
\pdfoutput=1
\usepackage[svgnames]{xcolor}
\usepackage{graphicx}
\usepackage[english]{babel}

\usepackage[utf8]{inputenc}
\usepackage[babel=true,expansion=alltext,protrusion=alltext-nott,final]{microtype}
\usepackage[margin={3cm},marginparwidth={2.5cm}]{geometry}
\usepackage[
  backend=biber,
  style=numeric,
  sorting=none,
  giveninits=true,
  doi=true,
  maxbibnames=6,
  isbn=false,
  url=true,
  eprint=false,
  sorting=nyt
]{biblatex}

\renewbibmacro{in:}{%
  \ifentrytype{article}
    {}
    {\printtext{\bibstring{in}\intitlepunct}}}

\AtEveryBibitem{%
  \clearfield{note}%
}

\renewbibmacro*{doi+eprint+url}{%
  \printfield{doi}%
  \newunit\newblock
  \iffieldundef{doi}
    {\usebibmacro{url+urldate}}
    {}%
}

\usepackage{textcomp}
\usepackage[sb]{libertine}
\usepackage[libertine,bigdelims,vvarbb]{newtxmath}
\useosf
\usepackage[varqu,varl]{inconsolata}
\usepackage[supsfam=LibertinusSerif-Sup,%
       supscaled=1.2,%
       raised=-.13em]{superiors}
\usepackage{mathtools}
\usepackage[T1]{fontenc}
\usepackage{textcomp}
\usepackage{amsthm}
\usepackage{csquotes}
\usepackage[inline]{enumitem}
\numberwithin{equation}{section}
\usepackage[normalem]{ulem}

\usepackage[cal=cm]{mathalpha}
\usepackage{float}  
\usepackage{caption}
\usepackage{subcaption} 
\usepackage{xspace}         
\usepackage[scientific-notation=true]{siunitx}      
\usepackage{pstricks}
\usepackage{tikz}
\usetikzlibrary{positioning} 
\usetikzlibrary{calc}
\usepackage{pgfplots}
\pgfplotsset{width=10cm,compat=1.9}
\usepackage{bm}
\usepackage{sidecap}
\usepackage{graphicx,color}

\usepackage{amsmath}
\DeclareFontFamily{U}{mathx}{}
\DeclareFontShape{U}{mathx}{m}{n}{<-> mathx10}{}
\DeclareSymbolFont{mathx}{U}{mathx}{m}{n}
\DeclareMathAccent{\widehat}{0}{mathx}{"70}
\DeclareMathAccent{\widecheck}{0}{mathx}{"71}

\newcommand{\C}{\mathcal{C}}

\definecolor{light_gray}{gray}{0.75}
\definecolor{lighter_gray}{gray}{0.5}
\colorlet{light_blue}{blue!20}
\definecolor{dark_green}{rgb}{0.0, 0.6, 0.0}
\definecolor{royal_blue}{rgb}{0.0, 0.22, 0.66}
\definecolor{salmon}{rgb}{1.0, 0.55, 0.41}
\definecolor{gold}{rgb}{0.8, 0.63, 0.21}
\definecolor{navy_blue}{rgb}{0.0, 0.0, 0.5}
\definecolor{crimson}{rgb}{0.79, 0.0, 0.09}
\definecolor{amethyst}{rgb}{0.6, 0.4, 0.8}
\definecolor{alizarin}{rgb}{0.82, 0.1, 0.26}
\definecolor{amaranth}{rgb}{0.9, 0.17, 0.31}
\definecolor{azure}{rgb}{0.0, 0.5, 1.0}
\definecolor{canaryyellow}{rgb}{0.82, 0.41, 0.12}
\definecolor{carrotorange}{rgb}{0.8, 0.33, 0.0}
\definecolor{cadmiumgreen}{rgb}{0.0, 0.42, 0.24}
\definecolor{copper}{rgb}{0.72, 0.45, 0.2}
\definecolor{aqua}{rgb}{0.5, 1.0, 0.83}
\definecolor{awesome}{rgb}{1.0, 0.13, 0.32}
\definecolor{candyapplered}{rgb}{1.0, 0.03, 0.0}
\definecolor{caribbeangreen}{rgb}{0.0, 0.8, 0.6}
\definecolor{indigo}{rgb}{0.0, 0.25, 0.42}

\DeclareMathOperator{\weaklystar}{\rightharpoonup\kern-2.2ex ^* \, \,}

\def\XXint#1#2#3{{\setbox0=\hbox{$#1{#2#3}{\int}$ }
\vcenter{\hbox{$#2#3$ }}\kern-.6\wd0}}

\newcommand{\Cp}{\mathrm{Cap}_{\Omega}}

\newcommand{\R}{\mathbb R}
\newcommand{\N}{\mathbb N}

\renewcommand{\C}{\mathbb C}

\newcommand\norm[1]{\lVert #1 \rVert}
\newcommand\bignorm[1]{\big\lVert #1 \big\rVert}

\newcommand\inner[1]{\langle #1 \rangle}

\newcommand{\tr}{{\rm {tr}\,}}

\newcommand\scpr{\boldsymbol{\cdot}}

\newcommand{\mL}{\mathrm{L}}

\renewcommand{\phi}{\varphi}

\newcommand{\mH}{\mathrm{H}}
\newcommand{\mW}{\mathrm{W}}

\theoremstyle{plain}
\newtheorem{theorem}{Theorem}[section]
\newtheorem{proposition}[theorem]{Proposition}

\newtheorem{corollary}[theorem]{Corollary}
\newtheorem{lemma}[theorem]{Lemma}

\newtheorem*{theorem*}{Theorem}

\theoremstyle{definition}
\newtheorem{definition}[theorem]{Definition}
\newtheorem{remark}[theorem]{Remark}
\newtheorem*{remark*}{Remark}

\usepackage[pagebackref=false]{hyperref}
\hypersetup{
	linktoc=all,
	bookmarksdepth=3,
	colorlinks=true,
	linkcolor=Green,citecolor=Orange,urlcolor=DarkBlue,
	pdftitle={A maximal Hohenberg-Kohn theorem via potential theory},
	pdfauthor={Thiago Carvalho Corso}, 
	pdfsubject={},
	pdfkeywords={}} 
\begin{document}
\numberwithin{table}{section}
%%%%%%%%%%%%%%%%%%%%%%%%%%%%%%%%%%%%%%%%%%%%%%%%%%

\title[A maximal HK theorem for non-interacting systems via potential theory]{A maximal Hohenberg-Kohn theorem for non-interacting systems via potential theory}

%%%%%%%%%%%%%%%%%%%%%%%%%%%%%%%%%%%%%%%%%%%%%%%%%%
\author[T.~Carvalho~Corso]{Thiago Carvalho Corso*}
\address[T.~Carvalho Corso]{Institute of Applied Analysis and Numerical Simulation, University of Stuttgart, Pfaffenwaldring 57, 70569 Stuttgart, Germany}
\email{thiago.carvalho-corso@mathematik.uni-stuttgart.de}

%%%%%%%%%%%%%%%%%%%%%%%%%%%%%%%%%%%%%%%%%%%%%%%%%%
\keywords{Density-functional theory, Hohenberg--Kohn theorem, Sobolev multipliers, Schr\"odinger equation, distributional potentials, regular states, Kohn-Sham potential, many-body quantum systems}
\subjclass[2020]{Primary: 35R30
 Secondary: 35J10, 81Q10
, 81V74}

% 35J10 - Schrödinger operator, Schrödinger equation
% 35R30 - Inverse problems for PDEs
% 81Q10 Selfadjoint operator theory in quantum theory, including spectral analysis
% 81V74 Fermionic systems in quantum theory
\date{\today}
\thanks{\emph{Funding information}:  DFG -- Project-ID 572811220 and Project-ID 442047500 -- SFB 1481.\\[1ex]
\textcopyright 2026 by the authors. Faithful reproduction of this article, in its entirety, by any means is permitted for noncommercial purposes.}
%%%%%%%%%%%%%%%%%%%%%%%%%%%%%%%%%%%%%%%%%%%%%%%%%%
%\dedicatory{}
%%%%%%%%%%%%%%%%%%%%%%%%%%%%%%%%%%%%%%%%%%%%%%%%%%
\begin{abstract} We show that for Schr\"odinger operators in a connected domain, the Hohenberg-Kohn theorem holds within the class of Laplace form-bounded external potentials if and only if the single-particle density is strictly positive quasi-everywhere. Furthermore, we show that this condition is satisfied for non-interacting Schr\"odinger operators whenever a ground-state exists. Consequently, we establish the Hohenberg-Kohn theorem for non-interacting systems, and thereby the uniqueness of the Kohn-Sham potential, within the maximal class of Laplace form-bounded distributions. The main ingredient to establish these results is a characterization of regular states, whose proof relies on tools from classical potential theory. Moreover, this characterization reveals that, in the continuum setting, the fundamental mechanism underlying the Hohenberg-Kohn theorem is the (quasi)-unique continuation of the density rather than of the many-body wavefunction.
\end{abstract}
%%%%%%%%%%%%%%%%%%%%%%%%%%%%%%%%%%%%%%%%%%%%%%%%%%
\setcounter{tocdepth}{1}
\maketitle
%\tableofcontents
%%%%%%%%%%%%%%%%%%%%%%%%%%%%%%%%%%%%%%%%%%%%%%%%%%
\setcounter{secnumdepth}{2}
%%%%%%%%%%%%%%%%%%%%%%%%%%%%%%%%%%%%%%%%%%%%%%%%%%

\section{Introduction}

\subsection{Motivation} Density Functional Theory (DFT) is the most widely used theoretical framework for electronic structure calculations, playing a central role in chemistry, condensed-matter physics, and materials science. Its conceptual foundation rests on the fundamental observation that all equilibrium properties of many-electron systems can, in principle, be described entirely in terms of the single-particle ground state density. At the core of this observation lies the Hohenberg--Kohn (HK) theorem, which establishes the uniqueness of the external potential generating a given ground state density. Since its original formulation, the HK theorem has profoundly influenced the development of DFT \cite{EE83,PW95,Koh99,Lev10, ED11, Jon15} and continues to motivate mathematical investigations into the density–potential correspondence and the structure of density-based quantum theories \cite{Gar21,Gar22,PTC+23,PvL23,BCL+25}.

From a mathematical perspective, however, the proof of the Hohenberg--Kohn theorem crucially depends on the regularity of the external potential. Indeed, the original argument of Hohenberg and Kohn \cite{HK64} was merely formal, as no rigorous justification was given for the claim that the same wavefunction cannot satisfy two different Schr\"odinger equations. This issue was subsequently clarified by Lieb in his celebrated work \cite[Theorem 3.2]{Lie83}, who observed that the proof can be completed under suitable regularity assumptions on the potential by invoking the strong unique continuation property of the many-body wavefunction. Since then, considerable effort has been devoted to extending the HK theorem to increasingly singular classes of potentials \cite{Zhou12,Lam18,Gar18,Zhou19,Gar20,Gar21,Gar22,LBP20,LLS22}. The strongest result currently available in the continuum setting is due to Garrigue \cite{Gar18,Gar20}, who established uniqueness for external potentials locally in $\mL^p$ for $p>\max\{2d/3,2\}$ (where $d$ is the physical dimension). Nevertheless, this still falls short of the $\mL^{d/2}_{\mathrm{loc}}$ space, which was identified by Lieb \cite{Lie83} as a natural admissible class of external potentials in DFT. Furthermore, these results do not address the uniqueness problem for the considerably larger class of Laplace form-bounded potentials, which was conjectured to hold in \cite{Gar18}. 

The purpose of the present work is therefore to investigate the validity of the Hohenberg--Kohn theorem for the class of Laplace form-bounded potentials. The motivation for considering this larger class of potentials is not merely one of mathematical generality, but rather arises naturally from recent developments in the rigorous foundations of density functional theory for one-dimensional systems ($d=1$). More precisely, building on the work of Sutter \emph{et al.} \cite{SPR+24}, the author and collaborators have recently developed a mathematically rigorous and differentiable formulation of DFT for one-dimensional systems \cite{Cor25,Cor26a,Cor26b,CL25,Cor26c}. Collectively, these works demonstrate that many of the fundamental problems of DFT -- such as pure-state v-representability, the differentiability of the exchange-correlation functional, and the stability of the corresponding inverse problem -- admit complete solutions in one dimension after enlarging the admissible class of external potentials to Laplace form-bounded distributions. %in the spirit originally envisioned by Kohn and Sham \cite{KS65}. In particular, under suitable boundary conditions, these works provide a complete characterization of pure-state $v$-representability for one-dimensional Schrödinger operators with local potentials, establish analyticity of the density-to-potential map, and prove Lipschitz stability of the corresponding inverse problem. A crucial ingredient underlying all of these results is the enlargement of the admissible class of external potentials from locally integrable functions to the class of Laplace form-bounded distributions. 
Consequently, they strongly suggest that distributional potentials are not merely a technical generalization, but rather constitute the natural functional-analytic setting for a differentiable formulation of DFT. 

In this context, the present work takes a first step towards extending the aforementioned one-dimensional program to the higher-dimensional setting by establishing a Hohenberg--Kohn theorem for non-interacting systems with Laplace form-bounded external potentials.

\subsection{Main contributions}  More precisely, the main contributions of this paper can be summarized as follows:
\begin{enumerate}[label=(\roman*)]
    
    \item We show that the Hohenberg-Kohn theorem holds within the class of Laplace form-bounded potentials if and only if the ground state density is strictly positive quasi-everywhere.
    \item We show that the ground state density of non-interacting systems with Laplace form-bounded potentials is strictly positive quasi-everywhere. As a consequence, we establish the Hohenberg-Kohn theorem for non-interacting systems, in any dimension, within the class of Laplace form-bounded potentials. In particular, this guarantees the uniqueness of the Kohn-Sham potential. 
    \item We introduce the notion of regular states in the continuum setting and provide a complete characterization of such states via quasi-strict positivity of their densities. This characterization is the main structural result underlying the proof of the Hohenberg–Kohn theorem.
\end{enumerate}
Furthermore, we emphasize that the proofs of these results rely in an essential way on tools from classical potential theory. To the best of the author's knowledge, this is the first application of these methods to foundational questions in density functional theory, and constitutes one of the conceptual novelties of the present work.

\section{Main results}
\label{sec:main results}

\subsection*{Preliminary notation}

Let us now introduce the notation necessary to precisely state our main results. First, we denote by $\Omega\subset \R^d$ an open and connected subset of $\R^d$ for some $d\in \N$. We then denote by $\mL^2(\Omega)$ the space of square integrable functions with respect to the Lebesgue measure on $\Omega$. For $n\in \N$, we use the notation $\mathcal{H}_n$ for the space of spinless (or spin-polarized) $n$-particles fermionic wavefunctions, i.e.,
\begin{align}
    \mathcal{H}_n = \bigwedge^n \mL^2(\Omega). \label{eq:fermioninc space}
\end{align}
Throughout this paper we shall consider Schr\"odinger operators of the form
\begin{align}
    H_n(v,w) = -\Delta + \sum_{j\neq k}^n w(x_j,x_k) + \sum_{j=1}^n v(x_j), \label{eq:Hamiltonian}
\end{align}
acting on the fermionic space $\mathcal{H}_n$. To properly define these operators, we need to impose assumptions on $v$ and $w$, and specify the corresponding form domains. Throughout this paper we work with Dirichlet boundary conditions, i.e., we consider operators with the form domain
\begin{align}
    \mathcal{Q}_n \coloneqq  \mathcal{H}_n \cap \mH^1_0(\Omega^n), \label{eq:Dirichlet form domain}
\end{align}
where $\mH^1_0(\Omega^n)$ is the Sobolev space of $\mL^2$ functions in $\Omega^n$ with $\mL^2$ weak gradient and zero boundary conditions on $\partial \Omega^n$. In the case $n=1$, we simply have $\mathcal{H}_1 = \mL^2(\Omega)$ and $\mathcal{Q}_1 = \mH^1_0(\Omega)$. A few comments on further generalizations to other boundary conditions are given later in Remark~\ref{rem:extensions}.

\subsection*{Potential spaces} In view of the chosen boundary conditions, we introduce the following classes of external potentials:
\begin{align}
    \mathcal{V}_{\rm fb}(\Omega) \coloneqq \{ v\in \mathcal{D}'(\Omega;\R): \quad \mbox{$v$ satisfies the form bound~\eqref{eq:form-bound condition}} \}, 
\end{align}
and
\begin{align}
    \mathcal{V}(\Omega) \coloneqq \{ v\in \mathcal{D}'(\Omega;\R) : \mbox{ $v$ satisfies~\eqref{eq:form-bound condition} for any $\varepsilon>0$ for sufficiently large $C_\varepsilon>0$.}\}.
\end{align}
Here, $\mathcal{D}'(\Omega;\R)$ denotes the space of real-valued distributions in $\Omega$ and the form bound is defined as follows: there exists $\varepsilon>0$ and $C_\varepsilon>0$ such that
\begin{align}
|\inner{v,|\phi|^2}| \leq \varepsilon \norm{\phi}_{\mH^1}^2 + C_\epsilon \norm{\phi}_{\mL^2}^2, \quad \mbox{for any $\phi \in C^\infty_c(\Omega)$,} \label{eq:form-bound condition}
\end{align}
where $\inner{\cdot,\cdot}$ denotes the distributional dual pairing. Throughout the paper, we adopt the convention that $\inner{\cdot,\cdot}$ is linear in the left entry and conjugate-linear in the right one. In particular, we see distributions as conjugate linear functionals. However, we note that, since the density and potential are real valued, this convention plays no role in defining the quadratic form of $H_n(v,w)$ below.

Similarly, we define the space of admissible pairwise interactions as
\begin{align*}
    \mathcal{W}_{\rm fb}(\Omega\times \Omega) = \{w\in \mathcal{D}'(\Omega\times \Omega;\R) : \quad\mbox{$w$ satisfies the form bound~\eqref{eq:form-bound condition} for $\phi \in C^\infty_c(\Omega\times \Omega)$} \}.
\end{align*}
However, we remark that the interaction potential $w$ will be mostly fixed (or $=0$) so the space of interaction potentials will not be that relevant to us here.

For such class of potentials, one can define the quadratic form
\begin{align}
    \inner{\Psi, H_n(v,w) \Psi} \coloneqq \int_{\Omega^n} |\nabla \Psi(x)|^2 \mathrm{d} x + \inner{w, \rho_\Psi^{(2)}} + \inner{v, \rho_\Psi}, \quad \Psi \in \mathcal{Q}_n, \label{eq:n-particle form}
\end{align}
where $\rho_\Psi^{(2)}$ and $\rho_\Psi$ are, respectively,  the pair density
\begin{align}
    \rho_{\Psi}^{(2)}(x,y) = n(n-1) \int_{\Omega^{n-2}} |\Psi(x,y, x_3,...,x_n)|^2 \mathrm{d} x_3... \mathrm{d} x_n, \label{eq:pair density}
\end{align}
and the single-particle density
\begin{align}
    \rho_\Psi(x) = n \int_{\Omega^{n-1}} |\Psi(x,x_2,...,x_n)|^2 \mathrm{d} x_2 ... \mathrm{d} x_n \label{eq:density}
\end{align}
associated to the wavefunction $\Psi$.

It is important to note that, for potentials $v\in \mathcal{V}_{\rm fb}(\Omega)$ and $w\in \mathcal{W}_{\rm fb}(\Omega\times \Omega)$, the quadratic form~\eqref{eq:n-particle form} might not be closed in $\mathcal{H}_n$, and therefore, not necessarily define a self-adjoint operator on this space. Nevertheless, we can still define a ground-state via the minimization of the quadratic form. To be precise, we shall work with the following definition of ground-states.
\begin{definition}[Mixed states, ground-states and ground-state densities] \label{def:mixed-states and ground-states} $\quad$ \\
\begin{enumerate}[label=(\roman*)]
\item ($\mathcal{Q}_n$ states) We say that a bounded operator $\Gamma \in \mathcal{B}(\mathcal{H}_n)$ is a $\mathcal{Q}_n$ state if it satisfies $0 \leq \Gamma \leq 1$, $\tr \Gamma = 1$, and the spectral decomposition $\Gamma = \sum_{j} \lambda_j |\Psi_j\rangle \langle \Psi_j |$ satisfies
\begin{align}
    \Psi_j \in \mathcal{Q}_n \quad \mbox{for all $j\in J$, and}\quad  \sum_{j} \lambda_j \norm{\Psi_j}_{\mH^1}^2  < \infty.\label{eq:H1 mixed state}
\end{align} 
\item (Energy and ground-states) For $v\in \mathcal{V}_{\rm fb}(\Omega), w\in \mathcal{W}_{\rm fb}(\Omega\times \Omega)$, we define the energy of a $\mathcal{Q}_n$ state $\Gamma = \sum_j \lambda_j |\Psi_j \rangle \langle \Psi_j|$ as
\begin{align}
    E_{v,w}(\Gamma) \coloneqq \mathrm{tr}\, H_n(v,w) \Gamma = \sum_{j=1}^\infty \lambda_j \inner{\Psi_j, H_n(v,w) \Psi_j}, \label{eq:energy def}
\end{align}
with the quadratic form defined in~\eqref{eq:n-particle form}. We then say that $\Gamma$ is a ground-state of $H_n(v,w)$ if $\Gamma$ minimizes the energy $E_{v,w}(\Gamma)$ over all $\mathcal{Q}_n$ states. 
\item (Ground-state densities) We say that a function $\rho: \Omega \rightarrow \R_+$ is a ground-state density of $H_n(v,w)$ if and only if there exists a ground-state $\Gamma$ of $H_n(v,w)$ with
\begin{align}
    \rho_\Gamma(x) = \sum_{j} \lambda_j \rho_{\Psi_j}(x) = \rho(x) ,\quad \mbox{for almost-everywhere $x\in \Omega$.}\label{eq:density mixed states}
\end{align}
\end{enumerate}
\end{definition}

For convenience, we shall use the hat lowercase notation $\hat{v}$ for the symmetric single-particle operator induced by the dual pairing
\begin{align*}
    \inner{\hat{v} \psi, \phi} = \inner{\psi, \hat{v} \phi} = \inner{v, \overline{\psi} \phi}, \quad \psi, \phi \in C^\infty_c(\Omega),
\end{align*}
and the hat uppercase notation $\hat{V}$ and $\hat{W}$ for the corresponding $n$-particle operators induced by $v \in \mathcal{V}_{\rm fb}(\Omega)$ and $w\in \mathcal{W}_{\rm fb}(\Omega \times \Omega)$ via the forms 
\begin{align}
\inner{\Psi, \hat{V} \Psi} = \inner{v, \rho_\Psi}, \quad\mbox{and}\quad \inner{\Psi, \hat{W} \Psi} = \inner{w, \rho^{(2)}_\Psi}.  \label{eq:n particle potentials}
\end{align}
In other words, $\hat{V} = \sum_{j=1}^n v(x_j)$ and $\hat{W} = \sum_{j\neq k}^n w(x_j,x_k)$.

\subsection{Necessary and sufficient conditions for the Hohenberg-Kohn theorem}

With the notation set, we can now state our first main result. This result provides necessary and sufficient conditions for the Hohenberg-Kohn theorem to hold in terms of the single-particle density.

\begin{theorem}[Necessary and sufficient conditions for the HK theorem] \label{thm:HK criteria} Let $v\in \mathcal{V}_{\rm fb}(\Omega)$, $w\in \mathcal{W}_{\rm fb}(\Omega\times \Omega)$, and $\rho$ be a ground-state density of $H_n(v,w)$ in the sense of Definition~\ref{def:mixed-states and ground-states}. Then the potential $v$ is the unique modulo additive constants potential in $\mathcal{V}_{\rm fb}(\Omega)$ such that $\rho$ is a ground state density of $H_n(v,w)$ if and only if the precise representative of $\rho$ satisfies
\begin{align}
    \rho(x) > 0 \quad \mbox{quasi-everywhere in $\Omega$.} \label{eq:positive condition}
\end{align}
\end{theorem}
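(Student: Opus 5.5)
The proof splits into the standard Hohenberg--Kohn reduction followed by a potential-theoretic characterization. \textbf{Step 1 (reduction to a single-particle identity).} Let $v,v'\in\mathcal{V}_{\rm fb}(\Omega)$ both have $\rho$ as a ground-state density, with ground states $\Gamma$, $\Gamma'$. The usual variational argument gives
\[
E_{v,w}(\Gamma)\le E_{v,w}(\Gamma'),\qquad E_{v',w}(\Gamma')\le E_{v',w}(\Gamma).
\]
Since $E_{v',w}(\Gamma)=E_{v,w}(\Gamma)+\inner{v'-v,\rho}$ and $\rho_\Gamma=\rho_{\Gamma'}$, both inequalities are in fact equalities, so $\Gamma$ is a ground state of $H_n(v',w)$ as well. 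Each $\Psi_j$ with $\lambda_j>0$ is then a minimizer of both forms. Taking first variations $\Psi_j+t\Phi$ with $\Phi\in\mathcal{Q}_n$ gives two weak Euler--Lagrange equations. Subtracting them yields
\[
\inner{\hat{U}\Psi_j,\Phi}=(E'-E)\inner{\Psi_j,\Phi},\qquad u\coloneqq v'-v,
\]
for all $\Phi\in\mathcal{Q}_n$. Shifting $u$ by a constant, we may assume $E'=E$, so $\hat U\Psi_j=0$ as a functional on $\mathcal{Q}_n$. Choosing $\Phi=\Psi_j\,\chi(x_1)$-type test functions (symmetrized over particles), or more directly testing against $\Psi_j f$ with $f(x)=\sum_k\phi(x_k)$ for $\phi\in C_c^\infty(\Omega)$, and summing over $j$ gives
\[
\inner{u,\phi\rho}=0\quad\text{for all }\phi\in C^\infty_c(\Omega).
\]
Since $\sqrt\rho\in\mH^1_0(\Omega)$ (a Hoffmann-Ostenhof type bound), the same argument should give $\inner{u,\phi\,\rho}=0$ for all $\phi\in\mH^1_0\cap L^\infty$. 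Hence the question becomes: does $u\rho=0$ (as a distribution paired via the form) force $u=0$?

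\textbf{Step 2 (sufficiency).} Assume $\rho>0$ quasi-everywhere. I would show that the set $\{\phi\rho^{1/2}\psi\}$, or more simply $\{\phi\rho\}$ with truncations, is dense in $\mH^1_0(\Omega)$ in the relevant sense. A form-bounded $u$ extends to a continuous functional on $\mH^1_0$-type products: $\psi\mapsto\inner{u,|\psi|^2}$ is continuous on $\mH^1_0$. The key potential-theoretic fact is that, for $g\in\mH^1_0(\Omega)$ with $g>0$ q.e., the functions $\eta(g/\epsilon)\cdot\psi$, with $\eta$ a cutoff vanishing near $0$ and $\psi\in C_c^\infty$, approximate $\psi$ in $\mH^1$. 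This is a consequence of the characterization of $\mH^1_0$ of a quasi-open set: $\{g>0\}$ is quasi-open and equal to $\Omega$ up to a polar set, so $\mH^1_0(\{g>0\})=\mH^1_0(\Omega)$ by Hedberg/Kilpeläinen--Malý type results. Functions supported where $\rho\ge\epsilon$ can be written as $\phi\rho$ with $\phi=\psi\eta(\rho/\epsilon)/\rho$ bounded and in $\mH^1$. Hence $\inner{u,\psi^2}=0$ for all $\psi\in C_c^\infty$, which gives $u=0$ after polarization. The constant shift gives uniqueness modulo constants.

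\textbf{Step 3 (necessity).} If the polar complement fails, the set $K=\{\rho=0\}$ has positive capacity. Take its capacitary (equilibrium) measure $\mu$ of a compact subset, restricted to a small ball inside $\Omega$. Since $\mu$ has finite energy, it lies in $\mH^{-1}$, and it should be form-bounded. I would then set $v'=v+\mu$, or rather $v+t\mu$ with $t$ small, using a measure with bounded potential (a Kato-type measure) to get genuine form-boundedness. Every $\Psi_j$ has one-particle density vanishing q.e. on $\operatorname{supp}\mu$, and its quasi-continuous representative vanishes $\mu$-a.e. Hence $\inner{\mu,\rho_\Psi}=0$ for the ground states, while $\inner{\mu,\rho_\Phi}\ge0$ for all $\Phi$. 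Thus $\Gamma$ remains a ground state of $H_n(v+\mu,w)$, and $\mu$ is not constant. This contradicts uniqueness.

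\textbf{Main obstacle.} I expect the main difficulty to be the density step in Step 2, namely justifying that $\inner{u,\phi\rho}=0$ extends from smooth $\phi$ to the singular quotients $\psi\eta(\rho/\epsilon)/\rho$ and passing to the limit. This needs uniform form-bound control and quasi-continuity of the precise representative. A secondary difficulty is constructing a nonconstant form-bounded $\mu$ supported on $\{\rho=0\}$ in Step 3. Here I would use a finite-energy measure whose potential is bounded, which is available for compact sets of positive capacity.
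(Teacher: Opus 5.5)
Your Step 1 has a genuine gap: $\hat U\Psi_j=0$ does not give $\inner{u,\phi\rho}=0$. Testing against $F\Psi_j$ with $F=\sum_k\phi(x_k)$ actually yields (formally)
\begin{align*}
0=\inner{F\Psi_j,\hat U\Psi_j}=\int_\Omega u\,\phi\,\rho_{\Psi_j}\,\mathrm{d}x+\iint_{\Omega\times\Omega}u(x)\,\phi(y)\,\rho^{(2)}_{\Psi_j}(x,y)\,\mathrm{d}x\,\mathrm{d}y ,
\end{align*}
and the pair-density cross term does not vanish. No choice of test functions can repair this, because the implication $\hat U\Gamma=0\Rightarrow u\rho_\Gamma=0$ is false in general. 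Take the paper's first sharpness example: $\Omega=(0,1)$, $n=2$, $\Psi\propto\sin(2\pi x)_+\sin(2\pi y)_{-}-\sin(2\pi x)_{-}\sin(2\pi y)_+$, and $u=1$ on $(0,\tfrac12)$, $u=-1$ on $(\tfrac12,1)$. There $\hat U\Psi=0$ because exactly one particle sits in each half, while $u\rho_\Psi\neq0$ although $\rho_\Psi>0$ a.e.

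A second symptom is that your argument never uses that $\Omega$ is connected. Yet the conclusion fails on a disconnected $\Omega$: put one particle per component, set $u=\pm1$ on the two components, and $\rho>0$ q.e.\ still holds. Under $\rho^\ast>0$ q.e.\ and connectedness the identity $u\rho=0$ is true, but only as a consequence of the theorem itself. So your Step 2 rests on an unproved premise, and the sufficiency direction is not established.

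The missing idea is a many-body-to-one-body reduction that respects these cancellations between particles. The paper uses the identity $\mathrm{Tr}\,[\hat U,a(g)^\ast a(f)]\Gamma=\inner{\hat u g,\gamma_\Gamma f}-\inner{\hat u\gamma_\Gamma g,f}$. Hence $\hat U\Gamma=0$ forces $[\hat u,\gamma_\Gamma]=0$, and one can pick natural orbitals with $\hat u\phi_j=a_j\phi_j$; in the example above the two orbitals have $a_j=\pm1$. Locality gives $(a_j-a_k)\phi_j^\ast\phi_k^\ast=0$ q.e., so the quasi-open sets $U_a=\bigcup_{a_j=a}\{\phi_j^\ast\neq0\}$ intersect only in polar sets. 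They cover $\Omega$ up to a polar set precisely because $\rho^\ast>0$ q.e.; in the example they miss the non-polar point $\tfrac12$. Fuglede's quasi-connectedness of the connected $\Omega$ then forces a single value $c$. Only at this point does a density argument like yours apply, namely to the map $\zeta\mapsto\sum_j\overline{\xi_j}\zeta_j$ with $\xi_j=\sqrt{n_j}\phi_j$, giving $u=c$. That argument should be built from $\sqrt{\rho}$, since $\rho$ itself need not lie in $\mH^1$.

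Your Step 3 is essentially sound and slightly simpler than the paper's. Finite energy alone does not give the quadratic form bound. A measure on a compact subset of $\{\rho^\ast=0\}$ with bounded potential does, e.g.\ via $\mu(F)\le\norm{G_2\ast\mu}_{\mL^\infty}\mathrm{Cap}(F)$ together with Maz'ya's capacitary criterion, and this suffices for $\mathcal{V}_{\rm fb}(\Omega)$. The paper instead builds an infinitesimally form-bounded measure (Egorov plus the Maz'ya criterion), because Theorem~\ref{thm:regular states} asks for $v\in\mathcal{V}(\Omega)$.
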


\begin{remark}[On the positivity of the density] \label{rem:positive density} While we postpone the precise definition of precise representative and quasi-everywhere to Section~\ref{sec:preliminaries}, let us emphasize that condition~\eqref{eq:positive condition} is much stronger than strict positivity almost everywhere. Roughly speaking, it says that the zero set of the density cannot have (Hausdorff) codimension less than $2$. Moreover, we note that quasi-everywhere is equivalent to everywhere in one dimension. In particular,~\eqref{eq:positive condition} is a natural analogue of the everywhere strict positivity property that plays a central role in the one-dimensional setting \cite{SPR+24,Cor25,Cor26a,Cor26b}.
\end{remark}

\subsection{The HK theorem for non-interacting systems}

The next result shows that the criterion from Theorem~\ref{thm:HK criteria} is satisfied for non-interacting systems.

\begin{theorem}[Quasi-strict positivity of the ground state density] \label{thm:density UCP} Let $v\in \mathcal{V}_{\rm fb}(\Omega)$,  then the precise representative of any ground-state density $\rho$ of $H_n(v,0)$ satisfies
\begin{align*}
    \rho(x) > 0 \quad \mbox{quasi-everywhere in $\Omega$.}
\end{align*}
\end{theorem}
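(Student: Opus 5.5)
The plan is to reduce the $n$-particle statement to a one-particle statement about the ground state of $-\Delta+\hat v$ on $\mH^1_0(\Omega)$. Then I would prove quasi-everywhere positivity of that single orbital via potential theory.

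\emph{Step 1 (reduction to orbitals).} For $w=0$ the quadratic form of $H_n(v,0)$ is the second quantization of the one-body form $q_v(\phi)=\norm{\nabla\phi}^2+\inner{v,|\phi|^2}$ on $\mH^1_0(\Omega)$. This form is bounded below and closed, since a form bound with $\varepsilon<1$ is needed for a ground state to exist; if only $\varepsilon\ge 1$ is available, I would work with the closure and note that minimizers must still lie in $\mH^1_0$.

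Let $h_v$ be the associated self-adjoint operator. Any ground state $\Gamma$ has one-body density matrix $\gamma_\Gamma$ with $0\le\gamma_\Gamma\le 1$ and $\tr\gamma_\Gamma=n$, and $E_{v,0}(\Gamma)=\tr h_v\gamma_\Gamma$. Minimality then forces $\gamma_\Gamma$ to be supported on the spectral subspace below or at the $n$-th eigenvalue level. In particular, $\gamma_\Gamma\ge c\,|\psi_0\rangle\langle\psi_0|$ for some $c>0$, where $\psi_0$ is the lowest eigenfunction of $h_v$. This holds because a minimizing $\gamma$ must fill all levels strictly below the Fermi level completely, and $\lambda_0$ is strictly below unless $n$ exceeds nothing, in which case it is filled anyway.

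Here $\psi_0$ exists because the ground state exists: if $h_v$ had no eigenvalue below the essential spectrum, the infimum would not be attained. Since $\rho_\Gamma\ge c|\psi_0|^2$ quasi-everywhere for precise representatives, obtained by writing $\rho_\Gamma=\sum_k\mu_k|\phi_k|^2$ with $\phi_k\in\mH^1_0$ quasi-continuous, it suffices to show $\tilde\psi_0>0$ quasi-everywhere.

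\emph{Step 2 (positivity of the one-body ground state).} Since $q_v(|\phi|)=q_v(\phi)$ for $\phi\in\mH^1_0$ (diamagnetic identity, using that $v$ acts only on $|\phi|^2$), the ground state can be taken with $\psi_0\ge0$. The next task is to show that its quasi-continuous representative vanishes only on a polar set.

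The idea is a Harnack/minimum-principle argument adapted to form-bounded potentials. Write $v=v_+-v_-$ formally, or better, use the form bound to compare with a supersolution. The function $u=\psi_0$ satisfies $-\Delta u+\hat v u=\lambda_0u$ weakly, with $u\ge0$. For $\delta>0$, I would test the equation with $\phi^2/(u+\delta)$ for $\phi\in C^\infty_c(\Omega)$, using that $\phi^2/(u+\delta)\in\mH^1_0$. This yields the logarithmic estimate
\[
\int \phi^2|\nabla\log(u+\delta)|^2 \le C\Big(\norm{\nabla\phi}^2+\norm{\phi}^2\Big),
\]
uniformly in $\delta$, where the form bound with $\varepsilon<1$ absorbs the potential term. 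Letting $\delta\to0$ shows $\log u\in\mH^1_{\rm loc}$ on $\{u>0\}$, in a suitably truncated sense.

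From the uniform local $\mH^1$ bound on $\log(u+\delta)$, connectedness of $\Omega$, and $u\not\equiv0$, I would deduce via a capacity estimate that $\mathrm{Cap}(\{\tilde u<\eta\}\cap K)\to0$ as $\eta\to0$. Concretely, the function $\min\{1,\log(1/(\tilde u+\delta))/\log(1/\eta)\}$, cut off on $K$, is a capacity test function whose energy is $O(1/\log(1/\eta)^2)$. This gives $\mathrm{Cap}(\{\tilde u=0\}\cap K)=0$ for every compact $K$, which is the claim.

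\emph{Main obstacle.} I expect the hardest part to be Step 2: justifying the test function $\phi^2/(u+\delta)$ and pairing it with the distribution $v$. Because $v$ is only a form-bounded distribution, $\inner{v,\phi^2 u/(u+\delta)}$ must be controlled by $\varepsilon\norm{\nabla(\phi\sqrt{u/(u+\delta)}\,)}^2$-type quantities, since $\sqrt{u}$ need not be in $\mH^1$. I would try to handle this by working with $f=\phi\,\log$-type substitutions written as $|g|^2$ with $g\in\mH^1_0$ (for instance $g=\phi\sqrt{u/(u+\delta)}$ regularized). A secondary concern is making the Fermi-level argument in Step 1 rigorous when $h_v$ has degenerate eigenvalues or when the form is not closed. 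For the latter, I would invoke the paper's convention that ground states are minimizers over $\mathcal{Q}_n$ states, which guarantees the relevant one-body minimizer exists.
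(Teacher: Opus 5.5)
Your architecture matches the paper's: reduce to an occupied one-body ground state, test the one-body equation with $\phi^2/(u+\delta)$, obtain a $\delta$-uniform bound on $\int\phi^2|\nabla u|^2/(u+\delta)^2$, and finish with a capacity argument. However, the step you flag as the main obstacle is left unresolved, and the route you suggest for it fails.

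\textbf{The potential term.} For $g=\phi\sqrt{u/(u+\delta)}$ one has $\nabla\sqrt{u/(u+\delta)}=\delta\nabla u/\bigl(2\sqrt{u}\,(u+\delta)^{3/2}\bigr)\approx\nabla u/(2\sqrt{\delta u})$ near $\{u=0\}$. So $g\in\mH^1$ essentially requires $\sqrt{u}\in\mH^1_{\rm loc}$, which is exactly what you do not know. The missing idea is that no square is needed. Write $ug_\delta=f\phi$ with $f=u\phi/(u+\delta)\in\mH^1_0(\Omega)$, note that $\norm{\nabla f}^2\le 2\int\phi^2|\nabla u|^2/(u+\delta)^2+2\norm{\nabla\phi}^2$, and polarize with asymmetric weights:
\[
|\inner{v,f\phi}|\le\tfrac{a}{2}\bigl(\varepsilon\norm{\nabla f}^2+\varepsilon^{-1}\norm{\nabla\phi}^2\bigr)+\tfrac{C}{2}\bigl(\varepsilon\norm{f}^2+\varepsilon^{-1}\norm{\phi}^2\bigr).
\]
Choosing $\varepsilon$ small absorbs the potential term for \emph{every} finite relative bound $a$. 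This matters because your assumption of a relative bound $<1$ is not available. $\mathcal{V}_{\rm fb}(\Omega)$ permits any $\varepsilon$ in the form bound, and your claim that a ground state forces $\varepsilon<1$ is false. For example, $v=c|x|^{-2}$ with $c>(d-2)^2/4$, $d\ge 3$, on a bounded $\Omega\ni 0$ has relative bound $4c/(d-2)^2>1$ and still has a ground state.

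\textbf{Step 1.} The same premise undermines your reduction. You rely on closedness of the one-body form and on the spectral theory of an associated self-adjoint operator (Fermi-level filling, an eigenvalue below the essential spectrum). But for $v\in\mathcal{V}_{\rm fb}(\Omega)$ the form on $\mH^1_0(\Omega)$ need not be closed, and a closure may have its ground state outside $\mH^1_0(\Omega)$. Invoking the convention that ground states minimize over $\mathcal{Q}_n$ states does not by itself produce a one-body minimizer; that is precisely what has to be proved. The paper does this variationally, with no operator:
\begin{enumerate}
\item the Slater determinant of the $n$ most occupied natural orbitals is again a ground state;
\item first-order variations give $\inner{h(v)\phi,\psi}=0$ for $\phi\in E=\mathrm{span}\{\phi_1,\dots,\phi_n\}$ and $\psi\in E^\perp\cap\mH^1_0(\Omega)$;
\item an orbital exchange shows the Rayleigh quotient on $E^\perp\cap\mH^1_0(\Omega)$ is not below its minimum on $E$.
\end{enumerate}
Hence the one-body infimum is attained by an occupied orbital. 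The paper places it in $\ker(\gamma_\Gamma-1)$ using uniqueness of the minimizer, which the maximum-principle lemma obtains by applying the positivity argument to $\psi_+$ of an arbitrary real minimizer. Your inequality $\gamma_\Gamma\ge c|\psi_0\rangle\langle\psi_0|$ tacitly needs this simplicity too.

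\textbf{The capacity step.} Write $w_\eta$ for your test function and $\chi$ for the cut-off. The bound $O(1/\log(1/\eta)^2)$ controls only $\int\chi^2|\nabla w_\eta|^2$. The terms $\int\chi^2w_\eta^2$ and $\int|\nabla\chi|^2w_\eta^2$ tend to zero only once you know $u>0$ a.e., which you would have to prove first (Poincar\'e on balls plus connectedness). The paper sidesteps this by applying the Poincar\'e inequality for functions vanishing q.e.\ on a compact $K\subset\{u^\ast=0\}$ of positive capacity to $\log(1+u/\delta)$. This gives $\int_\omega\log(1+u/\delta)^2\lesssim 1$ uniformly in $\delta$, hence $u=0$ a.e.\ on every such $\omega$, a contradiction.
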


As an immediate corollary of Theorems~\ref{thm:HK criteria} and~\ref{thm:density UCP} we obtain the following version of the HK theorem for non-interacting systems.

\begin{corollary}[HK theorem for non-interacting systems] \label{cor:HK non-interacting} Let $v,v' \in \mathcal{V}_{\rm fb}(\Omega)$ be such that $H_n(v,0)$ and $H_n(v',0)$ have a common ground-state density. Then $v' - v = c$ for some constant $c\in \R$.
\end{corollary}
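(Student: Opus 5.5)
The plan is to combine the two preceding theorems directly. Let $\rho$ be a common ground-state density of $H_n(v,0)$ and $H_n(v',0)$. Since $\rho$ is a ground-state density of $H_n(v,0)$ and $v\in\mathcal{V}_{\rm fb}(\Omega)$, Theorem~\ref{thm:density UCP} shows that the precise representative of $\rho$ satisfies $\rho>0$ quasi-everywhere in $\Omega$. This is exactly condition~\eqref{eq:positive condition}.

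Next we apply Theorem~\ref{thm:HK criteria} with $w=0\in\mathcal{W}_{\rm fb}(\Omega\times\Omega)$; the zero distribution trivially satisfies the form bound. Its "if" direction gives that $v$ is the unique potential in $\mathcal{V}_{\rm fb}(\Omega)$, modulo additive constants, for which $\rho$ is a ground-state density of $H_n(\cdot,0)$. By hypothesis $v'\in\mathcal{V}_{\rm fb}(\Omega)$ and $\rho$ is also a ground-state density of $H_n(v',0)$. Uniqueness modulo constants then yields $v'-v=c$ for some $c\in\R$.

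The constant is real because both $v$ and $v'$ are real-valued distributions. The only points to check are these:
\begin{enumerate}[label=(\alph*)]
\item The notion of ground-state density in both theorems is the one from Definition~\ref{def:mixed-states and ground-states}, with equality almost everywhere.
\item The precise representative depends only on the almost-everywhere class of $\rho$, so "common ground-state density" is unambiguous.
\end{enumerate}
Neither point is a real obstacle. All of the difficulty is contained in Theorems~\ref{thm:HK criteria} and~\ref{thm:density UCP}, which we may assume.
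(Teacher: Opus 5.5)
Your proposal is correct and matches the paper, which presents this corollary as an immediate consequence of Theorem~\ref{thm:density UCP}, giving quasi-everywhere positivity of the common density, combined with the sufficiency direction of Theorem~\ref{thm:HK criteria} applied with $w=0$. Your two bookkeeping checks, (a) and (b), are exactly the points that make this combination legitimate.
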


\begin{remark}[Different boundary conditions] \label{rem:extensions} 
We believe that analogous results should hold under different boundary conditions in $\Omega$. More precisely, if we consider $n$-particle form domains $\mathcal{Q}_n$ constructed from general (single-particle) closed order ideals of $\mH^1(\Omega)$ \cite{Sto93}, i.e., of the form
\begin{align*}
    \mH^1_\Gamma(\Omega) = \{ \psi \in \mH^1(\Omega): \psi^\ast = 0 \quad \mbox{quasi-everywhere in $\Gamma$}\}
\end{align*}
for some quasi-closed set $\Gamma \subset \overline{\Omega}$. Then the same results should hold with the strict positivity condition lifted to quasi-everywhere on $\overline{\Omega}\setminus \Gamma$. In this case, however, one needs a suitable definition of capacity up to the boundary, which may lead to technical complications and require further assumptions on the regularity of $\partial \Omega$. It would also be interesting, although technically challenging, to consider the case where $\Omega$ is merely quasi-open.
\end{remark}

\subsection{A characterization of regular states} \label{sec:regular states} 

The main ingredient in the proof of Theorem~\ref{thm:HK criteria}, and the core result of the paper, is a characterization of regular states, which may be of independent interest. To state it precisely, let us introduce the following definition. 
\begin{definition}[Regular states] We say that a $\mathcal{Q}_n$ state $\Gamma$ is regular if the following holds:
\begin{align*}
    \hat{V} \Gamma = 0 \quad \mbox{for some Laplace form-bounded $v\in \mathcal{D}'(\Omega)$} \quad \iff \quad v = 0.
\end{align*}
In other words, $\Gamma$ is regular if and only if the map $v \in \mathcal{V}_{\rm fb}(\Omega) \mapsto \hat{V} \Gamma \in \mathcal{B}(\mathcal{Q}_n^\ast,\mathcal{Q}_n^\ast)$ is injective.
\end{definition}

We then have the following result.
\begin{theorem}[Characterization of regular states] \label{thm:regular states} Let $\Gamma$ be a $\mathcal{Q}_n$ state. Then $\Gamma$ is regular if and only if the precise representative $\rho_\Gamma$ is quasi-everywhere strictly positive. Moreover, if $\{\rho_\Gamma = 0\}$ has positive capacity, then we can find $v\in \mathcal{V}(\Omega)\setminus \{0\}$ such that $v\geq 0$ and $\hat{V} \Gamma = 0$. 
\end{theorem}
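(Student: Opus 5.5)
The plan is to reduce both directions to a statement about the one-particle form of $v$ restricted to the quasi-open set $\Omega\setminus Z$, where $Z\coloneqq\{\rho_\Gamma=0\}$ is taken for the precise (quasi-continuous) representative. Two standard facts will be used throughout:
\begin{enumerate}[label=(\alph*)]
\item every $v\in\mathcal{V}_{\rm fb}(\Omega)$ extends by density to a bounded sesquilinear form $(\psi,\phi)\mapsto\inner{v,\overline{\psi}\phi}$ on $\mH^1_0(\Omega)$;
\item since $\Gamma$ is a $\mathcal{Q}_n$ state, $\sqrt{\rho_\Gamma}\in\mH^1_0(\Omega)$ by the usual Hoffmann-Ostenhof/Lieb inequality. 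Consequently $\rho_\Gamma$ has a quasi-continuous representative, and $Z$ is quasi-closed.
\end{enumerate}

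\emph{Step 1: from $\hat V\Gamma=0$ to a one-body identity.} Write $\Gamma=\sum_j\lambda_j|\Psi_j\rangle\langle\Psi_j|$. The condition $\hat V\Gamma=0$ means $\inner{\Phi,\hat V\Psi_j}=0$ for all $\Phi\in\mathcal{Q}_n$ and all $j$ with $\lambda_j>0$. I would test with $\Phi=\bigl(\prod_{k}e^{t\phi(x_k)}\bigr)\Psi_j$ for real $\phi\in C^\infty_c(\Omega)$, and also with its complex analogue $e^{it\sum_k\phi(x_k)}\Psi_j$, and differentiate in $t$ at $t=0$. Combining the resulting identities (one- and two-body terms in $\phi$), and then summing over $j$ with weights $\lambda_j$, the aim is to obtain
\begin{align*}
\inner{v,\rho_\Gamma\,\phi}=0\quad\text{for all }\phi\in C^\infty_c(\Omega).
\end{align*}
More usefully, I want the polarized version
\begin{align*}
\inner{v,\overline{\sqrt{\rho_\Gamma}\psi}\,\sqrt{\rho_\Gamma}\phi}=0\quad\text{for }\phi,\psi\in C^\infty_c(\Omega).
\end{align*}
To get this, I would replace the multiplier $e^{t\phi}$ by an $\mH^1$-bounded family and regularize $\sqrt{\rho_\Gamma}$ as $\sqrt{\rho_\Gamma+\delta}$.

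I expect this step to be the main obstacle. The sum $\sum_k v(x_k)$ may a priori cancel across particles, so I must check that the two-body contributions vanish separately. The first thing I would try is to extract the linear order in $t$ from the product test function and exploit antisymmetry, so that only the one-particle density appears.

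\emph{Step 2: sufficiency (q.e.\ positivity implies regularity).} By Step 1 and (a), the form of $v$ vanishes on the closure in $\mH^1_0$ of $\{\sqrt{\rho_\Gamma}\phi:\phi\in C^\infty_c(\Omega)\}$. From potential theory, this closure is $\mH^1_0(\Omega\setminus Z)$. To show this I would truncate $\phi/\sqrt{\rho_\Gamma+\delta}$ and use that the quasi-open set $\{\rho_\Gamma>0\}$ is exhausted by sets $\{\rho_\Gamma>\delta\}$; the identification then follows from the characterization of $\mH^1_0$ of a quasi-open set via q.e.\ vanishing of quasi-continuous representatives. If $\mathrm{Cap}(Z)=0$, then $\mH^1_0(\Omega\setminus Z)=\mH^1_0(\Omega)$. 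Hence $\inner{v,|\phi|^2}=0$ for all $\phi\in C^\infty_c(\Omega)$, which gives $v=0$.

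\emph{Step 3: necessity with $v\ge0$, $v\in\mathcal{V}(\Omega)$.} Suppose $\mathrm{Cap}(Z)>0$. Since $Z$ is quasi-closed, by inner regularity I can pick a compact $K\subset Z$, $K\Subset\Omega$, with positive capacity.
\begin{enumerate}[label=(\roman*)]
\item I take the equilibrium measure $\mu$ of $K$ and restrict it so that the Newtonian (or Green) potential $U^\mu$ is bounded and continuous. This is possible by Lusin-type arguments: the restriction of $\mu$ to a compact set on which $U^\mu$ is continuous is still nonzero.
\item Continuity of $U^\mu$ yields the infinitesimal form bound
\begin{align*}
\int|\phi|^2\,d\mu\le\varepsilon\norm{\nabla\phi}^2+C_\varepsilon\norm{\phi}^2 .
\end{align*}
I would prove this by splitting $U^\mu$ into a uniformly small part and a smooth part, using $\int|\phi|^2\,d\mu=\inner{\nabla U^\mu,\nabla|\phi|^2}$. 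Thus $v\coloneqq\mu\in\mathcal{V}(\Omega)$, $v\ge0$ and $v\ne0$.
\item To show $\hat V\Psi_j=0$, I note that $\rho_\Gamma=0$ q.e.\ on $K$. A slicing argument for capacity in product spaces, applied to the quasi-continuous representatives, then gives that $\Psi_j$ vanishes q.e.\ on $K\times\Omega^{n-1}$ in the appropriate sense. The partial pairings $\int\overline{\Phi}\Psi_j\,d\mu(x_1)\,dx'$ therefore vanish, because $\mu$ charges no set of zero capacity.
\end{enumerate}
By Step 2, this also completes the equivalence in Theorem~\ref{thm:regular states}.
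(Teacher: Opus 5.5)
There is a genuine gap in the sufficiency direction, and it is in your Step 1. The identity you aim for, $\inner{v,\rho_\Gamma\phi}=0$ for all $\phi\in C^\infty_c(\Omega)$, is not a consequence of $\hat V\Gamma=0$; it is false in general. Take $\Omega=(0,1)$, $n=2$, $\Psi(x,y)=\sqrt2\,(\sin(2\pi x)_+\sin(2\pi y)_- - \sin(2\pi x)_-\sin(2\pi y)_+)$ and $v=\mathbb{1}_{(0,1/2)}-\mathbb{1}_{(1/2,1)}$, which is the first example after the theorem. Then $\hat V\Psi=(v(x)+v(y))\Psi=0$, but $v\rho_\Psi=\sin^2(2\pi x)(\mathbb{1}_{(0,1/2)}-\mathbb{1}_{(1/2,1)})\neq0$.

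The cancellation across particles that worried you really happens. Differentiating your multiplicative test functions gives $\inner{v,\phi\rho_{\Psi_j}}+\iint\phi(x)v(y)\rho^{(2)}_{\Psi_j}(x,y)\,\mathrm{d}x\,\mathrm{d}y=0$. In this example both terms are nonzero and cancel, and antisymmetry does not help because $\Psi$ is antisymmetric. Every identity you can derive from $\hat V\Gamma=0$ alone holds in this example, so no such manipulation can yield $v\rho_\Gamma=0$.

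A sharper diagnostic: Steps 1--2 never use that $\Omega$ is connected, yet sufficiency is false without it. On $\Omega=(0,\tfrac12)\cup(\tfrac12,1)$ the same $\Psi$ and $v$ give $\hat V\Psi=0$ with $v\neq0$. There $\rho_\Psi=\sin^2(2\pi x)$ is positive everywhere in $\Omega$, i.e.\ quasi-everywhere in one dimension. So q.e.\ positivity and connectedness must enter together before any reduction to a one-body identity.

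This is what the paper's argument supplies. Test $\hat V\Gamma=0$ against the non-multiplicative functions $a(g)^\ast a(f)\Psi_j$. By Lemma~\ref{lem:single excitation} this gives $\inner{\hat v g,\gamma_\Gamma f}=\inner{\hat v\gamma_\Gamma g,f}$, so $\hat v$ commutes with $\gamma_\Gamma$. Hence $\hat v$ preserves the finite-dimensional eigenspaces of $\gamma_\Gamma$ and can be diagonalized there: there is an orthonormal basis $\phi_j$ of $(\ker\gamma_\Gamma)^\perp$ with $\hat v\phi_j=a_j\phi_j$.

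Locality of $v$ gives $(a_j-a_k)\phi_j^\ast\phi_k^\ast=0$ q.e. So the quasi-open sets $U_a=\bigcup_{a_j=a}\{\phi_j^\ast\neq0\}$ are pairwise disjoint up to polar sets. Q.e.\ positivity of $\rho_\Gamma$ means they cover $\Omega$ up to a polar set, and Fuglede's quasi-connectedness (Lemma~\ref{lem:quasi-connected}) then forces $a_j\equiv c$. Only after this does a density argument like your Step 2 (Lemma~\ref{lem:dense range} with $\xi_j=\sqrt{n_j}\phi_j$) give $v=c$. Finally $\hat V\Gamma=nc\,\Gamma=0$ gives $c=0$. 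In the example above, $a_j=\pm1$ on the two sides of the zero of $\rho$, which is exactly what this step rules out.

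Your Step 3 (necessity) is sound in outline. It differs from the paper only in how the form-bounded measure is built. You restrict the equilibrium measure to a set where its potential is continuous (a Kato-type route). The paper instead applies Egorov to the local tails of $G_2\ast\sigma$ and then the Maz'ya--Verbitsky criterion.

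The slicing in (iii) is unnecessary. Since $\mu\ge0$, the form $\Phi\mapsto\inner{\mu,\rho_\Phi}$ is nonnegative. You get $\inner{\mu,\rho_{\Psi_j}}=0$ from $\rho_\Gamma^\ast\ge\lambda_j\rho_{\Psi_j}^\ast$ q.e.\ and Lemma~\ref{lem:continuous form}, and Cauchy--Schwarz then gives $\hat\mu\Psi_j=0$.
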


\begin{remark}[On the sharpness of Theorem~\ref{thm:regular states}]
The following simple examples serve to illustrate the optimality of the necessary and sufficient conditions in Theorem~\ref{thm:regular states}:
\begin{enumerate}[label=(\arabic*)]
\item The first example shows that, even if we consider only bounded multiplicative potentials, the quasi-strict positivity cannot be weakened to almost-everywhere strict positivity. Let $\Omega = (0,1)$, $n=2$, and consider the wavefunction 
\begin{align*}
    \Psi(x,y):= \sqrt{2}\sin(2\pi x)_+ \sin(2\pi y)_- - \sqrt{2}\sin(2\pi x)_- \sin(2\pi y)_+ .
\end{align*}
Then $\Psi$ is a Slater determinant and the density $\rho_\Psi(x) = \sin(2\pi x)_+^2 + \sin(2\pi x)_-^2 = \sin(2\pi x)^2$ vanishes only at the point $x=1/2$. However, the potential $v(x) = \mathbb{1}_{(0,1/2)} - \mathbb{1}_{(1/2,1)}$, where $\mathbb{1}_J$ is the characteristic function on $J$, is bounded, and satisfies $\hat{V} \Psi = 0$.

\item The second example shows that regularity of a state is fundamentally different from unique continuation of the many-body wavefunction. Let $\Psi = \phi_1\wedge... \wedge \phi_n$ be a Slater determinant where $\phi_1(x)>0$ is continuous and positive everywhere and $\phi_2$ vanishes on an open set $U\subset \Omega$. Then the wavefunction $\Psi$ vanishes identically on the open set $U^n \subset \Omega^n$, but $\Psi$ is regular as it satisfies the conditions of Theorem~\ref{thm:regular states}.
\end{enumerate}
\end{remark}

\subsection{Outline of the proofs} \label{sec:outline of the proof}

Let us now briefly describe the main steps and tools used in the proofs of Theorems~\ref{thm:HK criteria},~\ref{thm:density UCP}, and~\ref{thm:regular states}.

First, the proof of Theorem~\ref{thm:HK criteria} is immediate from the usual HK argument (aka the variational principle) and Theorem~\ref{thm:regular states}. The details can be found in Section~\ref{sec:proof HK criteria}. 

The proof of Theorem~\ref{thm:regular states} is more involved and we split it into two parts. In the first part, we show that the quasi-strict positivity of the density is necessary for $\Gamma$ to be a regular state. For this, the main step is to establish the existence of infinitesimally Laplace form-bounded measures concentrated on arbitrary compact sets of positive capacity (cf. Lemma~\ref{lem:KLMN measure}). The proof of this existence result combines important results from potential theory, such as the dual characterization of the Bessel capacity and the Maz'ya characterization of infinitesimally form-bounded measures. The details are presented in Section~\ref{sec:necessary}.

The second part of the proof of Theorem~\ref{thm:regular states} consists in showing that quasi-strict positivity of the density is sufficient for a state to be regular. This part is presented in Section~\ref{sec:sufficient} and relies on the following key steps. First, we use a commutator identity and the equation $\hat{V} \Gamma = 0$ to show that $\hat{v}$ commutes with the single-particle density matrix $\gamma_\Gamma$. This allows us to obtain a (partial) mutual spectral decomposition of $\hat{v}$ and $\gamma_\Gamma$. We then combine this mutual spectral decomposition with the locality of $\hat{v}$ and an important result of Fuglede \cite{Fug71b} on the local connectedness of the quasi-topology to conclude that $\hat{v}$ acts as multiplication by a constant on $\ker \gamma_\Gamma^\perp$. This step crucially relies on the quasi-strict positivity of $\rho_\Gamma^\ast$. In the last step, we identify $v$ with a constant in the distributional sense by applying a dense range lemma for multiplication operators, see Lemma~\ref{lem:dense range}. This lemma is an important ingredient of the proof and relies on the quasi-positivity of the density and some approximation arguments.

The proof of Theorem~\ref{thm:density UCP} is carried out in Section~\ref{sec:positive density}. The main ingredient in this proof is a lemma showing that the ground state of the single-particle operator $h=-\Delta +v$ on $\mH^1_0(\Omega)$ is quasi-everywhere positive, see Lemma~\ref{lem:maximum principle}. The proof of this lemma relies on an adaptation of arguments used by Ponce and collaborators \cite{OP16,BP03} to establish a general version of the maximum principle for Schr\"odinger operators with $\mL^1$-potentials; it essentially relies on testing the ground state equation with a suitably chosen sequence of trial states and applying the Poincare inequality for functions vanishing on a positive capacity set to obtain a contradiction.  The rest of the proof of Theorem~\ref{thm:density UCP} then follows from the simple observation that the ground state of the single-particle operator is a natural orbital of the ground state of the many-particle operator.

\subsection{Discussion} We conclude this section with a brief discussion on the implications of the present results, possible extensions, and connections with the existing literature.

First, we remark that the necessary and sufficient conditions in Theorem~\ref{thm:HK criteria} are particularly appealing from the DFT perspective because they only depend on the density. In particular, they allows us to characterize the set of densities on which the Lieb functional can have at most one sub-gradient within the natural class of form-bounded potentials. Moreover, this characterization is completely independent of the interaction potential. Therefore, this result shows that quasi-strict positivity is a necessary condition for differentiability of the Lieb functional with arbitrary interactions.

Second, to the best of the author's knowledge, all previous proofs of the Hohenberg-Kohn theorem (see \cite{Lie83,Gar18,Lam18,Gar20,Gar21,LBP20,LLS22}) are based on the strong unique continuation property (UCP) for eigenfunctions of $H_n(v,w)$, which is a highly technical result and not known to hold for the Schr\"odinger operators considered in the present work; see e.g. \cite{Geo79,SS80,Wol93,KT01,Gar18,Dav20}. However, Theorem~\ref{thm:HK criteria} and Corollary~\ref{cor:HK non-interacting} show that the UCP is neither necessary nor sufficient for the HK theorem. The failure of sufficiency is due to the large class of potentials considered here. In particular, none of the existing proofs in the literature establish uniqueness of representing potentials within this class. The failure of necessity is perhaps more striking; it reveals that the fundamental mechanism underlying the Hohenberg--Kohn theorem is the unique continuation of the density rather than that of the wavefunction. In particular, it opens the way for new proofs of the HK theorem.

Lastly, let us comment on the results of Theorem~\ref{thm:regular states}. The notion of regular states introduced here can be viewed as a continuum analogue of recent notions of regular states that have appeared in several works in the finite-dimensional setting \cite{PvL23,PvL25,BCL+25,CDvG+26a,CDvG+26b,PLWS26}. In these works, the authors realized that regular states play an important role in the problem of unique $v$-representability in both ground state \cite{PvL23,BCL+25,PLWS26} and time-dependent \cite{PvL25,CDvG+26a,CDvG+26b} (variants of) DFT in the finite-dimensional setting. Therefore, understanding the set of regular states seems like an important step towards addressing the v-representability question in the continuum setting. Theorem~\ref{thm:regular states} makes a decisive contribution in this direction by providing a complete characterization of such states.

\section{Mathematical background on Sobolev spaces and capacities} \label{sec:preliminaries}
In this section, we recall some definitions and collect some well-known results in potential theory that will be used throughout the paper. The proof of these results can be found in several standard references \cite{AH96,AG01,Maz11,Pon16,HM18}. 

\subsection{Definitions of capacity and quasi-continuity}
\label{sec:capacity}

We begin by recalling the definitions of capacity, polar sets and quasi-everywhere. 
\begin{definition}[Capacity and polar sets] \label{def:capacity} The relative capacity in $\Omega$ is defined as follows:
\begin{enumerate}[label=(\roman*)]
\item For any compact $K\subset \Omega$, \begin{align}
    \Cp(K) \coloneqq \inf \{\norm{\phi}_{\mH^1}^2: \phi \in C^\infty_c(\Omega) \quad \phi(x) \geq 1 \quad \mbox{for all $x\in K$.}\}. \label{eq:cap def}
\end{align}
\item For an open set $U\subset \Omega$, we set
\begin{align}
    \Cp(U) = \sup_{K \subset U \text{ compact}} \Cp(K). \label{eq:capacity open}
\end{align}
\item For general sets $E\subset \Omega$ we use the formula
\begin{align}
    \Cp(E) = \inf_{E \subset U \subset  \Omega\text{ open}} \Cp(U). \label{eq:capacity general set}
\end{align} 
\end{enumerate}
We then say that a set $E \subset \Omega$ is polar if and only if $\Cp(E) = 0$. Moreover, we say that a function satisfies $f> 0$ quasi-everywhere (q.e.) in $\Omega$ if the sub-level set $\{x\in \Omega: f(x) \leq 0\}$ is polar. 
\end{definition} 

Next, we recall the definition of quasi-continuity and quasi-open and quasi-closed sets.
\begin{definition}[Quasi-continuity] We say that a function $f:\Omega \rightarrow \C$ is quasi-continuous in $\Omega$ if, for any $\varepsilon>0$ there exists an open set $V\subset \Omega$ such that $\Cp(V) < \varepsilon$ and $f$ is continuous in $\Omega\setminus V$.    
\end{definition}

\begin{definition}[Quasi-open and quasi-closed sets]
We say that a set $U\subset \Omega$ is quasi-open if for any $\varepsilon>0$ there exists $\omega \subset \Omega$ open such that $\Cp(\omega) < \varepsilon$ and $U\cup \omega$ is open. A set $E\subset \Omega$ is called quasi-closed if the complement is quasi-open in $\Omega$.
\end{definition}

The next result is one of the main reasons for introducing the notions of capacity and quasi-continuity; it shows that any Sobolev function has a well-defined quasi-continuous representative. For a proof of this result, see e.g. \cite[Theorem 3.3.29 and Proposition 3.3.33]{HM18}, \cite[Theorem 6.2.1]{AH96}, or \cite[Proposition 8.6]{Pon16}.

\begin{theorem}[Precise representatives of $\mH^1$ functions] \label{thm:quasi-representative} Let $f \in \mH^1_0(\Omega)$, then the following holds:
\begin{enumerate}[label=(\roman*)]
\item The Lebesgue set of $f$, defined as
\begin{align*}
    \mathcal{L}_f = \left\{ x\in \Omega: \exists f^\ast(x) \in \C \mbox{ such that }\lim_{r\downarrow 0} \frac{1}{|B_r(x)|} \int_{B_r(x)} |f(y)-f^\ast(x)| \mathrm{d} y = 0 \right\},
\end{align*}
satisfies $\Cp(\Omega\setminus \mathcal{L}_f) = 0$. The function $x\in \mathcal{L}_f \rightarrow f^\ast(x)$ is called the \emph{precise representative} of $f$. 
\item The function $f^\ast$ (extended arbitrarily to $\Omega\setminus \mathcal{L}_f$) is quasi-continuous in $\Omega$.
\item \label{it:q.e. convergence} For any approximating sequence $f_n \rightarrow f$ in $\mH^1_0(\Omega)$, we can extract a subsequence such that $f_n^\ast(x) \rightarrow f^\ast(x)$ quasi-everywhere in $\Omega$.
\end{enumerate}
\end{theorem}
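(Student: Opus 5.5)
The statement immediately above is Theorem~\ref{thm:quasi-representative}, a standard potential-theoretic fact; the plan follows the classical route through capacitary weak-type estimates. The three parts are proved in the order (iii)-type convergence, then (ii), then (i). Quasi-continuity is established first for a canonical representative built from smooth approximations, and only afterwards identified with the Lebesgue-point representative.

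\textbf{Step 1: capacitary weak-type inequality.} For $\phi\in C^\infty_c(\Omega)$ and $\lambda>0$, the set $\{|\phi|>\lambda\}$ is open. Any compact $K$ inside it admits $|\phi|/\lambda$ (mollified slightly, or truncated via $\min(1,\cdot)$ and then approximated) as a competitor in \eqref{eq:cap def}. Hence
\begin{align*}
\Cp(\{|\phi|>\lambda\})\le \lambda^{-2}\norm{\phi}_{\mH^1}^2 ,
\end{align*}
using that $|\phi|\in \mH^1_0$ with the same norm. This bound is the engine of the whole proof.

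\textbf{Step 2: the quasi-continuous representative and part (iii).} Given $f\in\mH^1_0(\Omega)$, choose $\phi_k\in C^\infty_c(\Omega)$ with $\norm{\phi_{k+1}-\phi_k}_{\mH^1}\le 4^{-k}$. Set $V_k=\{|\phi_{k+1}-\phi_k|>2^{-k}\}$; by Step 1 these are open with $\Cp(V_k)\le 4^{-k}$. Put $W_j=\bigcup_{k\ge j}V_k$; by countable subadditivity of $\Cp$ (a standard consequence of the definition), $\Cp(W_j)\le C4^{-j}$. Off $W_j$ the sequence $\phi_k$ converges uniformly, so the limit $\tilde f$ is continuous on $\Omega\setminus W_j$. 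Since $\bigcap_j W_j$ is polar, $\tilde f$ is defined q.e. and is quasi-continuous. The same construction applied to an arbitrary approximating sequence $f_n\to f$, after passing to a fast subsequence, gives (iii) once we know the q.e. limit does not depend on the sequence. This independence holds because interleaving two fast sequences yields again a Cauchy sequence of the same type.

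\textbf{Step 3: Lebesgue points and part (i).} It remains to show that q.e. $x$ is a Lebesgue point with $f^\ast(x)=\tilde f(x)$. This is the main obstacle. I would use the maximal-function capacitary estimate
\begin{align*}
\Cp(\{M_{\rm osc}g>\lambda\})\le C\lambda^{-2}\norm{g}_{\mH^1}^2 ,
\qquad M_{\rm osc}g(x)=\sup_{r}\frac{1}{|B_r|}\int_{B_r(x)}|g-g^\ast(x)|.
\end{align*}
Concretely, I would prove that for a Sobolev function $g$ the averaged oscillation is controlled by a Riesz/Bessel potential of $|\nabla g|$. This follows by writing the telescoping sum over dyadic balls and applying the Poincaré inequality on each ball, which gives a bound $\lesssim I_1|\nabla g|(x)$. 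I would then use the capacitary weak-type bound for $I_1$ (equivalently, the Bessel-capacity/$\mH^1$-capacity comparison, with localization to $\Omega$ handled by a cutoff). Applying this to $g=f-\phi_k$ with $\phi_k$ continuous, we get that off a set of small capacity the oscillation of $f$ around $\phi_k(x)$ on every ball is $\lesssim 2^{-k}$. Letting $k\to\infty$ shows that averages of $|f-\tilde f(x)|$ vanish q.e. So $\mathcal{L}_f$ has polar complement, $f^\ast=\tilde f$ q.e., and (ii) follows from Step 2.
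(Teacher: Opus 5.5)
The paper gives no proof of this theorem; it only refers to \cite[Theorem 3.3.29 and Proposition 3.3.33]{HM18}, \cite[Theorem 6.2.1]{AH96} and \cite[Proposition 8.6]{Pon16}. Your outline (weak-type capacitary bound, a fast Cauchy sequence of smooth approximants giving a quasi-continuous $\tilde f$, then a capacitary maximal-function estimate identifying $\tilde f$ with the Lebesgue-point representative) is the classical argument given there, so the plan is sound. Three details need fixing.

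First, the displayed estimate in Step~3 is false as written, with $\Cp$ and a supremum over all radii. For bounded $\Omega$ and $0\neq g\in C^\infty_c(\Omega;\R_+)$, $M_{\mathrm{osc}}g$ is bounded below by a positive constant on $\Omega\setminus\mathrm{supp}\,g$, and that set has infinite relative capacity. Prove the estimate for the Bessel capacity $\mathrm{Cap}$ of the zero extension of $g$, and transfer the resulting polar sets to $\Cp$ via Proposition~\ref{prop:same quasi-topology}.

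Second, controlling the oscillation of $g=f-\phi_k$ around its own value $g^\ast(x)$ does not bound $|g^\ast(x)|=|f^\ast(x)-\phi_k(x)|$. That bound is exactly what you need to conclude $f^\ast=\tilde f$ q.e. (and $M_{\mathrm{osc}}$ presupposes that $g^\ast(x)$ exists). The cleanest fix is to use the Hardy--Littlewood maximal function instead. Write $g=G_1\ast h$ with $\norm{h}_{\mL^2}=\norm{g}_{\mH^1}$; then $Mg\leq G_1\ast M|h|$, so Proposition~\ref{prop:Bessel capacity} gives $\mathrm{Cap}(\{Mg>\lambda\})\lesssim\lambda^{-2}\norm{g}_{\mH^1}^2$. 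Off $\bigcup_{k\geq j}\{M(f-\phi_k)>2^{-k}\}$ you then get $\limsup_{r\downarrow 0}|B_r|^{-1}\int_{B_r(x)}|f-\phi_k(x)|\leq 2^{-k}$ for all $k\geq j$. This yields $x\in\mathcal{L}_f$ with $f^\ast(x)=\lim_k\phi_k(x)=\tilde f(x)$, and the Poincar\'e telescoping becomes unnecessary. Alternatively, keep your oscillation bound and add $|B_1|^{-1/2}\norm{g}_{\mL^2}$ to control $|g^\ast(x)|$.

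Third, in (iii) the $f_n$ are arbitrary elements of $\mH^1_0(\Omega)$, not smooth functions. So Step~1 must first be extended to $\Cp(\{|g^\ast|>\lambda\})\leq\lambda^{-2}\norm{g}_{\mH^1}^2$ for all $g\in\mH^1_0(\Omega)$. This follows easily by approximation, but only once Step~3 has shown $g^\ast=\tilde g$ q.e. Hence (iii) must be proved after (i), not first.
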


\begin{remark}[Quasi-continuous representative of single-particle densities] Recalling that $\sqrt{\rho_\Gamma} \in \mH^1_0(\Omega)$ for any $\mathcal{Q}_n$-state (by the Cauchy-Schwarz inequality), Theorem~\ref{thm:quasi-representative} shows that any single-particle density of a $\mathcal{Q}_n$ state has a quasi-continuous representative.
\end{remark}

\subsection{Basic properties} In the next propositions, we recall several elementary properties of the capacity, quasi-continuous functions and quasi-open sets that are used throughout the paper.

We start with some well-known properties of the capacity.
\begin{proposition}[Basic properties of capacity]
\label{prop:basic capacity}
The capacity defined in~\eqref{eq:cap def} satisfies the following properties:
\begin{enumerate}[label=(\roman*)]
\item \label{it:subadditivity} (Countable subadditivity) For any countable sequence $\{F_j\}_{j \in J}$ of subsets $F_j \subset \Omega$ we have 
\begin{align}
    \Cp(\cup_{j\in J} F_j) \leq \sum_{j\in J} \Cp(F_j). \label{eq:countable subadditivity}
\end{align}
\item (Continuity from above for quasi-closed sets) \label{it:decreasing} 
For a sequence $E_n\subset E_{n-1} \subset \Omega$ of quasi-closed sets with $\Cp(E_1)<\infty$, it holds that
\begin{align*}
    \lim_{n\rightarrow \infty} \Cp(E_n) = \Cp(\cap_{n = 1}^\infty E_n).
\end{align*}
\item \label{it:inner regularity} (Inner-regularity of quasi-open and closed sets) Let $E$ be a quasi-closed or quasi-open set, then
\begin{align*}
    \Cp(E) = \sup_{\substack{K\subset E\\ K \text{ compact}}} \Cp(K).
\end{align*}
\end{enumerate}
\end{proposition}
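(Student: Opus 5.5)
These three properties are classical facts about the relative Sobolev ($\mH^1$) capacity, so I would derive them from the general theory of Choquet capacities rather than reprove them from scratch. The backbone is the identification of $\Cp$ with the outer capacity of the Dirichlet space $\mH^1_0(\Omega)$:
\begin{align*}
\Cp(U)=\inf\{\norm{u}_{\mH^1}^2 : u\in \mH^1_0(\Omega),\ u\geq 1 \text{ a.e. on } U\}
\end{align*}
for $U$ open. The same formula holds for compact $K$ with the condition $u^\ast\geq 1$ q.e. on $K$, obtained by density of $C^\infty_c(\Omega)$ and truncation. This identity, together with its extension to general sets through the outer regularity built into~\eqref{eq:capacity general set}, is the first step.

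For \ref{it:subadditivity}, it suffices to treat open sets $U_j\supset F_j$ with $\Cp(U_j)\leq \Cp(F_j)+\varepsilon 2^{-j}$, since the general case then follows by the definition~\eqref{eq:capacity general set}. Choose near-optimal $u_j\geq 1$ a.e. on $U_j$, with $0\le u_j\le 1$ after truncation. The function $u=\sup_j u_j$ satisfies $u\geq 1$ on $\cup_j U_j$ and
\begin{align*}
|\nabla u|^2\leq \sum_j |\nabla u_j|^2 \quad\text{and}\quad u^2\leq \sum_j u_j^2 \qquad \text{a.e.}
\end{align*}
Hence $\norm{u}_{\mH^1}^2\leq \sum_j \norm{u_j}_{\mH^1}^2$, where $u\in\mH^1_0$ follows as the limit of finite maxima, which are bounded in $\mH^1$. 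Letting $\varepsilon\to0$ gives~\eqref{eq:countable subadditivity}.

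For \ref{it:inner regularity} and \ref{it:decreasing}, I would first establish the two standard Choquet properties: continuity along increasing sequences of arbitrary sets, and continuity along decreasing sequences of compact sets. The increasing-sequence property uses weak compactness in $\mH^1_0$ of the capacitary potentials and the convexity/uniqueness of the minimizers. The compact case follows directly from the definition. Choquet's capacitability theorem then gives inner regularity on Borel (indeed Souslin) sets.

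The main obstacle is that quasi-open and quasi-closed sets need not be Borel, so capacitability does not apply to them directly. To handle this, I would write $E=B\cup N$ with $B$ Borel and $N$ polar. For quasi-open $U$, take open sets $\omega_k$ with $\Cp(\omega_k)<1/k$ and $U\cup\omega_k$ open. Then
\begin{align*}
B=\bigcap_k (U\cup\omega_k) \quad\text{satisfies}\quad B\setminus U\subset \bigcap_k\omega_k,
\end{align*}
which is polar, and the quasi-closed case is analogous using complements. Since polar sets do not affect capacity, by subadditivity in both directions, inner regularity for the Borel part combined with removal of the polar part yields \ref{it:inner regularity}.

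Finally, for \ref{it:decreasing}, let $u_n$ be the capacitary potentials of the quasi-closed sets $E_n$. I would use the characterization $\Cp(E)=\min\{\norm{u}^2_{\mH^1}: u^\ast\geq1 \text{ q.e. on } E\}$, valid for quasi-closed $E$. The sequence $u_n$ is bounded in $\mH^1_0$, so it converges weakly to some $u$. Using Theorem~\ref{thm:quasi-representative}\ref{it:q.e. convergence} on Mazur convex combinations, $u^\ast\geq1$ q.e. on $\cap_n E_n$, and lower semicontinuity of the norm gives $\Cp(\cap_n E_n)\leq\lim_n\Cp(E_n)$. The reverse inequality is monotonicity.
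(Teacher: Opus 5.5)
Your parts (i) and (iii) are fine and in substance follow the paper, which cites the literature for (i) and Choquet's capacitability theorem for (iii). One small correction in (iii): $\bigcap_k(U\cup\omega_k)$ \emph{contains} $U$, so a compact subset of it need not lie in $U$. The Borel set to approximate from inside is $\bigcap_k(U\cup\omega_k)\setminus\bigcap_k\omega_k$, which is contained in $U$ and differs from it by a polar set.

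The genuine gap is in (ii): your two arguments both prove the same, trivial, inequality. Since $\bigcap_n E_n\subset E_m$ for every $m$, monotonicity already gives $\Cp(\bigcap_n E_n)\le\lim_n\Cp(E_n)$, and this is exactly what your weak-limit/Mazur argument reproves. The inequality you attribute to monotonicity, $\lim_n\Cp(E_n)\le\Cp(\bigcap_n E_n)$, is the whole content of (ii), and it is never established.

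A warning sign is that quasi-closedness plays no real role in your argument. The min-characterization by functions with $u^\ast\ge 1$ q.e.\ also holds for open sets, so your argument applies verbatim to $\Omega=(-1,1)$ and $E_n=(0,\frac{1}{2n})$. There $\Cp(E_1)<\infty$ and $\Cp(E_n)\ge 2$ for all $n$, because in one dimension $|u(x)|^2\le\frac12\norm{u}_{\mH^1}^2$. Yet $\bigcap_n E_n=\emptyset$, so (ii) fails for these sets.

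What is needed is compactness of \emph{sets}, not weak compactness of functions.

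\textbf{Compact case.} For compact $K_n\downarrow K$, pick $\phi\in C^\infty_c(\Omega)$ with $\phi\ge 1$ on $K$ and $\norm{\phi}_{\mH^1}^2\le\Cp(K)+\varepsilon$. The open set $\{\phi>1-\varepsilon\}$ contains $K$, hence contains $K_n$ for $n$ large. Therefore $\Cp(K_n)\le(1-\varepsilon)^{-2}(\Cp(K)+\varepsilon)$.

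\textbf{Quasi-compactness.} The paper passes to quasi-closed sets via quasi-compactness: if $E$ is quasi-closed with $\Cp(E)<\infty$, then for every $\varepsilon>0$ there is a compact $K\subset E$ with $\Cp(E\setminus K)<\varepsilon$. To see this, approximate a capacitary function of $E$ by $\psi\in C^\infty_c(\Omega)$, which makes $\Cp(E\setminus\mathrm{supp}\,\psi)$ small. Then remove an open $\omega$ of small capacity such that $E\setminus\omega$ is relatively closed in $\Omega$.

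\textbf{Conclusion.} Choose compacts $K_j\subset E_j$ with $\Cp(E_j\setminus K_j)<\varepsilon 2^{-j}$ and set $L_n=K_1\cap\dots\cap K_n\subset E_n$. Then $E_n\setminus L_n\subset\bigcup_{j\le n}(E_j\setminus K_j)$, hence $\Cp(E_n)\le\Cp(L_n)+\varepsilon$. The compact case now yields $\lim_n\Cp(E_n)\le\Cp(\bigcap_n L_n)+\varepsilon\le\Cp(\bigcap_n E_n)+\varepsilon$.
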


\begin{proof} The subadditivity property is well-known, see, e.g. \cite[Proposition 3.3.9~(4)]{HM18} or \cite[Proposition 1.4]{dMa87} (where $\Cp$ is called metaharmonic capacity). The inner-regularity follows from a general capacitability result of Choquet \cite{Cho59} (or more elementary arguments for quasi-open and quasi-closed sets). The continuity from above follows from the corresponding property for compact sets (cf. \cite[Proposition 3.3.9~(2)]{HM18}) and the fact that any quasi-closed set $E$ with $\Cp(E)<\infty$ is quasi-compact, i.e., for any $\varepsilon>0$, there exists a compact $K\subset E$ with $\Cp(E\setminus K) < \varepsilon$.
\end{proof}

Next, we record some simple properties of quasi-open sets.
\begin{proposition}[Basic properties of quasi-open sets] \label{prop:quasi-open sets} The following holds:
\begin{enumerate}[label=(\roman*)]
%\item \label{it:polar sets} (Polar sets are quasi-open and closed) Any polar set is both quasi-open and quasi-closed. 
\item  \label{it:finite intersection} (Finite intersection) A finite intersection of quasi-open sets is quasi-open. In particular, a finite union of quasi-closed sets is quasi-closed.
\item \label{it:countable union} (Countable union) A countable union of quasi-open sets is quasi-open. In particular, a countable intersection of quasi-closed sets is quasi-closed.
\end{enumerate}
\end{proposition}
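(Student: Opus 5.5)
The argument works directly from the definition of quasi-open sets: for each quasi-open set we pick a small-capacity open "correction" $\omega$, and then combine these corrections using the monotonicity and countable subadditivity of $\Cp$ from Proposition~\ref{prop:basic capacity}\ref{it:subadditivity}. The statements about quasi-closed sets then follow by taking complements (De Morgan), since a set is quasi-closed exactly when its complement is quasi-open.

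For \ref{it:finite intersection}, let $U_1,\dots,U_m$ be quasi-open and fix $\varepsilon>0$.
\begin{itemize}
\item For each $j$, choose an open set $\omega_j\subset\Omega$ with $\Cp(\omega_j)<\varepsilon/m$ such that $U_j\cup\omega_j$ is open.
\item Set $\omega=\bigcup_j\omega_j$. This set is open, and by subadditivity $\Cp(\omega)<\varepsilon$.
\item We have the identity
\begin{align*}
\Big(\bigcap_j U_j\Big)\cup\omega=\bigcap_j (U_j\cup\omega)=\bigcap_j\big((U_j\cup\omega_j)\cup\omega\big).
\end{align*}
The first equality is distributivity. The second holds because $\omega_j\subset\omega$.
\item Each set $(U_j\cup\omega_j)\cup\omega$ is a union of two open sets, so it is open. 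A finite intersection of open sets is open, so $\big(\bigcap_j U_j\big)\cup\omega$ is open.
\end{itemize}
Since $\varepsilon$ was arbitrary, $\bigcap_j U_j$ is quasi-open.

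For \ref{it:countable union}, let $(U_j)_{j\in\N}$ be quasi-open and fix $\varepsilon>0$.
\begin{itemize}
\item For each $j$, choose an open $\omega_j$ with $\Cp(\omega_j)<\varepsilon 2^{-j}$ such that $U_j\cup\omega_j$ is open.
\item Set $\omega=\bigcup_j\omega_j$. This is open, and countable subadditivity gives $\Cp(\omega)\le\sum_j\varepsilon2^{-j}=\varepsilon$. Running the argument with $\varepsilon/2$ in place of $\varepsilon$ gives the strict inequality.
\item We have
\begin{align*}
\Big(\bigcup_j U_j\Big)\cup\omega=\bigcup_j (U_j\cup\omega_j).
\end{align*}
This holds because every $\omega_j$ appears on the right, and conversely the right side is contained in the left.
\item The right-hand side is an arbitrary union of open sets, hence open.
\end{itemize}
So $\bigcup_j U_j$ is quasi-open.

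The complement statements are now immediate. A finite union of quasi-closed sets has complement equal to a finite intersection of quasi-open sets, which is quasi-open by \ref{it:finite intersection}. Likewise, a countable intersection of quasi-closed sets has complement equal to a countable union of quasi-open sets, which is quasi-open by \ref{it:countable union}.

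None of these steps is a real obstacle. The only points needing care are:
\begin{itemize}
\item the countable subadditivity of $\Cp$ applied to open sets, which is supplied by Proposition~\ref{prop:basic capacity}; and
\item the fact that $\omega$ must lie inside $\Omega$, which holds by construction since each $\omega_j\subset\Omega$.
\end{itemize}
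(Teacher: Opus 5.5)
Your proof is correct and follows the paper's approach. The paper says only that the result is immediate from the definition of quasi-open sets, the subadditivity of $\Cp$, and the corresponding properties of open sets (citing Fuglede); your argument supplies exactly those details, including the complement step for quasi-closed sets (the $\varepsilon/2$ adjustment is unnecessary, since $\Cp(\omega_1)<\varepsilon/2$ already makes the bound strict).
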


\begin{proof} The proof is immediate from the definitions, the subadditivity of the capacity, and the corresponding properties for open/closed sets, see e.g. \cite[Lemma 2.3]{Fug71a}.
\end{proof}

We now collect some basic properties of quasi-continuous functions. 
\begin{proposition}[Basic properties of quasi-continuous functions] \label{prop:basic quasi-continuous}
Let $f,g:\Omega \rightarrow \C$ be quasi-continuous functions in $\Omega$. Then the following holds:
\begin{enumerate}[label=(\roman*)]
\item \label{it:preimage} (Preimage as quasi-sets) For any open $U\subset \C$, the preimage $f^{-1}(U)$ is quasi-open.
\item \label{it:algebra} (Algebra property) The functions $f+g$ and $fg$ are quasi-continuous.
\item \label{it:lattice} (Lattice property) If $f$ and $g$ are real-valued, then $f\wedge g := \min\{f,g\}$ and $f\lor g = \max\{f,g\}$ are quasi-continuous. For complex-valued $f$, $|f|$ is quasi-continuous. 
\end{enumerate}
\end{proposition}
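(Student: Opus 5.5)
The three properties follow directly from the definition of quasi-continuity by working on the complement of a small open set where $f$ (and $g$) are genuinely continuous. The only tools needed are the countable subadditivity of the capacity (Proposition~\ref{prop:basic capacity}\ref{it:subadditivity}) and the monotonicity of $\Cp$, which is clear from Definition~\ref{def:capacity}. I would prove the algebra and lattice properties first and the preimage property last, since the latter is the only one involving quasi-open sets.

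\emph{Algebra and lattice properties.} Fix $\varepsilon>0$ and choose open sets $V_f,V_g\subset\Omega$ with $\Cp(V_f),\Cp(V_g)<\varepsilon/2$ such that $f|_{\Omega\setminus V_f}$ and $g|_{\Omega\setminus V_g}$ are continuous. Set $V=V_f\cup V_g$. Then $V$ is open, and by subadditivity $\Cp(V)<\varepsilon$. On the closed (relative to $\Omega$) set $\Omega\setminus V$, both restrictions are continuous, since a restriction of a continuous function to a subset remains continuous. Consequently $f+g$, $fg$, $f\wedge g=\tfrac12(f+g-|f-g|)$, $f\vee g=\tfrac12(f+g+|f-g|)$ and $|f|$ are continuous on $\Omega\setminus V$, being compositions of continuous maps on $\C$ or $\R$ with $(f,g)$. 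This gives \ref{it:algebra} and \ref{it:lattice}.

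\emph{Preimage property.} Let $U\subset\C$ be open and $\varepsilon>0$, and choose $V$ open with $\Cp(V)<\varepsilon$ and $f|_{\Omega\setminus V}$ continuous. By continuity of the restriction, $f^{-1}(U)\setminus V=(f|_{\Omega\setminus V})^{-1}(U)$ is relatively open in $\Omega\setminus V$. Hence there is an open $O\subset\Omega$ with $f^{-1}(U)\setminus V=O\setminus V$. It follows that
\begin{align*}
f^{-1}(U)\cup V = (O\setminus V)\cup V = O\cup V,
\end{align*}
which is open. Taking $\omega=V$ in the definition of quasi-open sets shows that $f^{-1}(U)$ is quasi-open, which is \ref{it:preimage}.

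There is no genuine obstacle here; the only point requiring care is the relative-topology step in the preimage property, namely writing the relatively open set $f^{-1}(U)\setminus V$ as the trace $O\setminus V$ of an honestly open set $O$. This is immediate from the definition of the subspace topology on $\Omega\setminus V$.
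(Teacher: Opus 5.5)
Your proposal is correct and follows the same route as the paper, which simply states that the result is immediate from the subadditivity of the capacity, the definitions of quasi-open sets and quasi-continuity, and the corresponding facts for continuous functions. You have just written out those details, including the relative-topology step $f^{-1}(U)\cup V = O\cup V$ for the preimage property.
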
 

\begin{proof}
    The proof is immediate from the subadditivity of the capacity, the definitions of quasi-open sets and quasi-continuity, and the corresponding properties for continuous functions.
\end{proof}

We now recall an important result that allow us to pass from almost everywhere to quasi-everywhere in weak inequalities for quasi-continuous functions. 
\begin{proposition}[From almost-everywhere to quasi-everywhere]
\label{prop:from q.e to a.e.} Let $f,g:\Omega \rightarrow \C$ be quasi-continuous functions in $\Omega$, then $f= g$ a.e. if and only if $f = g$ q.e.. In particular, $f\geq g$ a.e. if and only if $f\geq g$ q.e. .
\end{proposition}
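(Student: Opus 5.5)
The non-trivial implication is that $f=g$ a.e. forces $f=g$ q.e.; the converse holds because polar sets are Lebesgue-null. Indeed, if $\Cp(E)=0$ then $E\subset U$ with $U$ open and $\Cp(U)<\varepsilon$. Every compact $K\subset U$ then admits $\phi\in C^\infty_c(\Omega)$ with $\phi\ge 1$ on $K$ and $|K|\le\norm{\phi}_{\mL^2}^2\le \norm{\phi}_{\mH^1}^2$, so $|K|\le\Cp(U)$, and by inner regularity of Lebesgue measure $|U|\le \varepsilon$. Moreover, by Proposition~\ref{prop:basic quasi-continuous}~\ref{it:algebra} and~\ref{it:lattice}, $h:=|f-g|$ is quasi-continuous. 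It therefore suffices to prove the following: if $h\ge0$ is quasi-continuous and $h=0$ a.e., then $h=0$ q.e. The final claim follows the same way by applying this to $h=(g-f)\vee 0$ in the real-valued case. Here $f\ge g$ a.e. means exactly that $h=0$ a.e., and the converse is again the null-set remark.

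I plan to argue locally. Fix $\varepsilon>0$ and pick an open $V$ with $\Cp(V)<\varepsilon$ such that $h$ is continuous on $\Omega\setminus V$. The set $A:=\{h>0\}\setminus V$ is relatively open in $\Omega\setminus V$, so $A\cup V=W$ is open in $\Omega$: indeed $\{h>0\}\setminus V=O\setminus V$ for some open $O$, hence $A\cup V=O\cup V$. Also $|W\setminus V|\le|\{h>0\}|=0$. The key claim is then a capacity estimate: if $W\supset V$ are open and $|W\setminus V|=0$, then $\Cp(W)=\Cp(V)$. Granting it, $\Cp(\{h>0\})\le \Cp(W)=\Cp(V)<\varepsilon$, and since $\varepsilon$ is arbitrary, $\{h>0\}$ is polar.

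The main obstacle is this key claim, which is where the open-set definition~\eqref{eq:capacity open} and the Sobolev nature of the capacity enter. I would use the standard equivalent description of the capacity of an open set $U$: $\Cp(U)=\inf\{\norm{u}_{\mH^1}^2: u\in\mH^1_0(\Omega),\ u\ge1 \text{ a.e. on } U\}$, which is available from the cited references \cite{HM18,AH96}. Given this, any admissible $u$ for $V$ satisfies $u\ge1$ a.e. on $W$, because $W\setminus V$ is null. Hence $\Cp(W)\le\Cp(V)$, and monotonicity gives the reverse inequality.

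If one prefers not to invoke that characterization, I would derive it directly. The step to prove is that for a compact $K\subset U$ one has $\Cp(K)\le\norm{u}_{\mH^1}^2$ whenever $u\ge 1$ a.e. on $U$. To do this, take a cutoff $\chi\in C_c^\infty(U)$ with $\chi=1$ near $K$, mollify $u$, and use truncation with $\min(\cdot,1)$ together with a lattice argument to produce smooth competitors that are $\ge 1-\delta$ on $K$ with $\mH^1$ norm close to $\norm{u}_{\mH^1}$. This approximation step is the delicate part; everything else is the bookkeeping described above.
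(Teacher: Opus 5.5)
Your proof is correct, but it differs from the paper's in that you actually prove the equality case. The paper simply cites it as a classical result \cite{AH96,HM18}. It then reduces the inequality case to it via the lattice property, applied to $\min\{f-g,0\}$, exactly as you do (your $(g-f)\vee 0$ is the same function up to sign).

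Your argument for the equality case is the standard short one. Quasi-continuity makes $\{h>0\}\cup V$ open, this open set differs from $V$ only by a Lebesgue-null set, and the capacity of an open set cannot detect null sets. This makes the proof self-contained and isolates the one property of $\Cp$ that drives the result. The price is that the a.e.\ variational description of $\Cp$ on open sets, which you take from the literature, is a lemma rather than a definition in the paper's set-up (smooth competitors on compacts, inner supremum for open sets).

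If you want to derive that description yourself, your sketch covers only one half of it. To get $\Cp(W)\le\Cp(V)$ you also need $\inf\{\norm{u}_{\mH^1}^2 : u\in\mH^1_0(\Omega),\ u\ge1 \text{ a.e. on } V\}\le\Cp(V)$. That is, you need a near-optimal competitor that is $\ge1$ on all of $V$, not merely on one compact subset. This follows by a weak-limit argument:
\begin{enumerate}[label=(\alph*)]
\item take a compact exhaustion $K_j\uparrow V$ and $\phi_j\in C^\infty_c(\Omega)$ with $\phi_j\ge1$ on $K_j$ and $\norm{\phi_j}_{\mH^1}^2\le\Cp(V)+1/j$;
\item pass to a weak limit in $\mH^1_0(\Omega)$;
\item each constraint $\{u\ge1 \text{ a.e. on } K_m\}$ is convex and closed, hence weakly closed, and the norm is weakly lower semicontinuous, so the limit is admissible for $V$ with squared norm at most $\Cp(V)$.
\end{enumerate}

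Conversely, the half you call delicate needs no truncation. Let $\chi\in C^\infty_c(U)$ equal $1$ near $K$, let $\theta_\epsilon$ be a standard mollifier, and let $\psi_n\in C^\infty_c(\Omega)$ converge to $u$ in $\mH^1$. Then $\chi\,(u*\theta_\epsilon)+(1-\chi)\psi_n$ belongs to $C^\infty_c(\Omega)$. It is $\ge1$ on $K$ once $\epsilon<\mathrm{dist}(K,\R^d\setminus U)$, and it converges to $u$ in $\mH^1$.
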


\begin{proof}
    
    The equality case is a classical result and we refer, e.g., to \cite[Theorem 6.1.4]{AH96} or \cite[Lemma 3.3.30]{HM18} for the proofs. For the inequality case, we note that $f\geq g$ if and only if $(f-g)_- = \min\{f-g,0\} = 0$. Hence the result follows from the lattice property~\ref{it:lattice} and the equality case.
\end{proof}

\begin{remark*}[Simple counterexample to strict inequalities] The equivalence in Proposition~\ref{prop:from q.e to a.e.} is not true for strict inequalities. Indeed, since polar sets on $\R$ are empty, a simple counterexample is any non-negative continuous function in $\R$ that vanishes on a discrete set of points. In particular, the q.e. strict positivity condition from Theorem~\ref{thm:HK criteria} is (much) more restrictive than a.e. strict positivity.
\end{remark*}

\subsection{The Bessel capacity} Throughout the paper, we shall also use the so-called Bessel (or absolute) capacity, which simply corresponds to the case $\Omega = \R^d$ in $\Cp$. 
\begin{definition}[Bessel capacity] \label{def:Bessel capacity}
For any set $E\subset \R^d$, the Bessel capacity is defined as 
\begin{align*}
    \mathrm{Cap}(E) = \mathrm{Cap}_{\R^d}(E).
\end{align*}
\end{definition}
For this capacity, the following equivalent definition in terms of the Bessel potential will be useful later. A proof can be found in \cite[Section 2.2 and Proposition 2.3.13]{AH96}.

\begin{proposition}[Bessel capacity via Bessel potential] \label{prop:Bessel capacity} For any set $E\subset \R^d$, we have
\begin{align}
    \mathrm{Cap}(E) \coloneqq \inf\{ \norm{f}_{\mL^2}^2 : f\geq 0, \quad\text{and}\quad G_1 \ast f \geq 1 \quad \mbox{on $E$}\}, \label{eq:Bessel capacity}
\end{align}
where $G_1$ is the Bessel potential defined via the Fourier transform 
\begin{align}
    \widehat{G}_1(\xi) = (1+|\xi|^2)^{-\frac12}. \label{eq:Bessel def}
\end{align}
\end{proposition}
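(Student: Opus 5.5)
The plan is to use that the Bessel potential is an isometry $G_1\ast\colon \mL^2(\R^d)\to \mH^1(\R^d)$. Indeed, with the norm $\norm{u}_{\mH^1}^2=\int (1+|\xi|^2)|\widehat u|^2$, definition \eqref{eq:Bessel def} gives $\norm{G_1\ast f}_{\mH^1}=\norm{f}_{\mL^2}$, and every $u\in\mH^1(\R^d)$ has the form $u=G_1\ast f$ with $f=\mathcal F^{-1}\big((1+|\xi|^2)^{1/2}\widehat u\big)$. So the right-hand side of \eqref{eq:Bessel capacity}, call it $C(E)$, is an $\mH^1$-capacity. The only genuine differences from Definition~\ref{def:Bessel capacity} are the sign constraint $f\ge 0$ and the class of admissible functions (smooth versus $\mH^1$, and "everywhere on $E$" versus "q.e."). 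I will first treat compact $K$, then pass to open and general sets.

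\emph{Step 1 (compact sets, removing the smoothness).} Define
\[
C_1(K)=\inf\{\norm{u}_{\mH^1}^2 : u\in\mH^1(\R^d),\ u^\ast\ge 1 \text{ q.e. on } K\}.
\]
I will show $C_1(K)=\mathrm{Cap}(K)$. The inequality $C_1\le\mathrm{Cap}$ is trivial. For the converse, take an admissible $u$ and $\varepsilon>0$. By quasi-continuity of $u^\ast$ (Theorem~\ref{thm:quasi-representative}) and a standard outer-regularity argument, $(1+\varepsilon)u$ can be modified in small $\mH^1$-norm so that it is $\ge 1$ on a neighbourhood of $K$. Mollifying and multiplying by a cutoff then produces a $C^\infty_c$ competitor. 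This is routine potential theory and I will only sketch it.

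\emph{Step 2 (the sign constraint, the key step).} For compact $K$, consider
\[
\inf\{\norm{f}_{\mL^2}^2 : f\in\mL^2,\ (G_1\ast f)^\ast\ge 1 \text{ q.e. on } K\}.
\]
By the isometry this infimum equals $C_1(K)$. The admissible set is convex and nonempty. It is also closed in $\mL^2$: if $f_k\to f$, then $G_1\ast f_k\to G_1\ast f$ in $\mH^1$, and by Theorem~\ref{thm:quasi-representative}\ref{it:q.e. convergence} a subsequence converges q.e., so the q.e. inequality passes to the limit. Hence there is a unique minimizer $f_0$, characterized by the variational inequality $(f_0,g)_{\mL^2}\ge 0$ for every $g\in\mL^2$ with $G_1\ast g\ge 0$ q.e. on $K$: indeed $f_0+tg$ is then admissible for all $t>0$. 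Now every $g\ge0$ qualifies, because $G_1>0$ gives $G_1\ast g\ge 0$ everywhere. Therefore $(f_0,g)\ge 0$ for all $g\ge 0$, so $f_0\ge 0$ a.e. It follows that the infimum is unchanged when we add the constraint $f\ge0$.

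It remains to go from "q.e. on $K$" to "everywhere on $K$". For $f\ge 0$, the function $G_1\ast f$ is defined pointwise and is lower semicontinuous. Moreover $(G_1\ast f)(x)=\lim$ of averages wherever the precise representative exists. The polar exceptional set $Z\subset K$ can be handled by adding $\varepsilon h$, where $h\ge 0$, $\norm{h}_{\mL^2}$ is arbitrarily small, and $G_1\ast h=+\infty$ on $Z$ (polar sets are exactly where some nonnegative $\mL^2$ Bessel potential is infinite, by the same duality). This gives $C(K)=\mathrm{Cap}(K)$.

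\emph{Step 3 (general sets).} For open $U$, I will show $C(U)=\sup_{K\subset U}C(K)$. The inequality $\ge$ is monotonicity. For $\le$, take an increasing exhaustion $K_j\uparrow U$ by compacts and the minimizers $f_j\ge0$ from Step 2. These are bounded in $\mL^2$ and, by the variational inequality, satisfy $\norm{f_j-f_i}^2\le\norm{f_j}^2-\norm{f_i}^2$ for $i<j$, so they converge strongly. The limit $f\ge0$ satisfies $G_1\ast f\ge1$ on $U$ by lower semicontinuity together with the q.e. convergence. Finally, for arbitrary $E$, the definition of $C(E)$ is outer regular: given $f\ge0$ with $G_1\ast f\ge1$ on $E$, the set $\{G_1\ast((1+\varepsilon)f)>1\}$ is open (by lower semicontinuity) and contains $E$. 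Combining this with \eqref{eq:capacity general set} gives the claim.

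I expect the main obstacle to be the positivity argument in Step 2 together with the passage from q.e. to everywhere; the remaining steps are standard regularity bookkeeping.
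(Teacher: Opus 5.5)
The paper gives no argument of its own here; it cites Adams--Hedberg. Your central idea is correct and would make the proof self-contained. The minimizer $f_0$ of $\norm{f}_{\mL^2}^2$ over a closed convex constraint set of the form $\{G_1\ast f\ge 1 \text{ q.e.\ (or a.e.) on a set}\}$ satisfies $(f_0,g)\ge 0$ for all $g\ge 0$, hence $f_0\ge 0$.

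The gap is in the passage from ``q.e.'' to ``everywhere'', which is exactly what separates the right-hand side of \eqref{eq:Bessel capacity} from an $\mH^1$-capacity.

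\emph{Step~2 is circular.} You remove the exceptional set $Z$ with some $h\ge 0$ satisfying $G_1\ast h=+\infty$ on $Z$. But $Z$ is only known to be polar for $\mathrm{Cap}$, the smooth-function capacity. The characterization you invoke is a theorem about the capacity defined by the right-hand side of \eqref{eq:Bessel capacity}. Building such an $h$ from $\mathrm{Cap}(Z)=0$ requires $h_k\ge 0$ with $\norm{h_k}_{\mL^2}^2\le 4^{-k}$ and $G_1\ast h_k\ge 1$ \emph{everywhere} on open sets $U_k\supset Z$. That is the open-set case of the statement itself.

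\emph{Step~3 does not supply it either.} Lower semicontinuity gives $(G_1\ast f)(x)\le\liminf_{y\to x}(G_1\ast f)(y)$, which is an upper bound. So it cannot upgrade ``$\ge 1$ q.e.\ on $U$'' to ``$\ge 1$ at every point of $U$''. For example, the indicator of $\R^d\setminus\{0\}$ is lower semicontinuous, equals $1$ a.e., and vanishes at $0$. For the same reason, the identity between $(G_1\ast f)(x)$ and the limit of ball averages that you assert in Step~2 is unproved: lower semicontinuity yields only $G_1\ast f\le (G_1\ast f)^\ast$.

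\emph{The missing ingredient} is a pointwise property of nonnegative Bessel potentials: for $0\le f\in\mL^2$ and \emph{every} $x$, the averages of $G_1\ast f$ over $B_r(x)$ converge to $(G_1\ast f)(x)\in[0,\infty]$.
\begin{itemize}
\item By Tonelli the average equals $\int(G_1\ast\chi_r)(x-y)f(y)\,\mathrm{d}y$ with $\chi_r=|B_r|^{-1}\mathbb{1}_{B_r}$.
\item One has $G_1\ast\chi_r\to G_1$ off the origin, and $G_1\ast\chi_r\le C\,G_1$ for $r\le 1$. The latter follows from radial monotonicity, the $|x|^{1-d}$ singularity (logarithmic if $d=1$), and exponential decay.
\item Dominated convergence then applies when $(G_1\ast f)(x)<\infty$, and Fatou when it is infinite.
\end{itemize}
In particular, $G_1\ast f\ge 1$ a.e.\ on an open $U$ forces $G_1\ast f\ge 1$ everywhere on $U$.

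\emph{With this, reorder the proof.}
\begin{itemize}
\item For open $U$, minimize over the $\mL^2$-closed set $\{f:\ G_1\ast f\ge 1 \text{ a.e.\ on } U\}$ and apply your sign argument. The averaging property makes the nonnegative minimizer satisfy $G_1\ast f_0\ge 1$ everywhere on $U$. Bound the minimum by $\mathrm{Cap}(U)$ using a weak $\mH^1$-limit of smooth competitors along an exhaustion of $U$. This gives $C(U)\le\mathrm{Cap}(U)$.
\item For compact $K$ and admissible $f$, use lower semicontinuity in the correct direction: $\{(1+\varepsilon)G_1\ast f>1\}$ is an open neighbourhood of $K$. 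Mollifying and cutting off as in your Step~1 then gives $\mathrm{Cap}(K)\le C(K)$.
\item Taking the supremum over compact $K\subset U$ gives $C(U)\ge\mathrm{Cap}(U)$. Together with $\mathrm{Cap}(K)=\inf_{U\supset K}\mathrm{Cap}(U)$, this also gives $C(K)\le\mathrm{Cap}(K)$.
\end{itemize}
Your outer-regularity step then finishes the proof. In this order neither $h$ nor the q.e.\ machinery of Steps~1--2 is needed.
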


The next result is a simple comparison inequality between the absolute and relative capacities.
\begin{proposition}[Comparison between capacities] \label{prop:related capacities} For any $E\subset \Omega$ we have
\begin{align}
    \mathrm{Cap}(E) \leq \Cp(E). \label{eq:upper capacity bound}
\end{align}
On the other hand, for any open set $\Omega' \subset \subset \Omega$, there exists a constant $c = c(\Omega',\Omega)$ such that
\begin{align}
    \Cp(E) \leq  c \mathrm{Cap}(E)
    \quad \mbox{for any $E\subset \Omega'$.} \label{eq:capacity comparison}
\end{align}
\end{proposition}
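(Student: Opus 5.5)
The first inequality $\mathrm{Cap}(E)\le \Cp(E)$ follows from monotonicity of the admissible class. For a compact $K\subset\Omega$, every $\phi\in C^\infty_c(\Omega)$ with $\phi\ge 1$ on $K$ extends by zero to an element of $C^\infty_c(\R^d)$ with the same $\mH^1$ norm. This extension is admissible in the definition of $\mathrm{Cap}(K)=\mathrm{Cap}_{\R^d}(K)$, so taking infima gives $\mathrm{Cap}(K)\le\Cp(K)$. For an open $U\subset\Omega$, the compact subsets of $U$ are the same whether $U$ is viewed inside $\Omega$ or inside $\R^d$, so taking suprema in~\eqref{eq:capacity open} preserves the inequality. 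For a general $E\subset\Omega$, the open sets $U$ with $E\subset U\subset\Omega$ form a subfamily of the open sets of $\R^d$ containing $E$. Hence
\begin{align*}
\mathrm{Cap}(E)\le \inf_{E\subset U\subset\Omega\text{ open}}\mathrm{Cap}(U)\le \inf_{E\subset U\subset\Omega\text{ open}}\Cp(U)=\Cp(E).
\end{align*}

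For the reverse inequality~\eqref{eq:capacity comparison}, I will use a fixed cutoff. Choose an open $\Omega''$ with $\Omega'\subset\subset\Omega''\subset\subset\Omega$, and a function $\chi\in C^\infty_c(\Omega)$ with $0\le\chi\le1$ and $\chi\equiv1$ on $\Omega''$. Given any $\phi\in C^\infty_c(\R^d)$ with $\phi\ge1$ on a compact $K\subset\Omega''$, the product $\chi\phi$ lies in $C^\infty_c(\Omega)$ and satisfies $\chi\phi\ge1$ on $K$. It also obeys the bound
\begin{align*}
\norm{\chi\phi}_{\mH^1}^2\le 2\bigl(1+\norm{\nabla\chi}_{\infty}^2\bigr)\norm{\phi}_{\mH^1}^2 =: c\,\norm{\phi}_{\mH^1}^2,
\end{align*}
which comes from the product rule and $|\chi|\le 1$. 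Taking infima over $\phi$ gives $\Cp(K)\le c\,\mathrm{Cap}(K)$ for every compact $K\subset\Omega''$.

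Next I pass to open sets. If $U\subset\Omega''$ is open, every compact $K\subset U$ lies in $\Omega''$, so taking suprema in~\eqref{eq:capacity open} yields $\Cp(U)\le c\,\mathrm{Cap}_{\R^d}(U)$. Here I use that $\mathrm{Cap}$ is also inner regular on open sets, which is its definition via~\eqref{eq:capacity open} with $\Omega=\R^d$.

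Finally, I pass to an arbitrary $E\subset\Omega'$. If $U\supset E$ is any open subset of $\R^d$, then $U\cap\Omega''$ is open, still contains $E$, and satisfies $\mathrm{Cap}(U\cap\Omega'')\le\mathrm{Cap}(U)$ by monotonicity. Therefore
\begin{align*}
\Cp(E)\le\inf_{U}\Cp(U\cap\Omega'')\le c\inf_U\mathrm{Cap}(U)=c\,\mathrm{Cap}(E).
\end{align*}

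The only delicate point is keeping track of the outer and inner regularity definitions at each level. The step most likely to need care is the restriction to $U\cap\Omega''$, since $\Omega''$ must stay compactly inside $\Omega$ so that the cutoff $\chi$ exists. This is exactly why the constant depends on $\Omega'$ and $\Omega$; a margin $\Omega''$ is not strictly needed, and one could take $\Omega''=\Omega'$ with $\chi\equiv1$ on $\Omega'$.
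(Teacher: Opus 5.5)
Your proof is correct and follows the paper's argument. You extend test functions by zero for the first inequality, and for the second you use a fixed cutoff in $C^\infty_c(\Omega)$ equal to $1$ on $\Omega'$ (or $\Omega''$), passing from compact to open to general sets through the three-level definition. You also supply the detail the paper leaves implicit in ``it suffices to prove the second inequality for compact sets'': intersecting an arbitrary open superset $U$ with $\Omega''$ so that the compact-set estimate applies.
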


\begin{proof} As $\mathrm{Cap} = \mathrm{Cap}_{\R^d}$, the first inequality is immediate for compact sets, and extends to open and general sets via~\eqref{eq:capacity open} and~\eqref{eq:capacity general set}. Similarly, it suffices to prove the second inequality for compact sets $E$. For such sets, one can use a cut-off function $\eta\in C^\infty_c(\Omega)$ with $\eta = 1$ on $\Omega'$ together with the simple estimate $\norm{u \eta}_{\mH^1} \lesssim \norm{\eta}_{\mW^{1,\infty}} \norm{u}_{\mH^1}$ to show that $\Cp(E) \lesssim_{\Omega'} \norm{u}_{\mH^1}^2$ for any $u \in C^\infty_c(\R^d)$ with $u\geq 1$ in $E$. Taking the infimum over such $u$ then yields the result.
\end{proof}

Using Proposition~\ref{prop:related capacities}, we can show that the relative and absolute capacity induce the same quasi-topology in $\Omega$. Since we could not locate this precise statement in the literature, we include a short proof.
\begin{proposition}[Unique quasi-topology] \label{prop:same quasi-topology} A set $E\subset \Omega$ is polar, respectively, quasi-open with respect to the relative capacity if and only if it is polar, respectively, quasi-open with respect to the absolute capacity. In particular, a function in $\Omega$ is quasi-continuous with respect to the relative capacity if and only if it is quasi-continuous with respect to the absolute capacity. 
\end{proposition}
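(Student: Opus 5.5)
The plan is to prove the two claims in order: polarity first, then quasi-openness, and to read off quasi-continuity at the end. Throughout, $\Cp$ denotes the relative capacity and $\mathrm{Cap}$ the absolute (Bessel) capacity. The only tool needed is Proposition~\ref{prop:related capacities}, combined with a localization argument based on an exhaustion of $\Omega$. Fix open sets $\Omega_k \subset\subset \Omega_{k+1} \subset\subset \Omega$ with $\bigcup_k \Omega_k = \Omega$.

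\emph{Polar sets.} If $\Cp(E)=0$, then $\mathrm{Cap}(E)=0$ directly from \eqref{eq:upper capacity bound}. Conversely, suppose $\mathrm{Cap}(E)=0$. Then $\mathrm{Cap}(E\cap \Omega_k)=0$ by monotonicity. Applying \eqref{eq:capacity comparison} with $\Omega'=\Omega_k$ gives $\Cp(E\cap\Omega_k)\le c_k \cdot 0 = 0$. Countable subadditivity (Proposition~\ref{prop:basic capacity}\ref{it:subadditivity}) then yields $\Cp(E)\le\sum_k \Cp(E\cap\Omega_k)=0$.

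\emph{Quasi-open sets.} Suppose $U$ is quasi-open with respect to $\Cp$. Given $\varepsilon>0$, pick an open $\omega$ with $\Cp(\omega)<\varepsilon$ and $U\cup\omega$ open. By \eqref{eq:upper capacity bound}, $\mathrm{Cap}(\omega)<\varepsilon$, which is the definition for the absolute capacity. (Here "quasi-open in $\Omega$ with respect to $\mathrm{Cap}$" means the same definition with $\mathrm{Cap}$, and we may intersect $\omega$ with $\Omega$ without increasing capacity or destroying openness of $U\cup\omega$.)

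The converse is the step where the constants $c(\Omega',\Omega)$ blow up near $\partial\Omega$. I would handle it by distributing $\varepsilon$ over the exhaustion. Let $U$ be quasi-open for $\mathrm{Cap}$. For each $k$, choose an open $\omega_k\subset\Omega$ such that $U\cup\omega_k$ is open and
\[
\mathrm{Cap}(\omega_k)<\frac{\varepsilon\, 2^{-k}}{c_k},
\]
where $c_k$ is the constant of \eqref{eq:capacity comparison} for the pair $\Omega_{k+1}\setminus\overline{\Omega_{k-1}}\subset\subset\Omega$ (with $\Omega_0=\emptyset$). Set $A_k=\Omega_{k+1}\setminus\overline{\Omega_{k-1}}$, which is open, and $\omega=\bigcup_k(\omega_k\cap A_k)$, which is open. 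Since $\omega_k\cap A_k\subset A_k$, monotonicity and \eqref{eq:capacity comparison} give $\Cp(\omega_k\cap A_k)\le c_k\mathrm{Cap}(\omega_k)<\varepsilon 2^{-k}$. Subadditivity then gives $\Cp(\omega)<\varepsilon$.

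It remains to check that $U\cup\omega$ is open. Take $x\in U\cup\omega$; if $x\in\omega$ we are done, so assume $x\in U$. Since the sets $\overline{\Omega_k}$ are nested, we may choose $k$ minimal with $x\in\Omega_k$. If $k=1$, then $x\in\Omega_1=A_0$. If $k\ge2$, then $x\notin\Omega_{k-1}$ and $\Omega_{k-2}\subset\subset\Omega_{k-1}$ give $x\notin\overline{\Omega_{k-2}}$, so $x\in\Omega_k\setminus\overline{\Omega_{k-2}}=A_{k-1}$. Set $j=\max(k-1,1)$; then $x\in A_j$ and $j\ge1$. The set $W=(U\cup\omega_j)\cap A_j$ is an open neighbourhood of $x$, and
\[
W\subset U\cup(\omega_j\cap A_j)\subset U\cup\omega .
\]
Hence $U\cup\omega$ is open. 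Quasi-closed sets follow by complementation.

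\emph{Quasi-continuity.} For a function $f$, suppose $f$ is quasi-continuous with respect to $\Cp$. Then it is quasi-continuous with respect to $\mathrm{Cap}$, since the exceptional open set $V$ has $\mathrm{Cap}(V)\le\Cp(V)$.

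Conversely, suppose $f$ is quasi-continuous with respect to $\mathrm{Cap}$, and take the same exhaustion. For each $k$, choose an open $V_k$ with $\mathrm{Cap}(V_k)<\varepsilon2^{-k}/c_k$ and $f|_{\Omega\setminus V_k}$ continuous, and set $V=\bigcup_k(V_k\cap A_k)$, so that $\Cp(V)<\varepsilon$ as before. To see that $f|_{\Omega\setminus V}$ is continuous, take $x\in\Omega\setminus V$ with $x\in A_j$ as above. Locally on the open set $A_j$ we have $(\Omega\setminus V)\cap A_j\subset\Omega\setminus V_j$, on which $f$ is continuous, so $f|_{\Omega\setminus V}$ is continuous at $x$.

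The main obstacle is this localization bookkeeping: the comparison constant degenerates at $\partial\Omega$, and each point must lie in an open piece $A_j$ where a single $\omega_j$ or $V_j$ controls the topology. This is exactly what the overlapping shells $A_k$ provide.
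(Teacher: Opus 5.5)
Your argument is correct and is essentially the paper's: the easy direction comes from \eqref{eq:upper capacity bound}, and the converse localizes to compactly contained pieces of an exhaustion, where \eqref{eq:capacity comparison} holds with a fixed constant, chooses the exceptional sets with budget $\varepsilon 2^{-k}/c_k$, and sums by subadditivity. The paper packages that summation into the countable-union property of Proposition~\ref{prop:quasi-open sets} and leaves the polar and quasi-continuous cases as ``analogous'', whereas you write all of it out by hand; the only blemish is the reference to an undefined $A_0$ when $k=1$, which is harmless because your choice $j=\max(k-1,1)$ already covers it via $\Omega_1\subset\Omega_2=A_1$.
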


\begin{proof} We prove the statement for quasi-open sets. The other assertions follow from analogous arguments. First, from~\eqref{eq:upper capacity bound}, it is clear that any quasi-open set $U$ with respect to $\mathrm{Cap}_\Omega$ is also quasi-open with respect to $\mathrm{Cap}$. 

For the opposite implication, let $U \subset \Omega$ be quasi-open with respect to $\mathrm{Cap}$. Let $\Omega_n$ be a sequence of open sets satisfying $\Omega_n \subset \subset \Omega$ and $\cup_{n} \Omega_n = \Omega$ (e.g. $\Omega_n =\{ x\in \Omega : \mathrm{dist}(x,\partial \Omega) > 1/n, |x| < n\}$). Then the sets $U_n = U \cap \Omega_n$ are quasi-open by Proposition~\ref{prop:quasi-open sets}~\ref{it:finite intersection}. In particular, let $c_n = c(\Omega_n,\Omega)\geq 1$ be the constant in Proposition~\ref{prop:related capacities}, then for any $\varepsilon>0$, we can find $\omega\subset \Omega$ such that $\mathrm{Cap}(\omega)<\varepsilon/c_n$ and $\omega \cup U$ is open. In particular, $\omega \cap \Omega_n$ and $\Omega_n \cap (\omega \cup U) = (\Omega_n \cap \omega) \cup (\Omega_n \cap U)$ are open. Hence, by inequality~\eqref{eq:capacity comparison} and the subadditivity of the capacity (see Proposition~\ref{prop:basic capacity}~\ref{it:subadditivity}), 
\begin{align*}
    \Cp(\omega \cap \Omega_n) \leq c(\Omega_n,\Omega) \mathrm{Cap}(\omega \cap \Omega_n) \leq \varepsilon. 
\end{align*}
As $\varepsilon>0$ was arbitrary, we conclude that $U \cap \Omega_n$ is quasi-open with respect to $\Cp$. Since $\Omega = \cup_{n} \Omega_n$, we have $U = \cup_{n} (U \cap \Omega_n)$; therefore, $U$ is $\Cp$-quasi-open by the countable union property in Proposition~\ref{prop:quasi-open sets}~\ref{it:countable union}. 
\end{proof}

\subsection{Trace-class operators and quadratic forms on the Sobolev space} To end this section, we recall two elementary results about the single-particle density matrix of $\mathcal{Q}_n$ states and the sesquilinear form associated to form-bounded external potentials. 

\begin{proposition}[The single-particle density matrix] \label{prop:single-particle density matrix} Let $\Gamma$ be a $\mathcal{Q}_n$ state. Then there exists a unique trace-class operator $\gamma_\Gamma \in \mathcal{B}(\mathcal{H}_1)$ such that
\begin{align}
    \tr \hat{A}  \Gamma = \tr \, a \gamma_\Gamma, \quad \mbox{for any bounded operator $a\in \mathcal{B}(\mathcal{H}_1)$,} \label{eq:trace duality pairing}
\end{align}
where 
\begin{align*}
    \hat{A} = \sum_{j=1}^n 1 \otimes... \overbrace{a}^{\text{$j$th position}} \otimes ...\otimes 1 \quad \mbox{is the second quantization of $a$.}
\end{align*} 
Moreover, $0 \leq \gamma_\Gamma\leq 1$, $\tr \gamma_\Gamma = n$, and the spectral decomposition $\gamma_\Gamma = \sum_{j \in J}n_j |\phi_j\rangle \langle \phi_j|$ satisfies
\begin{align}
   \phi_j \in \mH^1_0(\Omega), \quad j\in J\quad \mbox{and}\quad \sum_{j\in J} n_j \norm{\phi_j}_{\mH^1}^2 < \infty. \label{eq:finite kinetic energy}
\end{align}
In particular, $\gamma_\Gamma$ extends to a bounded operator in $\mathcal{B}(\mH^{-1}_0(\Omega),\mH^1_0(\Omega))$ via the formula
\begin{align}
    \gamma_\Gamma v = \sum_{j\in J} n_j \phi_j \inner{v, \phi_j}, \quad \mbox{for $v\in \mH^{-1}_0(\Omega)$.} \label{eq:density matrix extension}
\end{align}
\end{proposition}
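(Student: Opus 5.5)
The proof has four steps: construct $\gamma_\Gamma$ by partial trace, check its bounds, derive the kinetic-energy estimate, and extend it to $\mH^{-1}_0(\Omega)$. The only non-routine point is the upper bound $\gamma_\Gamma\leq 1$.

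\emph{Construction.} For fixed $\Gamma$, the map $a\mapsto \tr \hat{A}\Gamma$ is a linear functional on $\mathcal{B}(\mathcal{H}_1)$. Its modulus is at most $n\norm{a}$, since $\norm{\hat A}\leq n\norm{a}$ and $\tr\Gamma=1$. Restricted to the compact operators, this functional is therefore represented by a unique trace-class operator $\gamma_\Gamma$, using the duality $\mathcal{K}(\mathcal{H}_1)^\ast=\mathfrak{S}_1(\mathcal{H}_1)$. Concretely, $\gamma_\Gamma = n\,\tr_{2,\dots,n}\Gamma$ is the partial trace, and with this definition the identity extends to all bounded $a$ by strong continuity. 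One can check this on finite-rank $\Gamma=|\Psi\rangle\langle\Psi|$ via Fubini, then pass to the sum over the spectral decomposition of $\Gamma$, which converges in trace norm. Uniqueness holds because rank-one operators $a$ separate trace-class operators.

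\emph{Bounds.} Positivity and the trace identity are immediate: $\tr a\gamma_\Gamma\geq 0$ for $a\geq 0$, and $a=1$ gives $\hat A=n$, so $\tr\gamma_\Gamma=n$. The bound $\gamma_\Gamma\leq 1$ is where antisymmetry enters, and it is the main point. For a unit vector $\phi$, take $a=|\phi\rangle\langle\phi|$. Then $\hat A$ is the number operator $a^\ast(\phi)a(\phi)$ restricted to $\mathcal{H}_n$, which is an orthogonal projection by the Pauli principle (CAR). Hence $\langle\phi,\gamma_\Gamma\phi\rangle=\tr\hat A\Gamma\leq 1$. Without the fermionic structure this step would fail.

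\emph{Regularity of natural orbitals.} For the kinetic estimate I would test with $a=P\,\partial_k^\ast\partial_k\,P$ on finite-dimensional spans, or more directly compute in terms of $\Psi_j$. For $\Psi\in\mathcal{Q}_n$, the kernel $\gamma_\Psi(x,y)=n\int\Psi(x,z)\overline{\Psi(y,z)}\,dz$ belongs to $\mH^1_0\otimes\mH^1_0$. Moreover, $\tr(-\Delta)\gamma_\Psi=\sum_k\norm{\nabla_{x_k}\Psi}^2=\norm{\nabla\Psi}^2$, where the middle expression uses symmetry in the particle coordinates. Writing $\gamma_\Gamma=\sum_j\lambda_j\gamma_{\Psi_j}$, the operator $(1-\Delta)^{1/2}\gamma_\Gamma(1-\Delta)^{1/2}$ is trace class with trace $\sum_j\lambda_j(n\norm{\Psi_j}_{\mL^2}^2/n+\ldots)\lesssim\sum_j\lambda_j\norm{\Psi_j}_{\mH^1}^2<\infty$. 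Inserting the spectral decomposition gives $\sum_j n_j\norm{(1-\Delta)^{1/2}\phi_j}^2<\infty$, so $\phi_j\in\mH^1_0$ whenever $n_j>0$; this is exactly~\eqref{eq:finite kinetic energy}. To make this rigorous, I would apply it with a finite-rank truncation and use monotone convergence.

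\emph{Extension.} For $v\in\mH^{-1}_0$, Cauchy–Schwarz gives
\begin{align*}
\Bigl\|\sum_j n_j\phi_j\inner{v,\phi_j}\Bigr\|_{\mH^1}\leq \sum_j n_j\norm{\phi_j}_{\mH^1}^2\,\norm{v}_{\mH^{-1}},
\end{align*}
so the series converges absolutely in $\mH^1_0$ and defines a bounded operator $\mH^{-1}_0\to\mH^1_0$ that agrees with $\gamma_\Gamma$ on $\mL^2$.
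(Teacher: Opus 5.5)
Your proposal is correct and follows essentially the same route as the paper's sketch. The existence comes from the duality between compact and trace-class operators, positivity and $\tr \gamma_\Gamma = n$ come from the trace pairing, and the kinetic-energy bound comes from testing against a bounded approximation of $-\Delta$ and passing to a monotone limit; you additionally supply the Pauli-principle argument for $\gamma_\Gamma \leq 1$, which the paper leaves implicit. For the monotone-limit step, the paper's choice $a = (1-\varepsilon\Delta)^{-1}(-\Delta)$ is the clean way to implement your truncation, because it commutes with $-\Delta$ and increases to it as $\varepsilon \downarrow 0$, whereas $P\partial_k^\ast\partial_k P$ with an arbitrary finite-rank $L^2$-projection $P$ gives no bound by $\norm{\nabla \Psi}^2$ (and on unbounded $\Omega$ spectral cutoffs of the Dirichlet Laplacian need not be finite rank).
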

\begin{proof} This result is well-known, so we briefly mention the main arguments for the sake of completeness. First, the existence of $\gamma_\Gamma$ can be proved by using the duality between compact operators and trace-class operators. The positivity and $\tr \gamma_\Gamma = n$ follows from the duality pairing~\eqref{eq:trace duality pairing}. The properties of the spectral decomposition in~\eqref{eq:finite kinetic energy} follows by using the duality pairing~\eqref{eq:trace duality pairing} with the regularized kinetic energy $(-\varepsilon \Delta + 1)^{-1} (-\Delta)$, where $-\Delta$ is the Dirichlet Laplacian on $\Omega$, passing to the limit $\varepsilon \downarrow 0$, and using the assumption that $\Gamma$ is a $\mathcal{Q}_n$ state. 
\end{proof}

\begin{proposition}[Continuous sesquilinear forms via form-bounded distributions] \label{prop:continuous forms} Let $v\in \mathcal{V}_{\rm fb}(\Omega)$, then the sesquilinear form $\mathbf{v}: C^\infty_c(\Omega) \times C^\infty_c(\Omega) \rightarrow \C$ defined via
\begin{align*}
    \mathbf{v}(\psi, \phi) = \inner{v, \overline{\psi} \phi}, \quad \phi, \psi \in C^\infty_c(\Omega),
\end{align*}
extends uniquely to a continuous form in $\mH^1_0(\Omega) \times \mH^1_0(\Omega)$. Moreover, the $n$-particle sesquilinear form
\begin{align}
    \mathbf{V}(\Psi, \Phi) = \inner{ \Psi, \hat{V} \Phi} = \inner{v, \rho_{\Psi \Phi}}, \quad \Psi, \Phi \in C^\infty_c(\Omega^n) \cap \mathcal{H}_n,
\end{align}
where
\begin{align}
    \rho_{\Psi, \Phi}(x) = n \int_{\Omega^{n-1}} \overline{\Psi(x,x_2,...,x_n)} \Phi(x,x_2,...,x_n) \mathrm{d} x_2... \mathrm{d} x_n,
\end{align}
extends uniquely to a continuous sesquilinear form in $\mathcal{Q}_n$. Furthermore, the analogous result holds for the form induced by an interaction in $\mathcal{W}_{\rm fb}(\Omega \times \Omega)$. 
\end{proposition}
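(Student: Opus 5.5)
The plan is to reduce everything to the diagonal form bound~\eqref{eq:form-bound condition}, then use polarization and density. \textbf{Step 1 (single-particle form).} For $\phi\in C^\infty_c(\Omega)$, the form bound~\eqref{eq:form-bound condition} gives $|\mathbf{v}(\phi,\phi)| = |\inner{v,|\phi|^2}| \leq M\norm{\phi}_{\mH^1}^2$ with $M=\max\{\varepsilon,C_\varepsilon\}$. Since $v$ is real-valued, $\mathbf{v}$ is a Hermitian sesquilinear form on $C^\infty_c(\Omega)$, because $\overline{\mathbf{v}(\psi,\phi)} = \mathbf{v}(\phi,\psi)$. I will then invoke the standard fact that a Hermitian form bounded on the diagonal is bounded. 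By polarization,
\begin{align*}
\mathbf{v}(\psi,\phi) = \tfrac14\sum_{k=0}^3 i^{k}\,\mathbf{v}(\psi+i^k\phi,\psi+i^k\phi).
\end{align*}
Combining this with the parallelogram law (after rescaling so that $\norm{\psi}_{\mH^1}=\norm{\phi}_{\mH^1}$) yields $|\mathbf{v}(\psi,\phi)|\leq 2M\norm{\psi}_{\mH^1}\norm{\phi}_{\mH^1}$. Since $C^\infty_c(\Omega)$ is dense in $\mH^1_0(\Omega)$ by definition, $\mathbf{v}$ extends uniquely by continuity to $\mH^1_0(\Omega)\times\mH^1_0(\Omega)$.

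\textbf{Step 2 (diagonal $n$-particle bound).} Let $\Psi\in C^\infty_c(\Omega^n)\cap\mathcal{H}_n$ and write $x=(x_1,x')$. Then $\rho_\Psi\in C^\infty_c(\Omega)$. Moreover, the map $x'\mapsto |\Psi(\cdot,x')|^2$ is continuous from $\Omega^{n-1}$ into $C^\infty_c(\Omega)$, with all supports contained in one fixed compact set $K$. Consequently $\rho_\Psi = n\int |\Psi(\cdot,x')|^2\,\mathrm{d}x'$ holds as a Riemann integral converging in the topology of $C^\infty_c(K)$. Since $v$ is continuous on that space, we get
\begin{align*}
\inner{v,\rho_\Psi} = n\int_{\Omega^{n-1}} \inner{v,|\Psi(\cdot,x')|^2}\,\mathrm{d}x'.
\end{align*}
Applying~\eqref{eq:form-bound condition} for each fixed $x'$ and integrating gives
\begin{align*}
|\inner{v,\rho_\Psi}| \leq n\varepsilon\norm{\Psi}_{\mL^2}^2 + n\varepsilon\norm{\nabla_{x_1}\Psi}_{\mL^2}^2 + nC_\varepsilon\norm{\Psi}_{\mL^2}^2.
\end{align*}
By antisymmetry, $n\norm{\nabla_{x_1}\Psi}^2=\norm{\nabla\Psi}^2$, so $|\inner{v,\rho_\Psi}|\leq \varepsilon\norm{\Psi}_{\mH^1}^2 + C'\norm{\Psi}_{\mL^2}^2$. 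The map $(\Psi,\Phi)\mapsto\inner{v,\rho_{\Psi,\Phi}}$ is again Hermitian, so polarization as in Step 1 gives boundedness on $\mathcal{Q}_n$.

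\textbf{Step 3 (density in $\mathcal{Q}_n$).} It remains to check that $C^\infty_c(\Omega^n)\cap\mathcal{H}_n$ is dense in $\mathcal{Q}_n$. Given $\Psi\in\mathcal{Q}_n$, take $\Psi_k\in C^\infty_c(\Omega^n)$ with $\Psi_k\to\Psi$ in $\mH^1$, and apply the antisymmetrizer $P$. The operator $P$ preserves $C^\infty_c(\Omega^n)$ and is bounded on $\mH^1_0(\Omega^n)$, because it averages over coordinate permutations, each of which is an isometry. Since $P\Psi=\Psi$, we get $P\Psi_k\to\Psi$. Unique continuous extension then follows.

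\textbf{Step 4 (interaction).} The interaction term is treated identically: freeze $x''=(x_3,\dots,x_n)$, apply the form bound for $w$ on $\Omega\times\Omega$ to $\Psi(\cdot,\cdot,x'')$, and use antisymmetry to control $\norm{\nabla_{x_1,x_2}\Psi}^2$ by $\tfrac{2}{n}\norm{\nabla\Psi}^2$.

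\textbf{Main obstacle.} I expect the only delicate point to be the exchange of the distributional pairing with the $x'$-integral in Step 2. I would justify it through the $C^\infty_c(K)$-valued Riemann-sum convergence above, since derivatives of $\Psi$ are uniformly continuous; everything else is routine.
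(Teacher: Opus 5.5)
Your proof is correct, and Step 1 is the paper's argument: a diagonal bound, then polarization, then density. (A trivial slip: the constant should be $M=\varepsilon+C_\varepsilon$ rather than $\max\{\varepsilon,C_\varepsilon\}$, which is immaterial.)

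The routes diverge at the $n$-particle bound. The paper goes through the density. It reads $\inner{v,\rho_\Psi}$ as $\mathbf{v}(\sqrt{\rho_\Psi},\sqrt{\rho_\Psi})$ and combines the continuity of $\mathbf{v}$ on $\mH^1_0(\Omega)$ with the Hoffmann-Ostenhof inequality $\norm{\sqrt{\rho_\Psi}}_{\mH^1}^2\lesssim\norm{\Psi}_{\mH^1}^2$. For uniqueness it cites Lieb's approximation of $\mH^1$ densities by densities of smooth wavefunctions.

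You instead apply the one-body bound fibrewise to $\Psi(\cdot,x')$. You justify exchanging the distributional pairing with the $x'$-integral through $C^\infty_K$-valued Riemann sums, and use antisymmetry, $n\norm{\nabla_{x_1}\Psi}_{\mL^2}^2=\norm{\nabla\Psi}_{\mL^2}^2$. Uniqueness then comes from the antisymmetrizer argument for density of $C^\infty_c(\Omega^n)\cap\mathcal{H}_n$ in $\mathcal{Q}_n$.

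Your route has several advantages:
\begin{enumerate}[label=(\roman*)]
\item It is more elementary and self-contained. It needs neither the Hoffmann-Ostenhof inequality nor the identification of $\inner{v,\rho_\Psi}$ with $\mathbf{v}(\sqrt{\rho_\Psi},\sqrt{\rho_\Psi})$, which the paper uses without comment.
\item It keeps the relative bound $\varepsilon$ for the one-body term.
\item It transfers verbatim to the pair interaction. There the paper only asserts the analogue, and its route would need a two-body version of Hoffmann-Ostenhof.
\item Your antisymmetrizer step is exactly the density statement that uniqueness of a continuous extension requires.
\end{enumerate}

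What the paper's route offers is that it is organized around the fact that the $n$-body potential energy depends on $\Psi$ only through $\sqrt{\rho_\Psi}\in\mH^1_0(\Omega)$. That is how the form is used later, in \eqref{eq:n-particle form} and when evaluating $\inner{\mu,\rho_{\Psi_j}}$ via precise representatives in Section~\ref{sec:necessary}. Your construction does not by itself give that identification for general $\Psi\in\mathcal{Q}_n$, though it is not part of the statement being proved.
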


\begin{proof}
    From the polarization identity
    \begin{align}
        4 \mathrm{Re} \, \mathbf{v}(\psi, \phi) = \mathbf{v}\left((\psi+\phi), (\psi+\phi)\right) - \mathbf{v}\left((\psi-\phi), (\psi-\phi)\right),
    \end{align}
 the form-bound~\eqref{eq:form-bound condition}, and the homogeneity of the form, it follows that
    \begin{align*}
        |\mathbf{v}(\psi,\phi)|\leq C\norm{\psi}_{\mH^1} \norm{\phi}_{\mH^1}, \quad \mbox{for any $\psi, \phi \in C^\infty_c(\Omega)$.}
    \end{align*}
    As $C^\infty_c(\Omega)$ is dense in $\mH^1_0(\Omega)$, we can extend the form uniquely by approximation. 

    For the $n$-particle version, we use the continuous extension of $\mathbf{v}$ to define $\mathbf{V}(\Psi, \Psi)$ for any state $\Psi \in \mathcal{Q}_n$. Then from the Hoffmann-Ostenhof inequality
    \begin{align*}
        \norm{\sqrt{\rho_\Psi}}_{\mH^1}^2 \lesssim \norm{\Psi}_{\mH^1}^2, 
    \end{align*}
    and the continuity of $\mathbf{v}$ we have 
    \begin{align*}
        |\mathbf{V}(\Psi, \Psi)| \lesssim \norm{\Psi}_{\mH^1}^2.
    \end{align*}
    Hence, this quadratic form defines a bounded sesquilinear form in $\mathcal{Q}_n\times \mathcal{Q}_n$ by the polarization identity. The uniqueness of the extension follows from the fact that, for any $\sqrt{\rho} \in \mH^1_0(\Omega)$, we can find $\Psi \in C^\infty_c(\Omega^n) \cap \mathcal{H}_n$ such that $\sqrt{\rho_{\Psi_n}} \rightarrow \sqrt{\rho}$ in $\mH^1_0(\Omega)$. For a proof of this fact, see, e.g. \cite[Theorem 1.2 and 1.3]{Lie83}.
\end{proof}

\begin{remark}[From forms to operators] Let us also recall that any continuous sesquilinear form $\mathbf{A}: \mathcal{Q}_n \times \mathcal{Q}_n \rightarrow \C$ defines a continuous operator $\hat{A} \in \mathcal{B}(\mathcal{Q}_n,\mathcal{Q}_n^\ast)$ via $\mathbf{A}(\Psi,\Phi) = \inner{\hat{A}\Phi, \Psi}$, where $\mathcal{Q}_n^\ast$ is the dual of $\mathcal{Q}_n$. In particular, $\hat{v}$ can be seen as a bounded operator in $\mathcal{B}(\mH^1_0(\Omega); \mH^{-1}_0(\Omega))$.
\end{remark}
\section{The Hohenberg-Kohn criterion}
\label{sec:HK criterion}
Our goal in this section is to prove Theorems~\ref{thm:HK criteria} and~\ref{thm:regular states}. In the first two subsections we prove the two implications in Theorem~\ref{thm:regular states}. In the last subsection we show how Theorem~\ref{thm:HK criteria} follows from Theorem~\ref{thm:regular states}. 

\subsection{Necessary conditions} 
\label{sec:necessary}
We start by showing that, if the density of a state $\Gamma$ vanishes on a set of positive capacity, then this state cannot be regular. The key for this is the following existence result, whose proof we postpone to the end of this subsection.

\begin{lemma}[Existence of infinitesimally form-bounded measures supported on non-polar sets] \label{lem:KLMN measure} Let $K\subset \Omega$ be a compact set with positive capacity. Then there exists a non-negative measure $\mu\in \mathcal{M}_+(K)$ such that $0<\mu(K)<\infty$ and for any $\varepsilon>0$ there exists $C_\varepsilon>0$ such that
\begin{align}
    \int_K |\phi(x)|^2 \mathrm{d} \mu(x) \leq \varepsilon \norm{\phi}_{\mH^1}^2 + C_\varepsilon \norm{\phi}_{\mL^2}^2, \label{eq:KLMN measure}
\end{align}
for any $\phi \in C^\infty_c(\Omega)$.
\end{lemma}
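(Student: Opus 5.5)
Since $K$ is compact and $\Cp(K)>0$, Proposition~\ref{prop:related capacities} gives $\mathrm{Cap}(K)>0$. We work with the Bessel capacity and the dual characterization of Proposition~\ref{prop:Bessel capacity}. By the classical duality theorem (see \cite[Theorem 2.5.1]{AH96}), there is a capacitary measure $\mu_K\in\mathcal{M}_+(K)$ with $\mu_K(K)=\mathrm{Cap}(K)>0$ and finite energy
\begin{align*}
\norm{G_1\ast\mu_K}_{\mL^2}^2=\int_K (G_2\ast\mu_K)\,\mathrm{d}\mu_K=\mathrm{Cap}(K),\qquad G_2=G_1\ast G_1 .
\end{align*}
Moreover, its potential satisfies $G_2\ast\mu_K\le C_d$ everywhere (the boundedness principle, \cite[Theorem 2.6.3]{AH96}, i.e.\ the maximum principle for the nonlinear potential in the linear case $p=2$). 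A measure with bounded $\mH^{-1}$-potential is form-bounded, but we need the \emph{infinitesimal} form bound, so the construction must be refined.

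\textbf{Step 1: truncation.} Set $u=G_2\ast\mu_K$, which is lower semicontinuous and bounded. I would replace $\mu_K$ by a restriction $\mu=\mu_K|_{F}$ to a compact $F\subset K$ with $\mu_K(F)>0$ on which the potential is uniformly small at small scales. Concretely, I want
\begin{align*}
\sup_{x\in\R^d}\int_{|x-y|<r} G_2(x-y)\,\mathrm{d}\mu(y)\longrightarrow 0\quad\text{as } r\downarrow 0,
\end{align*}
a Kato-type condition. Since $u\in \mL^1(\mu_K)$ (finite energy) and $\int_{|x-y|<r}G_2(x-y)\,\mathrm{d}\mu_K(y)\to0$ for $\mu_K$-a.e.\ $x$ (dominated convergence, as $\mu_K$ has no atoms because points are polar when $d\ge2$), Egorov's theorem gives a compact $F$ with $\mu_K(F)>\mu_K(K)/2$ on which this convergence is uniform. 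The remaining issue is to pass from uniformity for $x\in F$ to uniformity for all $x\in\R^d$. For this I would use the domination/maximum principle for $G_2$ potentials (valid up to a dimensional constant): if $G_2\ast\nu\le t$ on $\mathrm{supp}\,\nu$, then $G_2\ast\nu\le C t$ everywhere. I would apply it to $\nu=\mu|_{B_r(x_0)}$, combined with the near-radial decay of $G_2$ to handle the region away from the ball.

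\textbf{Step 2: Kato condition $\Rightarrow$ infinitesimal bound.} Split $G_2=G_2\mathbb 1_{|\cdot|<r}+G_2\mathbb 1_{|\cdot|\ge r}$. The second part is bounded by $c_r$, so its contribution to $\mu$ is a bounded-density term controlled by $C_r\norm{\phi}_{\mL^2}^2$. This is justified either by the trace inequality applied to $\mu\ast\eta_r$ type smoothing, or more cleanly by Maz'ya's characterization: $\mu$ is infinitesimally form-bounded iff
\begin{align*}
\lim_{\delta\to0}\sup_{\mathrm{diam}\,E<\delta}\frac{\mu(E)}{\mathrm{Cap}(E)}=0
\end{align*}
(\cite{Maz11}, Theorem 2.3.x). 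The Kato smallness from Step 1 implies this capacity condition. For any compact $E$ with small diameter, test the dual capacity formula with $f=G_1\ast\mu|_E$: one gets $\mu(E)^2\le \mathrm{Cap}(E)\,\norm{G_1\ast\mu|_E}^2_{\mL^2}\le \mathrm{Cap}(E)\,\mu(E)\sup_{E}G_2\ast\mu|_E$, hence $\mu(E)/\mathrm{Cap}(E)\le \sup_x\int_{B_\delta(x)}G_2\,\mathrm{d}\mu\to0$. Finally, \eqref{eq:KLMN measure} for $\phi\in C_c^\infty(\Omega)$ follows because $\norm{\phi}_{\mH^1(\R^d)}=\norm{\phi}_{\mH^1}$ after extension by zero, and $0<\mu(K)<\infty$ holds by construction.

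The main obstacle is Step 1, namely upgrading $\mu$-a.e.\ smallness to a uniform supremum over all $x$. If the domination principle route proves delicate, the fallback is to bypass Kato uniformity entirely. One would instead apply Egorov directly to the quantity $\sup_{E\ni x,\ \mathrm{diam}E<\delta}\mu(E)/\mathrm{Cap}(E)$, using the weak-type capacitary estimate $\mathrm{Cap}(\{G_2\ast\mu>t\})\le \norm{\mu}_{\mH^{-1}}^2/t^2$ to control the sets where the local potential stays large.
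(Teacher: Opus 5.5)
Your proposal is in substance the paper's proof. Both take a finite-energy measure on $K$ from capacity duality, apply Egorov to the local potentials $f_r(x)=\int_{B_r(x)}G_2(x-y)\,\mathrm{d}\mu_K(y)$ to get a compact $F$ with $\mu_K(F)>0$ on which $f_r\to0$ uniformly, and check Maz'ya's criterion via $\mu(E)^2\le\mathrm{Cap}(E)\norm{G_1\ast(\mu|_E)}_{\mL^2}^2$. Any $\sigma\in\mathcal{M}_+(K)$ with $0<\norm{G_1\ast\sigma}_{\mL^2}<\infty$ would do; you never use the capacitary measure or the boundedness of its potential.

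The step you call the main obstacle is not one, and Step 1 can be deleted. In your own chain the potential is only integrated against $\mu$:
\begin{align*}
\norm{G_1\ast(\mu|_E)}_{\mL^2}^2=\int_{E\cap F}\bigl(G_2\ast(\mu|_E)\bigr)(x)\,\mathrm{d}\mu(x).
\end{align*}
For $x\in E\cap F$ with $\mathrm{diam}\,E<\delta$ you have $E\subset B_\delta(x)$, so $(G_2\ast(\mu|_E))(x)\le f_\delta(x)\le\sup_F f_\delta$. Hence $\mu(E)/\mathrm{Cap}(E)\le\sup_F f_\delta\to0$ follows directly from Egorov's uniformity on $F$. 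Replacing your $\sup_E$ and $\sup_{x\in\R^d}$ by $\sup_{E\cap F}$ is exactly how the paper concludes.

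Your domination-principle route would also work, and needs no decay argument. The potential of $\nu=\mu|_{B_r(x_0)}$ on $\mathrm{supp}\,\nu\subset F\cap\overline{B_r(x_0)}$ is at most $\sup_F f_{2r}$. The boundedness principle then bounds $(G_2\ast\nu)(x_0)$, which is precisely the local potential at $x_0$. It is just unnecessary.

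The one genuine omission is $d=1$, which your argument tacitly excludes when you invoke polarity of points. In one dimension $\mathrm{Cap}(\{x\})$ is a fixed positive constant. The capacitary measure of $K=\{x_0\}$ is then a multiple of $\delta_{x_0}$, and $f_r(x_0)\ge G_2(0)\mu_K(\{x_0\})>0$ for all $r$. Moreover, the Bessel-capacity form of Maz'ya's criterion fails for every measure with an atom, so your Steps 1--2 break down. The case is trivial, however: take $\mu=\delta_{x_0}$ for some $x_0\in K$ and use
\begin{align*}
|\phi(x_0)|^2\le\norm{\phi}_{\mL^2}\norm{\phi'}_{\mL^2}\le\varepsilon\norm{\phi'}_{\mL^2}^2+(4\varepsilon)^{-1}\norm{\phi}_{\mL^2}^2,
\end{align*}
as the paper does.
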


We also need the following result, whose short proof we give below.

\begin{lemma}[The quadratic form associated to form-bounded measures]\label{lem:continuous form} Let $\mu \in \mathcal{M}_+(K)$ be a  finite Borel measure with compact support $K \subset \Omega$ and satisfying~\eqref{eq:KLMN measure} for some $\varepsilon>0$. Then the following holds:
\begin{enumerate}[label=(\roman*)]
\item (No charge to polar sets) \label{it:no charge to polar sets} For any polar set $E\subset \Omega$ we have $\mu(E) = 0$. 
\item (Integral against precise representatives) \label{it:vanishing functions} For any $\psi \in \mH^1_0(\Omega)$ satisfying $\psi^\ast = 0$ q.e. in $K$, the continuous extension of the form induced by $\mu$ satisfies
\begin{align}
    \inner{\psi, \mu \phi} = 0 \quad \mbox{for any $\phi \in \mH^1_0(\Omega).$} \label{eq:vanishing form}
\end{align}
\end{enumerate}
\end{lemma}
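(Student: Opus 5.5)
For part (i), I would first upgrade the form bound~\eqref{eq:KLMN measure} from $C^\infty_c(\Omega)$ to all of $\mH^1_0(\Omega)$, phrased in terms of precise representatives. The claim is that for every $\psi\in\mH^1_0(\Omega)$,
\begin{align*}
\int_K |\psi^\ast|^2\,\mathrm{d}\mu \leq \varepsilon \norm{\psi}_{\mH^1}^2 + C_\varepsilon \norm{\psi}_{\mL^2}^2 \leq C\norm{\psi}_{\mH^1}^2 .
\end{align*}
To prove it, I take $\phi_k\in C^\infty_c(\Omega)$ with $\phi_k\to\psi$ in $\mH^1_0(\Omega)$. By Theorem~\ref{thm:quasi-representative}~\ref{it:q.e. convergence}, after passing to a subsequence, $\phi_k\to\psi^\ast$ quasi-everywhere. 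This is only q.e. convergence, so I must not use it before knowing that $\mu$ does not charge polar sets. I therefore establish (i) first, directly on compact sets.

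Let $E$ be polar. By Choquet capacitability (Proposition~\ref{prop:basic capacity}), and since $\mu$ is a finite Radon measure and hence inner regular, it is enough to show $\mu(F)=0$ for every compact $F\subset E\cap K$. Such an $F$ satisfies $\Cp(F)=0$. So for each $\delta>0$ there is $\phi\in C^\infty_c(\Omega)$ with $\phi\geq1$ on $F$ and $\norm{\phi}_{\mH^1}^2<\delta$. Then
\begin{align*}
\mu(F)\leq \int|\phi|^2\,\mathrm{d}\mu \leq C\delta ,
\end{align*}
and letting $\delta\to0$ gives $\mu(F)=0$.

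With (i) available, q.e. convergence implies $\mu$-a.e. convergence. Fatou's lemma then yields the displayed bound for $\psi^\ast$. The same bound applied to $\phi_k-\psi$ (Fatou again, along the subsequence) gives $\phi_k\to\psi^\ast$ in $\mL^2(\mu)$. Consequently, the continuous extension of the form is given by
\begin{align*}
\inner{\psi,\mu\phi}=\int_K \overline{\psi^\ast}\,\phi^\ast\,\mathrm{d}\mu .
\end{align*}
This identification is the key step. It holds because the map $\psi\mapsto\psi^\ast|_K\in\mL^2(\mu)$ is continuous on $\mH^1_0(\Omega)$ and agrees with the form on the dense subspace $C^\infty_c(\Omega)$, so the uniqueness of the extension in Proposition~\ref{prop:continuous forms} forces equality.

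Part (ii) is then immediate. If $\psi^\ast=0$ q.e. on $K$, then by (i) $\psi^\ast=0$ $\mu$-a.e., so the integral above vanishes for every $\phi\in\mH^1_0(\Omega)$.

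The main obstacle is the ordering issue in the identification step. One has to avoid the circular argument of using q.e. convergence before (i) is proved, and this is why (i) is first established on compact subsets through the definition of capacity.
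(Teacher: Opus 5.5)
Your proof is correct, and it differs from the paper's in both parts.

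For (i), the paper handles the polar set $E$ in one stroke. Via Propositions~\ref{prop:same quasi-topology} and~\ref{prop:Bessel capacity}, it asserts smooth $g_n\geq 1$ on $E$ with $\norm{g_n}_{\mH^1}\to 0$ and tests~\eqref{eq:KLMN measure} with them. You instead use inner regularity of $\mu$ to reduce to compact $F\subset E\cap K$, where the definition~\eqref{eq:cap def} directly supplies such smooth test functions. Your route is more elementary, and it is also more robust. Take a non-compact polar set, e.g.\ a countable dense subset of a ball when $d\geq 2$: every continuous $g\geq 1$ on $E$ is $\geq 1$ on the whole ball. So the paper's step as written really needs your compact reduction, or else lower semicontinuous potentials $G_1\ast f$ together with an extension of the form bound to them. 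Two small remarks on your version. Choquet capacitability is not needed there; monotonicity of $\Cp$ suffices. And you should replace $E$ by a $G_\delta$ polar superset, available from~\eqref{eq:capacity general set}, so that $\mu(E)$ is defined.

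For (ii), the paper fixes $\psi$ with $\psi^\ast=0$ q.e.\ on $K$ and reduces to real-valued $\psi$ and smooth $\phi$. It then passes to the limit in the truncations $(\psi_n\wedge m)\vee(-m)$ by dominated convergence, and finally lets $m\to\infty$. You instead obtain $\phi_k\to\psi^\ast$ in $\mL^2(\mu)$ from Fatou and the form bound. You then identify the extended form with $\int_K\overline{\psi^\ast}\phi^\ast\,\mathrm{d}\mu$ on all of $\mH^1_0(\Omega)\times\mH^1_0(\Omega)$, so that (ii) becomes a one-liner. This avoids truncation entirely, including the tacit identification of the form with the integral for the merely Lipschitz truncated functions. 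It also proves a stronger statement: the representation of the form through precise representatives. That representation is exactly how the paper later writes $\inner{\mu,B_\xi^m\zeta}=\int_\Omega(B_\xi^m\zeta)^\ast\,\mathrm{d}\mu$ in the proof of Lemma~\ref{lem:dense range}.
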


\begin{proof}
We first prove~\ref{it:no charge to polar sets}. Let $E\subset \Omega$ be a polar set, then by Proposition~\ref{prop:same quasi-topology} and Proposition~\ref{prop:Bessel capacity}, we can find a sequence $g_n \in C^\infty_c(\R^d)$ with $g_n \geq 1$ on $E$ and $\norm{g_n}_{\mH^1} \rightarrow 0$ as $n\rightarrow \infty$. Hence multiplying by a fixed cutoff function with $\eta = 1$ on $K$, we can even assume that $g_n\in C^\infty_c(\Omega)$. Thus,
\begin{align*}
    \mu(E) = \mu(E \cap K) \leq \int_\Omega |g_n(x)|^2\mathrm{d} \mu(x) \overset{\text{by~\eqref{eq:KLMN measure}}}{\lesssim} \norm{g_n}_{\mH^1}^2 \rightarrow 0.
\end{align*}

To prove~\ref{it:vanishing functions}, we note that, by the continuity of the form induced by $\mu$ (see Proposition~\ref{prop:continuous forms}), it suffices to prove~\eqref{eq:vanishing form} for $\phi \in C^\infty_c(\Omega)$. Moreover, it suffices to consider real-valued functions $\psi$ and $\phi$. So let $\psi \in \mH^1_0(\Omega;\R)$ be such that $\psi^\ast = 0$ q.e. in $K$ and pick an $\mH^1$-approximating sequence $\psi_n \in C^\infty_c(\Omega;\R)$. After extracting a subsequence, we can assume by Theorem~\ref{thm:quasi-representative}~\ref{it:q.e. convergence} that $\psi_n(x) \rightarrow 0$ q.e. in $K$, and hence $\mu$-a.e. by~\ref{it:no charge to polar sets}. Since $\psi_n$ is real-valued, the continuity of the truncation map 
\begin{align*}
    \psi \in \mH^1_0(\Omega) \mapsto (\psi \land m) \lor (-m) \in \mH^1_0(\Omega) \quad \mbox{for any $m\in \N$ (see e.g. \cite{MM79})} 
\end{align*}
implies that the sequence $(\psi_n \land m) \lor (-m) $ converges to $(\psi\land m)\lor(-m)$ in $\mH^1$. Moreover, the functions $(\psi_n \land m)\lor (-m)$ are continuous, bounded, and converge to $0$ $\mu$-a.e. as $n\rightarrow \infty$. Therefore, by dominated convergence and the continuity of the form induced by $\mu$ in Proposition~\ref{prop:continuous forms},
\begin{align*}
    \inner{(\psi\land m)\lor (-m), \mu \phi} = \lim_{n\rightarrow \infty} \int_\Omega (\psi_n\land m)\lor(-m)(x) \phi(x)\mathrm{d} \mu(x) = 0. 
\end{align*}
We can now take the limit $m\rightarrow \infty$ and use the continuity of the form to complete the proof.
\end{proof}

We can now prove the only if part in Theorem~\ref{thm:regular states}.
\begin{proof}[Proof of necessary condition in Theorem~\ref{thm:regular states}]
Let $\Gamma$ be such that 
\begin{align*}
    \Cp(\mathcal{Z}_\rho) >0,\quad \mbox{where}\quad \mathcal{Z}_\rho \coloneqq \{x\in \Omega : \sqrt{\rho^\ast}(x) =0 \}, 
\end{align*} 
where $\sqrt{\rho^\ast}$ is the precise representative of the single-particle density of $\Gamma$. Then, since $\sqrt{\rho^\ast}$ is quasi-continuous, the set $\mathcal{Z}_\rho$ is quasi-closed by Proposition~\ref{prop:basic quasi-continuous}~\ref{it:preimage}. Consequently, by Proposition~\ref{prop:basic capacity}~\ref{it:inner regularity}, there exists a compact subset $K\subset \mathcal{Z}_\rho$ such that $\Cp(K)>0$. From Lemma~\ref{lem:KLMN measure} there exists a non-trivial finite Borel measure $\mu \in \mathcal{M}_+(K)$ satisfying the infinitesimal form bound~\eqref{eq:KLMN measure}. 

We now show that $\hat{\mu} \Gamma = 0$. For this, we first note that, from the spectral decomposition $\Gamma = \sum_{j\geq 1} \lambda_j |\Psi_j\rangle \langle \Psi_j|$, we have $\hat{\mu} \Gamma =0$ if and only if $\hat{\mu} \Psi_j = 0$ for all $j$. So let us fix $j$ and prove that $\hat{\mu} \Psi_j = 0$. For this, observe that $\rho = \sum_{j\geq 1} \lambda_j \rho_{\Psi_j}\geq \lambda_j \rho_{\Psi_j}$ a.e.. Since the precise representatives are quasi-continuous (by Theorem~\ref{thm:quasi-representative}), we can invoke Proposition~\ref{prop:from q.e to a.e.} to conclude that
\begin{align*}
    \rho^\ast \geq \lambda_j \rho_{\Psi_j}^\ast \mbox{ q.e. on $\Omega$,}\quad \mbox{which implies that } \quad \rho_{\Psi_j}^\ast = 0 \mbox{ q.e. on $K$.} 
\end{align*}
Consequently, by Lemma~\ref{lem:continuous form}~\ref{it:vanishing functions} we have $\inner{\mu, \rho_{\Psi_j}} = 0$. Then, since the $n$-particle quadratic form induced by $\mu$ is non-negative, we can apply Cauchy Schwarz to conclude that
\begin{align*}
    |\inner{\Phi, \hat{\mu} \Psi_j}| \leq \inner{\Phi, \hat{\mu} \Phi}^{\frac12} \inner{\Psi_j, \hat{\mu} \Psi_j}^{\frac12} = \inner{\mu, \rho_\Psi}^{\frac12} \inner{\mu, \rho_{\Psi_j}}^{\frac12} = 0, \quad \mbox{for any $\Phi \in \mathcal{Q}_n$.}
\end{align*}
Thus, $\hat{\mu} \Psi_j = 0$ as desired. 
\end{proof}

For the proof of Lemma~\ref{lem:KLMN measure}, we shall use two additional lemmas. The first one is the dual characterization of the absolute capacity, which is a classical result in potential theory. A proof can be found, e.g., in \cite[Theorem 2.2.7]{AH96}. 

\begin{lemma}[Dual characterization of capacity] \label{lem:duality capacity} The Bessel capacity in Definition~\ref{def:Bessel capacity} satisfies
\begin{align}
    \mathrm{Cap}(K) = \sup_{\mu \in \mathcal{M}_+(K)} \left\{ \frac{\mu(K)}{\norm{G_1 \ast \mu}_{\mL^2(\R^d)}}\right\}^2, \quad \mbox{for any compact $K\subset \R^d$,}\label{eq:dual capacity}
\end{align}
where $\ast$ denotes the standard convolution and $G_1(x)$ is the Bessel potential defined in~\eqref{eq:Bessel def}. Note that $(G_1 \ast \mu)(x)$ is well-defined everywhere (and possibly $+\infty$) for any non-negative measure $\mu$ because the kernel $G_1$ is non-negative. 
\end{lemma}

The second lemma we shall need is an important result of Maz'ya, Verbitsky, and others \cite{Ada76,Maz83,MV95,Ver98,MV05} on the characterization of measures satisfying the (infinitesimal) form-bound with respect to the kinetic energy. Here we use the version in \cite[Theorem 4.1]{MV05}.

\begin{lemma}[A characterization of infinitesimally form-bounded measures \cite{Maz11,MV05}] \label{lem:mazya}
Let $\mu \in \mathcal{M}_+(\R^d)$ be a locally finite Borel measure. Then the inequality
\begin{align}
    \int_{\R^d} |\phi|^2 \mathrm{d} \mu(x) \leq \varepsilon \int_{\R^d} |\nabla \phi(x)|^2 \mathrm{d}x + C_\varepsilon \norm{\phi}_{\mL^2}^2, \quad \mbox{for any $\phi \in C^\infty_c(\R^d)$} \label{eq:form-bound measure}
\end{align}
holds for any $\varepsilon>0$ with large enough $C_\varepsilon>0$ if and only if
\begin{align}
    \lim_{\delta \downarrow 0} \sup\left\{ \frac{\mu(F)}{\mathrm{Cap}(F)} : F\subset \R^d \quad \mbox{compact and  }\mathrm{diam}(F) < \delta \right\} = 0,\label{eq:Mazya criteria}
\end{align}
where we understand the quotient as $0$ whenever both $\mathrm{Cap}(F) = 0$ and $\mu(F) =0$.
\end{lemma}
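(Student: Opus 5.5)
The plan is to prove the two implications separately. Sufficiency (Maz'ya's condition~\eqref{eq:Mazya criteria} $\Rightarrow$ the infinitesimal bound~\eqref{eq:form-bound measure}) will go through the capacitary strong-type inequality combined with a localization. Necessity will go through testing~\eqref{eq:form-bound measure} with near-extremal capacitary functions. Throughout, write $s(\delta)$ for the supremum in~\eqref{eq:Mazya criteria} over compact sets of diameter $<\delta$.

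\emph{Sufficiency.} The main tool is Maz'ya's capacitary strong-type inequality:
\begin{align*}
    \int_0^\infty \mathrm{Cap}(\{|\phi|\geq t\})\, \mathrm{d}(t^2) \leq C \norm{\phi}_{\mH^1}^2, \qquad \phi \in C^\infty_c(\R^d),
\end{align*}
which can be taken from \cite{Maz11,AH96}. Combined with the layer-cake formula, it gives the global trace inequality
\begin{align*}
    \int |\phi|^2\, \mathrm{d}\nu \leq C\Big(\sup_F \frac{\nu(F)}{\mathrm{Cap}(F)}\Big)\norm{\phi}_{\mH^1}^2
\end{align*}
for any measure $\nu$. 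To make the constant small, I localize in three steps.
\begin{enumerate}[label=(\arabic*)]
\item Fix $\delta>0$ and cover $\R^d$ by cubes $Q_k$ of diameter $<\delta$. Choose a partition $\sum_k\eta_k^2 = 1$ with $\mathrm{supp}\,\eta_k$ in a slightly enlarged cube, bounded overlap, and $|\nabla\eta_k|\lesssim\delta^{-1}$.
\item Set $\nu_k = \mu|_{\mathrm{supp}\,\eta_k}$. For any compact $F$, monotonicity of capacity gives $\nu_k(F) = \mu(F\cap \mathrm{supp}\,\eta_k)\le s(c\delta)\,\mathrm{Cap}(F\cap \mathrm{supp}\,\eta_k) \le s(c\delta)\,\mathrm{Cap}(F)$.
\item Apply the trace inequality to $\phi\eta_k$ and $\nu_k$, then sum over $k$ using $\sum_k\norm{\nabla(\phi\eta_k)}^2 \le \norm{\nabla \phi}^2 + C\delta^{-2}\norm{\phi}^2$. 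This yields
\begin{align*}
    \int|\phi|^2\,\mathrm{d}\mu \le C s(c\delta)\norm{\nabla\phi}^2 + C s(c\delta)(1+\delta^{-2})\norm{\phi}^2.
\end{align*}
\end{enumerate}
Choosing $\delta$ so that $C s(c\delta)<\varepsilon$ gives~\eqref{eq:form-bound measure}.

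\emph{Necessity.} Given $\varepsilon$, fix a compact $F$ of small diameter. Take $\phi\in C^\infty_c$ with $0\le\phi\le 1$ (after smooth truncation), $\phi\geq1$ on $F$, and $\norm{\phi}_{\mH^1}^2\le 2\,\mathrm{Cap}(F)$. Then~\eqref{eq:form-bound measure} gives
\begin{align*}
    \mu(F)\le 2\varepsilon\,\mathrm{Cap}(F) + C_\varepsilon\norm{\phi}_{\mL^2}^2.
\end{align*}
So it suffices to show $\norm{\phi}_{\mL^2}^2 \le \theta(\delta)\,\mathrm{Cap}(F)$ with $\theta(\delta)\to0$, uniformly over $F$ with $\mathrm{diam}\,F<\delta$. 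For this I would take $\phi$ to be the (regularized) capacitary potential $G_2\ast\nu$, where $\nu$ is the equilibrium measure from the dual characterization, Lemma~\ref{lem:duality capacity}. Then $\mathrm{Cap}(F)\approx \nu(F)\approx\langle G_2\ast \nu,\nu\rangle$, while $\norm{\phi}_{\mL^2}^2 = \langle G_4\ast\nu,\nu\rangle$. Since $\nu$ lives on a set of diameter $<\delta$, the desired bound reduces to comparing the kernels $G_4$ and $G_2$ at distances $<\delta$.

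\emph{Main obstacle.} The main obstacle is exactly this kernel comparison. In $d\ge3$ one has $G_4/G_2\to0$ at the origin, and the argument closes. In low dimensions the comparison fails, and the inhomogeneous capacity of points is positive. For example, in $d=1$ the Dirac mass is infinitesimally form-bounded, yet $\mu(\{0\})/\mathrm{Cap}(\{0\})>0$. So I expect the statement must be read, as in \cite{MV05}, with the appropriate (homogeneous or scale-adapted) capacity for small sets, or with a restriction on $d$. I would first check the precise hypotheses of \cite[Theorem 4.1]{MV05} and adjust the capacity accordingly before running the argument above.
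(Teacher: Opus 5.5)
Your proposal is correct, and your objection to the statement is justified. With the inhomogeneous capacity $\mathrm{Cap}$, the ``only if'' direction is false for $d=1$. The measure $\mu=\delta_0$ satisfies $|\phi(0)|^2\le\norm{\phi}_{\mL^2}\norm{\phi'}_{\mL^2}\le\varepsilon\norm{\phi'}_{\mL^2}^2+(4\varepsilon)^{-1}\norm{\phi}_{\mL^2}^2$. On the other hand, $\mathrm{Cap}(\{0\})=1/G_2(0)=2$, so the supremum in \eqref{eq:Mazya criteria} is at least $1/2$ for every $\delta$.

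The paper gives no proof of this lemma; it simply cites \cite[Theorem 4.1]{MV05}. It only uses the ``if'' direction, and only for $d\ge 2$ (for $d=1$ it takes $\mu=\delta_x$ directly in the proof of Lemma~\ref{lem:KLMN measure}), so the defect does not propagate.

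Your sufficiency argument is valid in every dimension, and it is exactly the part the paper relies on. It combines Maz'ya's capacitary strong-type inequality, the layer-cake formula, and an IMS-type partition $\sum_k\eta_k^2=1$, where the cross term vanishes because $\sum_k\eta_k\nabla\eta_k=0$. Compared with the black-box citation, your route is self-contained and shows precisely where the dimension enters, at the cost of importing the strong-type inequality. The cited work is far more general than what is needed here.

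There are two corrections to your necessity part.

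\emph{Dimension bookkeeping.} In $d=2$ the symbol $\widehat{G}_4(\xi)=(1+|\xi|^2)^{-2}$ is integrable, so $G_4$ is bounded and continuous, while $G_2(z)\sim\frac{1}{2\pi}\log(1/|z|)$. Hence $G_4/G_2\to0$ at the origin, points are polar, and your kernel comparison closes. The comparison fails only for $d=1$. So the lemma holds exactly as stated, with the Bessel capacity and no rescaling, for all $d\ge2$.

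\emph{The regularization step.} ``Regularized'' must be made precise, since mollifying $G_2\ast\nu$ destroys $\phi\ge1$ on $F$. Use instead that the admissible set $\{\phi\in C^\infty_c(\R^d):\phi\ge1 \text{ on } F\}$ is convex. By the parallelogram law, any minimizing sequence $\phi_n$ for $\mathrm{Cap}(F)$ is Cauchy in $\mH^1$. It converges to the unique norm minimizer, the equilibrium potential $G_2\ast\nu$, for which $\mathrm{Cap}(F)=\nu(F)=\iint G_2(x-y)\,\mathrm{d}\nu(x)\,\mathrm{d}\nu(y)$ holds with equality. Then $\norm{\phi_n}_{\mL^2}^2\to\iint G_4(x-y)\,\mathrm{d}\nu(x)\,\mathrm{d}\nu(y)\le\theta(\delta)\,\mathrm{Cap}(F)$. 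You can therefore insert $\phi_n$ directly into \eqref{eq:form-bound measure}, which avoids any quasi-everywhere issues.

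Finally, treat $\mathrm{Cap}(F)=0$ separately. Testing with $\phi_n\ge1$ on $F$ and $\norm{\phi_n}_{\mH^1}\to0$ gives $\mu(F)=0$, as the convention in \eqref{eq:Mazya criteria} requires.
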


We can now proceed with the proof of Lemma~\ref{lem:KLMN measure}.

\begin{proof}[Proof of Lemma~\ref{lem:KLMN measure}] For $d=1$, we can simply take $\mu = \delta_x$ for some $x\in K$, since the set $\{x\}$ has positive capacity\footnote{Note that $\delta_x$ is precisely the measure used to produce a counterexample to the HK Theorem in \cite{Cor26a} for 1D system with anti-periodic boundary conditions.}. So let us assume that $d\geq 2$. As $K$ is a compact subset of positive capacity, by Lemma~\ref{lem:duality capacity}, we can find $\sigma \in \mathcal{M}_+(K)$ with
\begin{align}
    0<\norm{G_1 \ast \sigma}_{\mL^2}<\infty, \quad \mbox{where $G_1$ is the Bessel potential~\eqref{eq:Bessel def}.} \label{eq:some first existence}
\end{align}
Since $G_1$ is non-negative and $G_2 = G_1\ast G_1$, where $G_2$ is the Bessel potential $\widehat{G}_2(\xi) = (1+|\xi|^2)^{-1}$, equation~\eqref{eq:some first existence} implies that the function $G_2 \ast \sigma$ is $\sigma$-integrable. In particular, the measure $\sigma$ has no atoms (because $G_2(0) = \infty$ for $d\geq 2$) and the function
\begin{align*}
    f_r(x) \coloneqq \int_{B_r(x)} G_2(x-y) \mathrm{d} \sigma(y) 
\end{align*}
decreases monotonically to zero $\sigma$-a.e. as $r\downarrow 0$ by monotone convergence. We can therefore apply Egorov's theorem to find a compact set $L\subset K$ such that $\sigma(L) >0$ and $f_r(x)$ converges uniformly to zero in $L$. The idea is now to set $\mu\coloneqq \sigma \rvert_{L}$ and show that $\mu$ satisfies Maz'ya's criterion~\eqref{eq:Mazya criteria}. Moreover, since the support of $\mu$ is contained in $\Omega$, it suffices to verify~\eqref{eq:Mazya criteria} for compact subsets of $\Omega$.

For this, first note that the measure $\sigma$ is not trivial because $\sigma(L)>0$. Next, from the uniform convergence of $f_r(x)$ in $L = \mathrm{supp}(\mu)$, for any $\varepsilon>0$ we can find $\delta_0>0$ such that 
\begin{align}
	f_\delta(x)\leq \varepsilon \quad \mbox{for any $x\in L$ and $\delta\leq \delta_0$.} \label{eq:pointwiseest}
\end{align}
Moreover, for any compact $F\subset \Omega$ with $\mathrm{diam}(F)<\delta$ we have $F\subset B_\delta(x)$ for any $x\in F$; therefore
\begin{align}
f_\delta(x) \geq \int_{B_\delta(x)} G_2(x-y) \mathrm{d} \mu(y) \geq \int_{F} G_2(x-y)\mathrm{d} \mu(y) =\left( G_2 \ast (\mu\rvert_F)\right)(x), \quad \mbox{for $x\in F\cap L$.} \label{eq:simple upper est}
\end{align}
We now combine estimates~\eqref{eq:pointwiseest} and~\eqref{eq:simple upper est} with the identity $G_2 = G_1 \ast G_1$ to obtain
\begin{align}
 \varepsilon \mu(F) = \int_{F\cap L} \varepsilon \mathrm{d}\mu(y) \overset{\text{\eqref{eq:pointwiseest}}}{\geq} \int_{F\cap L} f_\delta(y) \mathrm{d}\mu(y) \overset{\text{\eqref{eq:simple upper est}}}{\geq} \int_{F\cap L} (G_2\ast \mu\rvert_{F})(x) \mathrm{d} \mu(y) = \norm{G_1\ast(\mu\rvert_{F})}_{\mL^2}^2.  \label{eq:estimate on measure}
\end{align}
We can now use $\nu \coloneqq \mu\rvert_{F}$ as a trial measure in the duality formula from Lemma~\ref{lem:duality capacity}. Precisely, we have
\begin{align*}
\varepsilon \mathrm{Cap}(F) \overset{\text{\eqref{eq:dual capacity}}}{\geq} \varepsilon \frac{\mu(F)^2}{\norm{G_1 \ast (\mu\rvert_F)}_{\mL^2}^2} \overset{\text{\eqref{eq:estimate on measure}}}{\geq}  \mu(F), \quad \mbox{for any compact $F\subset \Omega$ with $\mathrm{diam}(F) < \delta$.}
\end{align*}
Hence, we can take the supremum over all $F$ with $\mathrm{diam}(F)<\delta$ and take the limits $\delta\downarrow 0$ and $\varepsilon \downarrow 0$ to conclude that $\mu$ satisfies~\eqref{eq:Mazya criteria}. Thus, from Lemma~\ref{lem:mazya} we obtain the desired infinitesimal form-bound for $\mu$, which completes the proof.
\end{proof}

\subsection{Sufficient conditions}
\label{sec:sufficient}
We now prove the converse implication in Theorem~\ref{thm:regular states}. To this end, we shall need three auxiliary lemmas.

The first lemma is a dense range criteria for certain multiplication operators in the Sobolev space. To state it precisely, let us introduce the following definition. We say that $\xi = \{\xi_j\}_{j\in \N}$ belongs to $\ell^2\left(\mH^1_0(\Omega)\right)$ if $\xi_j \in \mH^1_0(\Omega)$ for each $j\in \N$ and
\begin{align*}
    \sum_{j} \norm{\xi_j}_{\mH^1}^2 < \infty. 
\end{align*}
For such a sequence, we define the multiplication operator $B_\xi : \ell^2(\mH^1_0(\Omega)) \rightarrow \mW^{1,1}(\Omega)$ as
\begin{align*}
    B_\xi(\zeta) = \sum_{j\geq 1} \overline{\xi}_j \zeta_j .
\end{align*}
Moreover, let us define the domain
\begin{align}
    \mathrm{dom}_\xi \coloneqq \{ \zeta \in \ell^2(\mH^1_0): B_\xi \zeta \in \mH^1_0 \}
\end{align}
and the density
\begin{align}
    \rho_\xi = B_\xi(\xi). 
\end{align}
Then the following holds.
\begin{lemma}[Multiplication operators with dense range] \label{lem:dense range} For any $\xi \in \ell^2(\mH^1_0)$ the density $\rho_\xi$ satisfies
\begin{align*}
    \sqrt{\rho_\xi} \in \mH^1_0(\Omega).
\end{align*}
Moreover, the set $B_\xi(\mathrm{dom}_\xi)$ is dense in $\mH^1_0(\Omega)$ if and only if 
\begin{align}
    \sqrt{\rho_\xi}^\ast > 0 \quad \mbox{quasi-everywhere in $\Omega$.} \label{eq:mutual positive}
\end{align} 
\end{lemma}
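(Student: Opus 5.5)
The proof has three parts: the regularity $\sqrt{\rho_\xi}\in\mH^1_0(\Omega)$, the easy direction (density of the range forces q.e.\ positivity), and the hard direction (q.e.\ positivity gives density).

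\textbf{Regularity of $\sqrt{\rho_\xi}$.} Note that $\rho_\xi=\sum_j|\xi_j|^2$. We first argue for finite truncations $\rho_N=\sum_{j\le N}|\xi_j|^2$. Cauchy--Schwarz gives, a.e.,
\begin{align*}
    |\nabla\sqrt{\rho_N}|\le \Big(\sum_{j\le N}|\nabla\xi_j|^2\Big)^{1/2}.
\end{align*}
Rigorously, one applies this to $\sqrt{\rho_N+\epsilon^2}-\epsilon$, which lies in $\mH^1_0$ and vanishes in a suitable sense at the boundary, and then lets $\epsilon\downarrow0$. The sequence $\sqrt{\rho_N}$ is then bounded in $\mH^1_0$ and converges in $\mL^2$ to $\sqrt{\rho_\xi}$. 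Weak closedness of $\mH^1_0$ gives $\sqrt{\rho_\xi}\in\mH^1_0$.

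\textbf{Density implies positivity.} Suppose $Z=\{\sqrt{\rho_\xi}^\ast=0\}$ has positive capacity. Then $Z$ is quasi-closed and, as in the proof of the necessary condition above, contains a compact $K$ with $\Cp(K)>0$. Take $\mu$ from Lemma~\ref{lem:KLMN measure}, and consider the bounded functional $\ell(f)=\inner{\mu, f\,\overline{\eta}}$ on $\mH^1_0$, where $\eta\in C^\infty_c(\Omega)$ equals $1$ on $K$. Since $\mu(K)>0$, we have $\ell(\eta)\ne0$, so $\ell\neq0$. Now let $\zeta\in\mathrm{dom}_\xi$. Pointwise q.e., $|B_\xi\zeta|^\ast\le \sqrt{\rho_\xi}^\ast\,(\sum_j|\zeta_j^\ast|^2)^{1/2}$; this should follow from quasi-continuity together with Proposition~\ref{prop:from q.e to a.e.} applied to partial sums, after passing to a q.e.-convergent subsequence. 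Hence $(B_\xi\zeta)^\ast=0$ q.e.\ on $K$, and Lemma~\ref{lem:continuous form}\ref{it:vanishing functions} gives $\ell(B_\xi\zeta)=0$. So $\ell$ is a nonzero bounded functional annihilating the range, which therefore cannot be dense.

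\textbf{Positivity implies density.} Assume $\sqrt{\rho_\xi}^\ast>0$ q.e. I will show that every $\phi\in C^\infty_c(\Omega)$ is approximated by elements of the range; it suffices to approximate in $\mH^1$ by elements of $B_\xi(\mathrm{dom}_\xi)$. The natural candidate is
\begin{align*}
    \zeta^\epsilon_j=\frac{\xi_j\,\phi}{\rho_\xi+\epsilon},
\end{align*}
so that $B_\xi\zeta^\epsilon=\phi\,\rho_\xi/(\rho_\xi+\epsilon)$. Writing $u=\sqrt{\rho_\xi}$ and $g_\epsilon(u)=u^2/(u^2+\epsilon)$, which is Lipschitz with $g_\epsilon(0)=0$, we get $B_\xi\zeta^\epsilon=\phi\,g_\epsilon(u)\in\mH^1_0$. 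Each $\zeta_j^\epsilon$ is in $\mH^1_0$ because $\nabla(\xi_j/(\rho_\xi+\epsilon))$ is controlled by $|\nabla\xi_j|/\epsilon$ plus $|\xi_j|\,2u|\nabla u|/\epsilon^2$, and summing over $j$ (using $\sum_j|\xi_j|^2=u^2$, with $u$ bounded after a truncation if needed) gives $\zeta^\epsilon\in\ell^2(\mH^1_0)$. It remains to show $\phi\,g_\epsilon(u)\to\phi$ in $\mH^1$. The $\mL^2$ convergence holds since $u>0$ a.e. The gradient splits into $\nabla\phi\,(g_\epsilon(u)-1)\to0$, which is fine, and the critical term
\begin{align*}
    \phi\,\frac{2\epsilon u\nabla u}{(u^2+\epsilon)^2}.
\end{align*}
This term is bounded pointwise by $C|\nabla u|/\sqrt\epsilon$, which does not tend to zero; pointwise domination is not enough.

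\textbf{Main obstacle.} The critical term is where the q.e.\ positivity must enter, and I expect it to be the hard step. I would use it through the characterization that $\{u^\ast=0\}$ is polar if and only if $\mH^1_0(\Omega\cap\{u^\ast>0\})=\mH^1_0(\Omega)$, i.e.\ the spectral-synthesis / Hedberg-type result that functions vanishing q.e.\ off a quasi-open set are approximable by functions supported where $u^\ast>\delta$. Concretely, I would first approximate $\phi$ by $\phi\,\min(1,(u/\delta-1)_+)$-type truncations. These converge to $\phi$ in $\mH^1$ as $\delta\to0$ exactly because $\{u^\ast=0\}$ has zero capacity; this is the capacitary argument of Hedberg / Adams--Hedberg on approximating by functions vanishing near a polar set. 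On $\{u>\delta\}$ the division by $\rho_\xi$ is harmless, with $\zeta_j=\xi_j\phi\chi/\rho_\xi$ and bounded coefficients, which bypasses the $\epsilon$-regularization. If the direct truncation estimate fails, the fallback is a Hahn--Banach argument: a functional $T\in\mH^{-1}$ annihilating the range satisfies $\overline{\xi_j}\,T=0$ for all $j$, hence $u\,T=0$ in a suitable sense, and q.e.\ positivity of $u$ should force $T=0$ by the same spectral-synthesis principle.
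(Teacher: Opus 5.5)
There is a genuine gap, in the direction "positivity $\Rightarrow$ density". Your regularity argument and your proof that density of the range forces $\sqrt{\rho_\xi}^\ast>0$ q.e.\ are sound. The latter is slightly more direct than the paper's, because you apply Lemma~\ref{lem:continuous form}\ref{it:vanishing functions} to $B_\xi\zeta\in\mH^1_0(\Omega)$ itself. The paper instead works with the partial sums $B^m_\xi\zeta$, which are only in $\mW^{1,1}$. Your justification of the q.e.\ vanishing on $K$ via partial sums does not work as stated, since those partial sums converge only in $\mW^{1,1}$. It is cleaner to use the a.e.\ bound $|B_\xi\zeta|\le\sqrt{\rho_\xi}\,\sqrt{\rho_\zeta}$, whose two sides have quasi-continuous representatives, together with Proposition~\ref{prop:from q.e to a.e.}.

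In the converse direction you correctly see that the $\epsilon$-regularization fails, but your repair rests on a false claim. Truncations of $u=\sqrt{\rho_\xi}$ do not converge in $\mH^1$ just because $\{u^\ast=0\}$ is polar. Take $d=2$, $\Omega$ the unit disc, $\xi=(u,0,0,\dots)$ with $u(x)=|x|(1-|x|)$, and $\phi\in C^\infty_c(\Omega;[0,1])$ equal to $1$ on $B_{1/2}$. The zero set of $u$ in $\Omega$ is the polar set $\{0\}$, yet
\[
\int\phi^2\,\bigl|\nabla\min(1,(u/\delta-1)_+)\bigr|^2\,\mathrm{d}x\longrightarrow 3\pi,
\qquad
\int \phi^2\,\Bigl|\nabla \frac{u^2}{u^2+\epsilon}\Bigr|^2\,\mathrm{d}x\longrightarrow \frac{2\pi}{3}.
\]
So neither $\phi\min(1,(u/\delta-1)_+)$ nor $\phi\,u^2/(u^2+\epsilon)$ converges to $\phi$ in $\mH^1$. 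The $\mH^1$ cost of a cut-off of the form $\psi(u)$ depends on how $u$ vanishes, not on the capacity of its zero set; near a point in $d=2$ one needs logarithmic cut-offs. Your Hahn--Banach fallback is circular: deducing $T=0$ from "$uT=0$" is exactly the density statement of the lemma for $\xi=(u,0,\dots)$.

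The missing idea is to choose the cut-off by capacity, independently of $u$.
\begin{enumerate}[label=(\roman*)]
\item With $K=\mathrm{supp}\,\phi$, the sets $E_k=\{u^\ast\le 1/k\}\cap K$ are decreasing and quasi-closed, with polar intersection. Hence $\Cp(E_k)\to0$ by Proposition~\ref{prop:basic capacity}\ref{it:decreasing}.
\item By Proposition~\ref{prop:Bessel capacity} there are $f_k\ge1$ on $E_k$ with $\norm{f_k}_{\mH^1}\to0$. Then $\phi\,(1-f_k\wedge 1)$ vanishes on $E_k$ and tends to $\phi$ in $\mH^1$.
\item Off $E_k$ one has $u>1/k$. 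Take a bounded Lipschitz $T_k$ with $T_k(t)=1/t$ for $t\ge 1/k$, and set $\zeta_{j,k}=\phi\,(1-f_k\wedge1)\,T_k(u)^2\,\xi_j$. Using $|\xi_j|\le u$ and $\sup_t tT_k(t)<\infty$, one gets $\zeta_k\in\ell^2(\mH^1_0)$ and $B_\xi\zeta_k=\phi\,(1-f_k\wedge 1)$ a.e.
\end{enumerate}
This is the paper's route. It never needs your critical term to be small, and it is exactly where quasi-everywhere, rather than almost-everywhere, positivity is used.
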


\begin{proof} First note that $B_\xi \zeta \in \mW^{1,1}(\Omega)$ for any $\zeta \in \ell^2(\mH^1_0)$. Indeed, this follows from the product rule for derivatives, the triangle inequality, and the Cauchy-Schwarz inequality. A similar Cauchy-Schwarz argument shows that $\sqrt{\rho_\xi} \in \mH^1_0(\Omega)$. 

To prove the second statement, let us first assume that~\eqref{eq:mutual positive} does not hold. Since $|\xi_j| \leq \sqrt{\rho_\xi}$ a.e., it follows from Proposition~\ref{prop:from q.e to a.e.} and the subadditivity of the capacity that
\begin{align*}
    \Cp(\cap_{j\geq 1} \{x : \xi_j^\ast(x) = 0 \}) \geq \Cp(\{ x : \rho_\xi(x)^\ast = 0\})  > 0.
\end{align*}
Moreover, by the countable intersection property of closed sets in Proposition~\ref{prop:quasi-open sets}~\ref{it:countable union} and the inner-regularity of quasi-closed sets in~\ref{prop:basic capacity}~\ref{it:inner regularity}, we can find a compact set 
\begin{align*}
    K \subset \cap_{j\geq 1} \{x : \xi_j^\ast(x) = 0\} \quad \mbox{with} \quad \Cp(K)> 0.
\end{align*}
Hence, by Lemma~\ref{lem:KLMN measure}, we can find a non-trivial measure $\mu \in \mathcal{M}_+(\Omega)$ with support in $K$ and satisfying the form bound~\eqref{eq:KLMN measure}.  In particular, the truncated operator satisfies
\begin{align*}
    (B_\xi^m \zeta)^\ast(x) = \sum_{j=1}^m \xi_j^\ast(x) \zeta_j^\ast(x) = 0 \quad \mbox{ quasi-everywhere in $K$ for any $\zeta \in \ell^2(\mH^1_0)$ and $m\in \N$.}
\end{align*} 
Hence, by Lemma~\ref{lem:continuous form}~\ref{it:vanishing functions}, we conclude that
\begin{align*}
   \inner{\mu, B_\xi^m \zeta } = \int_{\Omega} (B_\xi^m \zeta)^\ast(x) \mathrm{d}\mu(x) = 0, \quad \mbox{for any $\zeta \in \ell^2(\mH^1_0)$.}
\end{align*}
Moreover, from the continuity of the form associated to $\mu$ (see Proposition~\ref{prop:continuous forms}) we have
\begin{align*}
    \inner{\mu, B_\xi \zeta} = \lim_{m\rightarrow \infty} \inner{\mu, B_\xi^m \zeta} = 0.
\end{align*}
But since $\mu \eta = \mu$ for any cutoff function $\eta \in C^\infty_c(\Omega)$ with $\eta =1$ on $K$, $\mu$ is a non-trivial element in the dual space $\mH^{-1}_0(\Omega)$ by~\eqref{eq:KLMN measure}. Since $\mu$ annihilates the set $B_\xi (\mathrm{dom}_\xi)$, this set can not be dense in $\mH^1_0(\Omega)$. This shows that condition~\eqref{eq:mutual positive} is necessary. 

To prove that~\eqref{eq:mutual positive} is also sufficient, we shall show that any $C^\infty_c(\Omega)$ can be approximated in $\mH^1$ by functions in $B_\xi(\mathrm{dom}_\xi)$. For this, we fix $g\in C^\infty_c(\Omega)$ and use a few approximation arguments.

\textbf{1st Step: Capacitary cut-off} First, let us define 
\begin{align*}
    E_k \coloneqq \{x\in \Omega: \sqrt{\rho_\xi}^\ast(x) \leq 1/k\} \cap K, \quad \mbox{where $K = \mathrm{supp}(g)$.}
\end{align*}
Then $E_{k+1} \subset E_k$ is a decreasing sequence of quasi-closed sets by Propositions~\ref{prop:basic quasi-continuous}~\ref{it:preimage} and~\ref{prop:quasi-open sets}~\ref{it:finite intersection}. As the zero set $\{x:\sqrt{\rho_\xi}^\ast(x) = 0 \}$ is polar by assumption, we can use the subadditivity of the capacity and the continuity from above in Proposition~\ref{prop:basic capacity}~\ref{it:decreasing} to conclude that
\begin{align}
    \lim_{k\rightarrow \infty} \Cp(E_k) = \Cp(\{x:\sqrt{\rho_\xi}^\ast(x) = 0 \}) = 0. \label{eq:zero capacity limit}
\end{align}
Consequently, from the comparison inequality~\eqref{eq:upper capacity bound} and Proposition~\ref{prop:Bessel capacity}, we can find a sequence $f_k = G_1\ast g_k \in \mH^1(\R^d;\R_+)$ with 
\begin{align}
    f_k\geq 1 \mbox{ on $E_k$} \quad\mbox{and}\quad \norm{f_k}_{\mH^1}^2 = \norm{g_k}_{\mL^2}^2 \leq \mathrm{Cap}(E_k) + 1/k \leq \Cp(E_k)+1/k. \label{eq:f approx}
\end{align}
Moreover, we pick $\eta \in C^\infty_c(\Omega)$ with $\eta = 1$ on $K$ and set
\begin{align}
    \eta_k \coloneqq \eta (f_k\land 1).
\end{align}
Then from~\eqref{eq:f approx} we have
\begin{align}
    \eta(x) - \eta_k(x) = 0 \quad \mbox{for any $x\in E_k$, and } \quad \norm{\eta_k}_{\mH^1}^2\lesssim \Cp(E_k)+1/k.  \label{eq:E_n localizer}
\end{align}
These functions will serve as capacitary cut-off functions near the zero set of $\rho^\ast$ inside $K$.

\textbf{2nd Step: regularization of inverse density.} The idea now is to use a regularization of the functions $(\eta-\eta_k)\overline{\xi_j}/\rho_\xi$ as multipliers. For each $k \in \N$, we pick a function $T_k\in C^1(\R)$ satisfying
\begin{align}
    T_k(0) = 0, \quad T_k(t) = \frac{1}{t}, \quad \mbox{for $t\geq \frac{1}{k}$,}\quad \mbox{and}\quad M_k \coloneqq \sup_{t\in \R} |T_k(t)| + |\dot{T}_k(t)| < \infty. \label{eq:Lipschitz approx}
\end{align}
Hence, $T_k$ is (globally) Lipschitz and standard results (see, e.g., \cite[Section 7.4]{GT01}) show that the composition 
\begin{align*}
    T_k(\sqrt{\rho_\xi})(x) \coloneqq T_k\left(\sqrt{\rho_\xi(x)}\right) \quad \mbox{belongs to $\mH^1_0(\Omega)$.}
\end{align*}
Moreover, from the boundedness of $T_k$, this function belongs to $\mH^1_0(\Omega)\cap \mL^\infty(\Omega)$. As this space is an algebra of functions, the function $(\eta-\eta_k) T_k(\sqrt{\rho_\xi})^2$ also belongs to $\mH^1_0(\Omega) \cap \mL^\infty(\Omega)$.  We now consider the functions
\begin{align*}
    \zeta_{j,k} \coloneqq g(\eta-\eta_k) T_k(\sqrt{\rho_\xi})^2 \overline{\xi}_j.
\end{align*}
Since $\sqrt{\rho_\xi} \geq |\xi_j|$ a.e. and $T_k(\sqrt{\rho_\xi})(x) = 1/\sqrt{\rho_\xi}(x)$ a.e. on $K\setminus E_k$, the product $(\eta - \eta_k)|T_k(\sqrt{\rho_\xi}) \overline{\xi}_j| \leq |\xi_j|/\sqrt{\rho_\xi} \leq 1$ a.e. in the support of $g$. Hence, the product and chain rule for derivatives together with~\eqref{eq:Lipschitz approx} and Cauchy Schwarz yields
\begin{align*}
    \norm{ \zeta_{j,k}}_{\mH^1}^2 \lesssim_k \norm{\xi_j}_{\mH^1}^2 + \int_\Omega \frac{|\xi_j|^2}{\rho_\xi} |\nabla \sqrt{\rho_\xi}|^2,
\end{align*}
with a constant independent of $j$. Since $\rho_\xi = \sum_j |\xi_j|^2$, we conclude that the sequence $\zeta_k \coloneqq \{\zeta_{j,k}\}_{j \geq 1}$ belongs to $\ell^2(\mH^1_0)$.

\textbf{3rd Step: Approximating the target function.} We now note that
\begin{align*}
    B_\xi(\zeta_k) = \sum_{j \geq 1} g(\eta - \eta_k) T_k(\sqrt{\rho_\xi})^2 |\xi_j|^2 = g(\eta - \eta_k) T_k(\sqrt{\rho_\xi})^2 \sum_{j\geq 1} |\xi_j|^2 = g(1 - \eta_k),
\end{align*}
because $T_k(\sqrt{\rho_\xi}) = 1/\sqrt{\rho_k}$ a.e. in $x\in K\setminus E_k$, $\eta - \eta_k = 0$ everywhere in $E_k$, and $\eta =1$ on $K= \mathrm{supp}(g)$. Therefore,~\eqref{eq:E_n localizer} and~\eqref{eq:zero capacity limit} implies that
\begin{align*}
    \norm{g-B_\xi \zeta_k}_{\mH^1} = \norm{\eta_k g}_{\mH^1} \lesssim \norm{g}_{\mW^{1,\infty}} \norm{\eta_k}_{\mH^1} \rightarrow 0 \quad \mbox{as $k\rightarrow \infty$,}
\end{align*}
which completes the proof.
\end{proof}

The next lemma we need is the following algebraic identity.
\begin{lemma}[Expectation against the commutator with single-particle excitations] \label{lem:single excitation}
Let $\Gamma$ be a $\mathcal{Q}_n$ state and $v\in \mathcal{V}_{\rm fb}(\Omega)$. Then for any $f,g \in \mH^1_0(\Omega)$ the operator $[\hat{V}, a(g)^\ast a(f)]$, where $a(g)^\ast, a(f)$ are the standard creation and annihilation operators, satisfy
\begin{align}
    \mathrm{Tr}   [\hat{V}, a(g)^\ast a(f)] \Gamma = \inner{\hat{v} g,  \gamma_\Gamma f} - \inner{\hat{v} \gamma_\Gamma g,  f}, \label{eq:commutator id}
\end{align}
where $\gamma_\Gamma$ is the single-particle density matrix of $\Gamma$.
\end{lemma}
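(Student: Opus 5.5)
The plan is to reduce to pure states and then to a second-quantized computation that closes up into single-particle density matrix terms. Since $\Gamma=\sum_j\lambda_j|\Psi_j\rangle\langle\Psi_j|$ with $\sum_j\lambda_j\norm{\Psi_j}_{\mH^1}^2<\infty$, and both sides of~\eqref{eq:commutator id} are linear in $\Gamma$ (the right-hand side through $\gamma_\Gamma=\sum_j\lambda_j\gamma_{\Psi_j}$), we interpret the left-hand side as
\begin{align*}
\sum_j\lambda_j\Bigl(\inner{\hat V\Psi_j, a(g)^\ast a(f)\Psi_j}-\inner{a(f)^\ast a(g)\Psi_j,\hat V\Psi_j}\Bigr).
\end{align*}
Here $a(g)^\ast a(f)$ maps $\mathcal{Q}_n$ boundedly into $\mathcal{Q}_n$ when $f,g\in\mH^1_0(\Omega)$: it is a finite sum of operators $|g\rangle\langle f|$ acting in one variable and then antisymmetrized. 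Thus each term is a well-defined pairing of $\hat V\Psi_j\in\mathcal{Q}_n^\ast$ with an element of $\mathcal{Q}_n$, and the sum converges by Proposition~\ref{prop:continuous forms}. It therefore suffices to prove the identity for a single $\Psi\in\mathcal{Q}_n$, and by continuity of both sides in $\mathcal{Q}_n$ (and in $f,g\in\mH^1_0$), even for $\Psi$ a finite linear combination of Slater determinants of $C^\infty_c$ orbitals and for $f,g\in C^\infty_c(\Omega)$.

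Next I would approximate $v$ by nicer potentials. For a bounded (or even $C^\infty_c$) potential $v$, the formal relation
\begin{align*}
[\hat V, a(g)^\ast a(f)]=a(\hat v g)^\ast a(f)-a(g)^\ast a(\hat v f)
\end{align*}
holds, which is the standard commutator of a one-body operator $\mathrm{d}\Gamma(\hat v)$ with $\mathrm{d}\Gamma(|g\rangle\langle f|)$, equal to $\mathrm{d}\Gamma([\hat v,|g\rangle\langle f|])$. Taking the expectation and using the defining property~\eqref{eq:trace duality pairing} of $\gamma_\Psi$ with $a=|\hat v g\rangle\langle f|-|g\rangle\langle \hat v f|$ gives
\begin{align*}
\tr\bigl(|\hat v g\rangle\langle f|\gamma\bigr)-\tr\bigl(|g\rangle\langle\hat v f|\gamma\bigr)=\inner{\hat v g,\gamma f}-\inner{\hat v\gamma g,f}.
\end{align*}
The second equality uses the symmetry of $\hat v$ and the convention for $\inner{\cdot,\cdot}$; I would sort out the precise placement of conjugations carefully here.

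For general distributional $v\in\mathcal{V}_{\rm fb}(\Omega)$, I would avoid approximating $v$ and instead do the computation directly at the level of forms, which seems cleaner. Because everything is multilinear, it suffices to evaluate $\inner{\hat V\Phi,\Psi'}$ for Slater determinants $\Phi,\Psi'$ of smooth orbitals, using the form $\mathbf{V}(\Phi,\Psi')=\inner{v,\rho_{\Phi\Psi'}}$. For determinants, $\rho_{\Phi,\Psi'}$ is an explicit finite sum of products $\overline{\phi_i}\psi_k$ times overlaps, so the pairing reduces to one-body matrix elements $\inner{\hat v\phi_i,\psi_k}$. This turns the identity into the finite-dimensional algebraic identity for $\mathrm{d}\Gamma$ on the span of finitely many orbitals containing $f$, $g$ and the orbitals of $\Psi$. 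On that span, $\hat v$ compressed by the orthogonal projection $P$ is a Hermitian matrix, and the identity is exactly the matrix case. One checks that only matrix elements within the span appear, because $a(g)^\ast a(f)\Psi$ lies in the fermionic space built from that span.

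Finally I would extend by density. Both sides of~\eqref{eq:commutator id} are continuous in $\Psi\in\mathcal{Q}_n$: the right-hand side is continuous because $\Psi\mapsto\gamma_\Psi\in\mathcal{B}(\mH^{-1}_0,\mH^1_0)$ is continuous on $\mathcal{Q}_n$ bounded sets, as seen from~\eqref{eq:finite kinetic energy}. Summing over $j$ with weights $\lambda_j$ then finishes the argument.

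The main obstacle is the bookkeeping. One must check that the finite-span reduction is legitimate, namely that the compression $P\hat vP$ reproduces exactly the matrix elements appearing in the form. One must also verify the continuity of $\Psi\mapsto\gamma_\Psi$ into $\mathcal{B}(\mH^{-1}_0,\mH^1_0)$ needed to pass to the limit.
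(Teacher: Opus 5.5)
Your proposal is correct, but it reaches the identity by a different route from the paper.

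\textbf{The paper's route.} The paper never approximates. After checking that $b_{g,f}=a(g)^\ast a(f)$ and $b_{f,g}$ are bounded on $\mathcal{Q}_n$, it writes the trace as $\sum_j\lambda_j\inner{v,\rho_{b_{g,f}\Psi_j,\Psi_j}-\rho_{\Psi_j,b_{f,g}\Psi_j}}$. It then computes these transition densities in coordinates from the explicit formulas for $a(f)$ and $a(g)^\ast$. The exchange terms (the sum over $j=2,\dots,n$) cancel between the two densities. This leaves the pointwise identity $\rho_{b_{g,f}\Psi,\Psi}-\rho_{\Psi,b_{f,g}\Psi}=\overline{g}\,\gamma_\Psi f-f\,\overline{\gamma_\Psi g}$ for every $\Psi\in\mathcal{Q}_n$, which is then paired with $v$.

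\textbf{How your route differs.} That cancellation is exactly the coordinate form of $[\mathrm{d}\Gamma(A),\mathrm{d}\Gamma(B)]=\mathrm{d}\Gamma([A,B])$. You instead import it from finite-dimensional Fock space. You first reduce, by density and continuity, to finite combinations of Slater determinants of $C^\infty_c$ orbitals and to $f,g\in C^\infty_c(\Omega)$. On $\bigwedge^n S$ the $n$-particle form is literally the second quantization of the Hermitian matrix $\bigl(\inner{v,\overline{e_i}e_k}\bigr)_{i,k}$.

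\textbf{What each buys.} The paper's argument is shorter and gives a density-level identity valid without approximation. However, it tacitly relies on a compatibility it does not state: pairing $v$ with non-smooth products such as $\overline{g}\,\gamma_\Psi f$, defined through the extended $n$-body form, must agree with the extended one-body form $\inner{\hat v g,\gamma_\Psi f}$. Your argument only ever pairs $v$ with genuine test functions, so this compatibility comes for free. The price is the density step and joint continuity of both sides in $(\Psi,f,g)$.

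\textbf{A correction on continuity.} Equation~\eqref{eq:finite kinetic energy} gives only boundedness, not continuity. What you actually need is continuity of $\Psi\mapsto\gamma_\Psi f\in\mH^1_0(\Omega)$. This follows from the sesquilinear bound $\norm{\gamma_{\Psi,\Phi}f}_{\mH^1}\le n\norm{\Psi}_{\mH^1}\norm{\Phi}_{\mL^2}\norm{f}_{\mL^2}$ for the transition density matrix with kernel $n\int\Psi(x,X)\overline{\Phi(y,X)}\,\mathrm{d}X$, together with $\gamma_\Psi-\gamma_{\Psi'}=\gamma_{\Psi-\Psi',\Psi}+\gamma_{\Psi',\Psi-\Psi'}$.
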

\begin{proof} From the definition of the annihilation operator
\begin{align}
    \left(a(f)\Psi\right) (x_1,...,x_{n-1}) = \sqrt{n} \int_{\Omega} \overline{f(x)} \Psi(x,x_1,...,x_{n-1}) \mathrm{d} x \label{eq:annihilation operator}
\end{align}
it follows from Cauchy-Schwarz that
\begin{align*}
    \norm{a(f) \Psi}_{\mH^1(\Omega^{n-1})} \leq \sqrt{n} \norm{f}_{\mL^2} \norm{\Psi}_{\mH^1(\Omega^n)}.
\end{align*}
Hence $a(f)$ maps $\mathcal{Q}_n$ to $\mathcal{Q}_{n-1}$. Similarly, the creation operator 
\begin{align}
    \left(a(g)^\ast \Psi\right)(x_1,..,x_n) = \frac{1}{\sqrt{n}} \sum_{j=1}^n g(x_j) \Psi(x_1, ... \hat{x_j},...,x_n), \label{eq:creation operator}
\end{align}
where $\hat{x_j}$ means that $x_j$ is omitted, satisfy
\begin{align*}
    \norm{a(g)^\ast \Psi}_{\mH^1} \leq \sqrt{n}\norm{\Psi}_{\mH^1(\Omega^{n-1})} \norm{g}_{\mH^1(\Omega)}.
\end{align*}
Hence, the operators $b_{g,f} \coloneqq a(g)^\ast a(f)$ and $b_{f,g} = a(f)^\ast a(g)$ are bounded from $\mathcal{Q}_n$ to $\mathcal{Q}_n$. In particular, the trace of $\Gamma$ against the commutator $[\hat{V}, b_{g,f}]$ is well-defined via
\begin{align}
    \tr\, [\hat{V}, b_{g,f}] \Gamma &= \sum_{j\geq 1} \lambda_j \left(\inner{\hat{V} b_{g,f} \Psi_j,\Psi_j} - \inner{\hat{V} \Psi_j,b_{f,g} \Psi_j}\right) \nonumber \\
    &= \sum_{j\geq 1} \lambda_j \inner{v, \rho_{b_{g,f} \Psi_j,\Psi_j} - \rho_{\Psi_j,b_{f,g} \Psi_j}}, \label{eq:trace commutator id}
\end{align}
where $\Gamma = \sum_{j\geq 1} \lambda_j |\Psi_j\rangle \langle \Psi_j|$ is the spectral decomposition of $\Gamma$. Moreover, from the expressions in~\eqref{eq:annihilation operator} and~\eqref{eq:creation operator} we find
\begin{align*}
    \rho_{b_{g,f}\Psi, \Psi}(x) =& n\overline{g(x)} \int_{\Omega^n} f(y) \overline{\Psi(y,X)} \Psi(x,X) \mathrm{d} X \mathrm{d} y\\
    &+ n\sum_{j=2}^n \int_{\Omega^n} \Psi(x,x_2,...x_n)\overline{g(x_j)} f(y) \overline{\Psi(y,x,...,\hat{x}_j,...,x_n)} \mathrm{d} x_2...\mathrm{d} x_n \mathrm{d} y \\
    =& \overline{g(x)}(\gamma_\Psi f)(x) + n \sum_{j=2}^n \int_{\Omega^n} \Psi(y,x,x_2,..\hat{x}_j,...,x_n)\overline{g(y)} f(x_j)\overline{\Psi(x,x_2,...,x_n)}\mathrm{d} x_2...\mathrm{d} x_n \mathrm{d} y,
\end{align*}
and therefore,
\begin{align*}
    \left(\rho_{b_{g,f} \Psi, \Psi} - \rho_{\Psi, b_{f,g} \Psi}\right)(x) = \overline{g(x)}\left(\gamma_\Psi f\right)(x)- f(x) \left( \overline{\gamma_\Psi g}\right)(x),
\end{align*}
Inserting this identity in~\eqref{eq:trace commutator id} completes the proof.
\end{proof}

The last lemma we shall use is an important result of Fuglede \cite{Fug71a,Fug71b} showing that the quasi-topology is locally connected (see also \cite{AL85} for a more general version for $p$-capacities). 
\begin{lemma}[Quasi-connectedness \cite{Fug71b,AL85}] \label{lem:quasi-connected} Let $\Omega\subset \R^d$ be open and connected. Let $A,B\subset \Omega$ be quasi-open sets satisfying $\Cp(\Omega\setminus (A\cup B)) = 0$ and $\Cp(A\cap B) = 0$. Then either $\Cp(A) = 0$ or $\Cp(B) = 0$. In particular, any quasi-continuous function $f:\Omega \rightarrow \C$ assuming only finitely many values, i.e., $\# f(\Omega) < \infty$ is quasi-everywhere constant.
\end{lemma}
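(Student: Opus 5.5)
The plan is to reduce the statement to the fact that an $\mH^1_{\rm loc}$ function with vanishing gradient on a connected open set is constant. The object to work with is the indicator function $u = \mathbb{1}_A$.

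\textbf{Step 1: $u$ is quasi-continuous.} Fix $\varepsilon>0$. Since $A$ and $B$ are quasi-open, there are open sets $\omega_1,\omega_2$ with $\Cp(\omega_i)<\varepsilon$ such that $A\cup\omega_1$ and $B\cup\omega_2$ are open. Since $\Omega\setminus(A\cup B)$ and $A\cap B$ are polar, there is an open $\omega_3$ with $\Cp(\omega_3)<\varepsilon$ containing both sets. Put $\omega=\omega_1\cup\omega_2\cup\omega_3$, so $\Cp(\omega)<3\varepsilon$ by subadditivity. Then $A\cup\omega$ and $B\cup\omega$ are open, they cover $\Omega$, and $(A\cup\omega)\cap(B\cup\omega)=\omega$ because $A\cap B\subset\omega$. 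Hence $A\setminus\omega$ and $B\setminus\omega$ are disjoint, relatively open, and cover $\Omega\setminus\omega$. It follows that $u$ is continuous on $\Omega\setminus\omega$, so $u$ is quasi-continuous.

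\textbf{Step 2: $u\in\mH^1_{\rm loc}$ with $\nabla u=0$ (main step).} Let $e=e_\omega\in\mH^1_0(\Omega)$ be the capacitary potential of $\omega$. It satisfies $0\le e\le 1$, $e=1$ a.e. on $\omega$, and $\norm{e}_{\mH^1}^2=\Cp(\omega)$; this is standard, since $\Cp(\omega)<\infty$.

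Consider $u_\varepsilon=(1-e)\mathbb{1}_A$. On the open set $A\cup\omega$ this equals $1-e$, because $1-e=0$ on $\omega$. On the open set $B\cup\omega$ it equals $0$. The two descriptions agree on the overlap $\omega$. Since weak derivatives are local and these two open sets cover $\Omega$, we get $u_\varepsilon\in\mH^1_{\rm loc}(\Omega)$ with $\nabla u_\varepsilon=-\mathbb{1}_{A\cup\omega}\nabla e$. Hence
\begin{align*}
    \norm{\nabla u_\varepsilon}_{\mL^2}\le\Cp(\omega)^{1/2}\lesssim\varepsilon^{1/2},
\end{align*}
and $\norm{u_\varepsilon-u}_{\mL^2}\le\norm{e}_{\mL^2}\lesssim\varepsilon^{1/2}$.

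Letting $\varepsilon\downarrow 0$, $u_\varepsilon\to u$ in $\mL^2$ while $\nabla u_\varepsilon\to0$ in $\mL^2$. Since weak gradients are closed under this convergence, $u\in\mH^1_{\rm loc}(\Omega)$ with $\nabla u=0$. As $\Omega$ is connected, $u$ is a.e. equal to a constant $c\in\{0,1\}$.

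I expect this gluing construction to be the main obstacle. Quasi-continuity alone does not give Sobolev regularity, and the point of multiplying by $1-e_\omega$ is precisely that it turns the quasi-open decomposition into a genuine open cover on which the function is Sobolev.

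\textbf{Step 3: from a.e. to q.e.} Both $u$ and the constant $c$ are quasi-continuous. By Proposition~\ref{prop:from q.e to a.e.}, $u=c$ quasi-everywhere. If $c=0$, then $A=\{u=1\}$ is polar. If $c=1$, then $B\setminus A\subset\{u=0\}$ is polar, and $B\cap A$ is polar by assumption, so $B$ is polar.

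\textbf{The finitely-valued case.} Let $f(\Omega)=\{c_1,\dots,c_m\}$ with the $c_i$ distinct. Choose disjoint open discs $D_i\ni c_i$; by Proposition~\ref{prop:basic quasi-continuous}~\ref{it:preimage}, each $A_i=f^{-1}(c_i)=f^{-1}(D_i)$ is quasi-open. Apply the first part to $A=A_1$ and $B=\bigcup_{i\ge2}A_i$, which is quasi-open by Proposition~\ref{prop:quasi-open sets}. These sets are disjoint and cover $\Omega$, so either $A_1$ is polar, or $B$ is polar and then $f=c_1$ q.e. In the former case, the values $c_2,\dots,c_m$ are attained outside a polar set, and we induct on $m$, absorbing the polar set $A_1$ into the exceptional set. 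This shows that $f$ is quasi-everywhere constant.
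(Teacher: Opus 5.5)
Your proof is correct, but for the two-set statement it takes a different route from the paper. The paper does not prove that part. It quotes \cite[Corollary 1]{AL85}, which is formulated for the Bessel capacity, and transfers it to $\Cp$ via Proposition~\ref{prop:same quasi-topology}. Only the finitely-valued consequence is actually argued there, by the same level-set decomposition you use.

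You instead give a self-contained Sobolev proof. You enlarge the exceptional set so that $A\cup\omega$ and $B\cup\omega$ form an open cover of $\Omega$ with overlap exactly $\omega$. Then $(1-e_\omega)\mathbb{1}_A$ is locally either $1-e_\omega$ or $0$, so it lies in $\mH^1_{\rm loc}(\Omega)$ with $|\nabla((1-e_\omega)\mathbb{1}_A)|\le|\nabla e_\omega|$. Letting $\Cp(\omega)\to 0$ shows that $\mathbb{1}_A$ has zero distributional gradient. Euclidean connectedness and the quasi-continuity of $\mathbb{1}_A$ from your Step~1 then finish the argument.

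What your route buys:
\begin{enumerate*}[label=(\roman*)]
\item it works directly with $\Cp$, so no comparison of quasi-topologies is needed;
\item it uses only the kind of capacitary cut-off already used in Lemma~\ref{lem:dense range};
\item it makes clear that the only topological input is that $\Omega$ is open and connected.
\end{enumerate*}
What the citation buys is brevity and generality. Fuglede's and Adams--Lewis's results cover $p$-capacities and finely open or quasi-open sets. There the last step of your argument (zero gradient on a connected open set implies constant) has no direct analogue. This would matter, for instance, for the quasi-open domains mentioned in Remark~\ref{rem:extensions}.

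Two details are worth writing out.
\begin{enumerate*}[label=(\arabic*)]
\item With the paper's definition of $\Cp$ on open sets (supremum over compacts), you need a potential $e_\omega\in\mH^1_0(\Omega)$ with $0\le e_\omega\le 1$, $e_\omega=1$ a.e.\ on $\omega$ and $\norm{e_\omega}_{\mH^1}^2\le\Cp(\omega)$. It comes from near-optimal test functions for an exhaustion of $\omega$ by compact sets, followed by a weak limit (Mazur) and truncation. Only this inequality is needed, not equality.
\item In the finitely-valued induction you need not modify $f$. The two-set statement applies directly to $A_2$ and $\bigcup_{i\ge 3}A_i$, whose union misses only the polar set $A_1$.
\end{enumerate*}
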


\begin{proof}
    The first statement can be found in \cite[Corollary 1]{AL85}. Note that this reference uses the Bessel capacity instead of the relative one, but since both capacities define the same polar and quasi-open sets in $\Omega$ by Proposition~\ref{prop:same quasi-topology}, the result still holds as stated. For the last statement suppose that $f(\Omega) = \{\lambda_1,..., \lambda_m\}$ and let $O_j\subset \C$ be open sets such that $O_j \cap f(\Omega) = \{\lambda_j\}$. Then the sets $f^{-1}(\{\lambda_j\}) = f^{-1}(O_j)$ are quasi-open by Proposition~\ref{prop:basic quasi-continuous}~\ref{it:preimage}, disjoint, and their union cover $\Omega$. Thus all but one of them is polar and $f$ is q.e. constant.
\end{proof}

We can now complete the proof of Theorem~\ref{thm:regular states}.

\begin{proof}[Proof of the sufficient conditions from Theorem~\ref{thm:regular states}] Suppose that $\Gamma$ is a $\mathcal{Q}_n$ state with $\sqrt{\rho_\Gamma}^\ast >0$ q.e.. Let $v$ be a Laplace form-bounded distribution such that
\begin{align}
    \hat{V} \Gamma = 0. \label{eq:HK eq}
\end{align}
Then our goal is to show that $v = 0$. For this, we proceed in three steps.

\textbf{1st Step: mutual spectral decomposition of $\hat{v}$ and $\gamma_\Gamma$.} Let $\gamma_\Gamma$ be the single-particle density matrix of $\Gamma$ and define the occupation spaces 
\begin{align}
    E_\lambda \coloneqq \ker (\gamma_\Gamma - \lambda) \quad \mbox{for $\lambda \in (0,1]$.}
\end{align} 
Then, from Lemma~\ref{lem:single excitation} and~\eqref{eq:HK eq}, we have
\begin{align*}
    \inner{\hat{v} g, \gamma_\Gamma \phi} = \lambda \inner{\hat{v} g, \phi} = \inner{ \hat{v} \gamma_\Gamma g,  \phi}, \quad \mbox{for any $g\in \mH^1_0(\Omega)$ and $\phi \in E_\lambda$.}
\end{align*}
This implies that $\gamma_\Gamma \hat{v} \phi = \lambda \hat{v} \phi$ for any $\phi \in E_\lambda$, where $\gamma_\Gamma$ is the extension to $\mathcal{B}(\mH^{-1}_0, \mH^1_0)$ in Proposition~\ref{prop:single-particle density matrix}. In particular, $\hat{v}(E_\lambda) \subset \ker (\gamma_\Gamma - \lambda) = E_\lambda$. As $\gamma_\Gamma$ is compact, every $E_\lambda$ is finite dimensional. Thus, the restriction $\hat{v}\rvert_{E_\lambda}$ is a self-adjoint operator in $E_\lambda$. We can therefore find an orthonormal basis $\{\phi_j\}$ of $E_\lambda$ such that $\hat{v} \phi_j = a_j \phi_j$. As $\lambda$ is arbitrary, we can find an orthonormal basis $\{\phi_j\}_{j \in J}$ of $\overline{\oplus_{\lambda \in (0,1]} E_\lambda} = \ker \gamma_\Gamma^\perp$ such that
\begin{align}
    \hat{v} \phi_j = a_j \phi_j \quad \mbox{for some $a_j \in \R$.} \label{eq:eigenvalue equation}
\end{align}

\textbf{2nd Step: quasi-connectedness and locality synchronizes the spectral blocks.} We now use the locality of $\hat{v}$ and Lemma~\ref{lem:quasi-connected} to prove that $\hat{v}$ is constant on $\ker \gamma^\perp$. Precisely, by locality,
\begin{align*}
    \inner{\hat{v} \phi_j, \phi_k \eta} = \mathbf{v}(\phi_j, \phi_k \eta) = \inner{v, \overline{\phi_j} \eta \phi_k} = \inner{ \phi_j \overline{\eta}, \hat{v} \phi_k}, \quad \mbox{for any $\eta \in C^\infty_c(\Omega)$.}
\end{align*}
Therefore,
\begin{align*}
    (a_j-a_k) \inner{\phi_j, \eta \phi_k} = \inner{\hat{v} \phi_j, \eta \phi_k}- \inner{\phi_j \overline{\eta}, \hat{v} \phi_k} = 0 \quad \mbox{for any $\eta \in C^\infty_c(\Omega)$.} 
\end{align*}
In particular, $(a_j-a_k) \phi_j \overline{\phi_k} = 0$ a.e. for any $j,k\in J$. By Proposition~\ref{prop:from q.e to a.e.}, we then have
\begin{align}
    (a_j-a_k) \phi_k^\ast(x) \phi_j^\ast(x) = 0 \quad \mbox{q.e. in $\Omega$ for any $j,k \in J$.} \label{eq:mutual exclusion}
\end{align}
We now define the sets
\begin{align}
    U_a \coloneqq \bigcup_{\substack{j \in J \\ a_j = a}} \{ x: \phi_j(x)^\ast \neq 0 \},  \quad \mbox{for every value $a\in \{a_j\}_{j \in J}$,} \label{eq:Ua def}
\end{align}
and make three important observations about these sets. First, by Propositions~\ref{prop:quasi-open sets} and~\ref{prop:basic quasi-continuous}, these sets are quasi-open. Second, since $\rho_\Gamma^\ast(x) >0$ q.e. in $\Omega$ by assumption, their union cover the set $\Omega$ up to a polar set. Third, by~\eqref{eq:mutual exclusion}, the intersection $U_a \cap U_b$ is polar for any $a\neq b$. Therefore, we can apply Lemma~\ref{lem:quasi-connected} to conclude that all but one set $U_c$ is polar. Since each $\phi_j$ is non-trivial, we conclude that $\{a_j\}_{j \in J} = \{c\}$. Therefore,
\begin{align}
    \hat{v} \phi_j = c \phi_j, \quad \mbox{for any $j \in J$.} \label{eq:localization of v}
\end{align}

\textbf{3rd Step: identification as a distribution.} We now show that $v = 0$ as follows. First, define $\xi_j \coloneqq \sqrt{n_j} \phi_j$, where $n_j$ is the occupation number of $\phi_j$ (i.e., $\phi_j \in \ker \gamma_\Gamma - n_j$), and note that
\begin{align*}
    \sum_{j} \norm{\xi_j}_{\mH^1}^2 = \tr (-\Delta + 1) \gamma_\Gamma < \infty \quad \mbox{and}\quad \rho_\xi = \sum_{j\in J} |\xi_j|^2 = \rho_\Gamma. 
\end{align*}
Then by~\eqref{eq:localization of v}, for any $\eta \in C^\infty_c(\Omega)$ we have
\begin{align*}
    \inner{v \eta, B_\xi(\zeta)} = \sum_{j \in J} \inner{v, \overline{\eta} \xi_j \overline{\zeta}_j} = \sum_{j\in J} \inner{\eta \zeta_j, \hat{v} \xi_j} = c \sum_{j\in J} \inner{\eta \zeta_j, \xi_j} =  c \inner{\eta,B_\xi \zeta}. 
\end{align*}
Since $\rho_\xi^\ast >0$ q.e. by assumption, we can use the dense range Lemma~\ref{lem:dense range} to conclude that $v \eta = c \eta$ as a distribution in $\mH^{-1}_0(\Omega)$. As $\eta$ is arbitrary, we then conclude that $v = c$ in $\mathcal{D}'(\Omega)$. In particular, $\hat{V} = n c$, and therefore, $c = 0$ by~\eqref{eq:HK eq}.
\end{proof}

\subsection{Proof of Theorem~\ref{thm:HK criteria}} \label{sec:proof HK criteria}
We now show how Theorem~\ref{thm:regular states} combined with the classical Hohenberg-Kohn variational argument immediately implies Theorem~\ref{thm:HK criteria}.
\begin{proof}[Proof of Theorem~\ref{thm:HK criteria}] Suppose $\Gamma$ is a ground state of $H_n(v,w)$ satisfying~\eqref{eq:positive condition}. Then let $v'\in \mathcal{V}_{\rm fb}(\Omega)$ be such that $H_n(v',w)$ has a ground state $\Gamma'$ with density $\rho_\Gamma' = \rho_\Gamma =: \rho$. Then by the variational principle
\begin{align*}
    \mathrm{tr} H_n(v',w) \Gamma = \mathrm{tr} H_n(v,w) \Gamma + \inner{v'-v,\rho} \leq \mathrm{tr} H_n(v,w)\Gamma' + \inner{v'-v,\rho} = \mathrm{tr} H_n(v',w) \Gamma',
\end{align*}
which shows that $\Gamma$ is a ground state of $H_n(v',w)$. Moreover, note that, although the forms are not necessarily closed, the ground-state equations 
\begin{align*}
    \left(H_n(v,w) - E_{v,w}(\Gamma)\right)\Gamma = 0 \quad \mbox{and}\quad \left(H_n(v',w) - E_{v',w}(\Gamma)\right) \Gamma = 0
\end{align*}
are the Euler-Lagrange equations of minimization of the energy, and therefore still holds. Thus, assuming both ground state energies to be $0$ (which is possible without loss of generality by shifting the potentials by a constant), from the ground state equation we have $(\hat{V}-\hat{V}') \Gamma = 0$.  As $\Gamma$ is a regular state by Theorem~\ref{thm:regular states}, we conclude that $v=v'$.

For the converse direction, suppose that $\rho_\Gamma$ vanishes on a set of positive capacity. Now let $u\in \mathcal{V}(\Omega)\setminus \{0\}$ be as in Theorem~\ref{thm:regular states}. Then it follows that $H_n(v+u,w) \Gamma = H_n(v,w) \Gamma + \hat{U} \Gamma = 0$. As $u\geq 0$, this implies that $\Gamma$ is a ground state of $H_n(v+u,w)$. Moreover, as $u\neq 0$, there exists a state $\tilde{\Gamma}$ such that $\inner{u, \rho_{\tilde{\Gamma}}} > 0$. In particular, $\mathrm{tr} H_n(v+u,w) \tilde{\Gamma} = \mathrm{tr} H_n(v,w) \tilde{\Gamma} + \inner{u, \rho_{\tilde{\Gamma}}} > \mathrm{tr} H_n(v,w) \tilde{\Gamma}$. Therefore $H_n(v+u,w)$ is a different operator than $H_n(v,w)$ but they have the same ground state density. This shows that the HK theorem does not hold and concludes the proof.
\end{proof}
\section{Quasi-positivity of the density for non-interacting systems}
\label{sec:positive density}

We now turn to the proof of Theorem~\ref{thm:density UCP}. The key step in this proof is the following lemma, whose proof relies on an adaptation of an argument in Orsina and Ponce \cite[Proposition 2.1]{OP16} (see also \cite{BP03}).

\begin{lemma}[Maximum principle for Schr\"odinger operators with form-bounded potentials] \label{lem:maximum principle}
Let $v\in \mathcal{D}'(\Omega;\R)$ be such that 
\begin{align}
    |\inner{v,|\phi|^2}| \leq a \norm{\nabla \phi}_{\mL^2}^2 + C\norm{\phi}_{\mL^2}^2, \quad \mbox{for any $\phi \in C^\infty_c(\Omega)$ and some $a>0$ and $C\geq 0$.} \label{eq:a<1 bound}
\end{align}
Suppose that there exists a minimizer $\psi$ of
\begin{align}
    \lambda(v) \coloneqq \inf_{\psi \in \mH^1_0(\Omega)\setminus \{0\}} \left\{\frac{\int_\Omega |\nabla \psi|^2 + \inner{v,|\psi|^2}}{\norm{\psi}_{\mL^2}^2} \right\}. \label{eq:form h(v)}
\end{align}
Then the minimizer is unique (up to a multiplicative constant) and satisfies
\begin{align}
    \Cp\left(\{x\in \Omega : \psi^\ast(x) = 0\}\right) = 0,
\end{align}
where $\psi^\ast$ is the precise representative of $\psi$.
\end{lemma}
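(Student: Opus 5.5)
First, I reduce to a non-negative minimizer. Since $v$ is real, the form $\mathbf{h}(\psi)=\int|\nabla\psi|^2+\inner{v,|\psi|^2}$ is real. For a minimizer $\psi$, the functions $\mathrm{Re}\,\psi$ and $\mathrm{Im}\,\psi$ are again minimizers (when non-zero), and $|\psi|$ is a minimizer as well: $|\nabla|\psi||\le|\nabla\psi|$ a.e. and $\inner{v,|\psi|^2}$ depends only on $|\psi|$. The latter identity is justified by approximation through Proposition~\ref{prop:continuous forms}. So it suffices to treat a real minimizer $u\ge0$ (replacing $\psi$ by $|\psi|$) and to show $\Cp(\{u^\ast=0\})=0$. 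Once this is done for every non-negative minimizer, uniqueness follows by the standard argument. If $\psi_1,\psi_2$ are real minimizers, $\psi_1-t\psi_2$ is a minimizer for every $t$, hence so is $|\psi_1-t\psi_2|$. Choosing $t$ so that $\psi_1-t\psi_2$ vanishes on a set of positive capacity, its modulus is a non-negative minimizer with a non-polar zero set unless $\psi_1=t\psi_2$. A convenient choice: if $\psi_1,\psi_2$ are not proportional, both are q.e.\ non-zero (by the positivity result applied to $|\psi_i|$), the minimizer space is at least two dimensional, and I take a real combination $\chi$ orthogonal to $|\psi_1|$. Then $|\chi|$ is a non-negative minimizer, so $\chi^\ast\ne0$ q.e. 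A real quasi-continuous function that is q.e.\ non-zero has sign q.e.\ constant by Lemma~\ref{lem:quasi-connected}, applied to the quasi-open sets $\{\chi^\ast>0\}$ and $\{\chi^\ast<0\}$. Hence $\int\chi|\psi_1|\ne0$, a contradiction.

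The main step is the positivity itself, following Orsina--Ponce. Suppose $Z=\{u^\ast=0\}$ has positive capacity. The Euler--Lagrange equation gives $\int\nabla u\cdot\nabla\phi+\inner{v,u\phi}=\lambda\int u\phi$ for $\phi\in\mH^1_0$. I test it with $\phi=\varphi^2/(u+\delta)$, where $\varphi\in C^\infty_c(\Omega)$ and $\delta>0$ (after shifting $v$ by a constant, we may assume $\lambda=0$). This is the Picone/logarithmic trick. Writing $w_\delta=\log(u+\delta)$, one gets
\[
\int\varphi^2|\nabla w_\delta|^2\le 2\int\varphi|\nabla\varphi||\nabla w_\delta|+\Big|\inner{v,\tfrac{u}{u+\delta}\varphi^2}\Big|.
\]
The potential term is the delicate one. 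I write $\frac{u}{u+\delta}\varphi^2=\big(\varphi\sqrt{u/(u+\delta)}\big)^2$ and apply the form bound~\eqref{eq:a<1 bound}. The gradient of $\varphi\sqrt{u/(u+\delta)}$ involves $\varphi\sqrt{\delta u}\,|\nabla u|/(u+\delta)^{3/2}\le\varphi|\nabla w_\delta|/2$, so the right-hand side is at most $\tfrac a4\int\varphi^2|\nabla w_\delta|^2$ plus terms controlled by $\norm{\varphi}_{\mH^1}^2$. This gives an estimate $\int\varphi^2|\nabla w_\delta|^2\le C_\varphi$ uniform in $\delta$, provided $a$ is small enough. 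This is the main obstacle. To handle general $a$, I would instead rescale using the energy-minimizing property: $u$ minimizes $\mathbf h$, so testing against competitors $u e^{t\theta}$ should give the same bound with constants independent of $a$. Alternatively I would replace $v$ by an infinitesimally bounded piece plus an $\mL^\infty$ piece when $v\in\mathcal{V}(\Omega)$, and localize so that only small-$a$ parts matter.

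With the uniform local bound on $\nabla\log(u+\delta)$ in hand, the conclusion is as follows. On a ball $B\subset\subset\Omega$ meeting $Z$ in positive capacity, consider $h_\delta=\log(u+\delta)-\log\delta\ge0$. It has uniformly bounded Dirichlet energy on $B$ and vanishes q.e.\ on $Z\cap B$. By the Poincar\'e inequality for functions vanishing on a set of positive capacity, $\norm{h_\delta}_{\mL^2(B)}\le C$ uniformly. But $h_\delta\to\infty$ on $\{u>0\}\cap B$ as $\delta\to0$, so $u=0$ a.e.\ on $B$. By connectedness, a chain-of-balls argument (again via Lemma~\ref{lem:quasi-connected}, or by covering) propagates $u=0$ a.e.\ to all of $\Omega$, contradicting $u\ne0$. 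Hence $\Cp(Z)=0$.
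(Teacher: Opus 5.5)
\textbf{Gap in the potential-term estimate.} Your overall strategy is the paper's: test the Euler--Lagrange equation with $\varphi^2/(u+\delta)$, derive a $\delta$-uniform bound on $\int\varphi^2|\nabla w_\delta|^2$ with $w_\delta=\log(u+\delta)$, apply the Poincar\'e inequality to $\log(1+u/\delta)$ (which vanishes q.e.\ on a compact subset of $\{u^\ast=0\}$ of positive capacity), and let $\delta\downarrow 0$. However, the step that carries all the weight, the estimate of the potential term, does not work. There are two problems.

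\emph{The computation is wrong.} One has $\nabla\sqrt{u/(u+\delta)}=\delta\,\nabla u/(2\sqrt{u}\,(u+\delta)^{3/2})$. This exceeds $|\nabla w_\delta|/2$ by the factor $\delta/\sqrt{u(u+\delta)}$, which is unbounded exactly where $u\to 0$, i.e.\ near the zero set you are analysing. In fact $\varphi\sqrt{u/(u+\delta)}$ need not lie in $\mH^1$ at all. On $\{u\le\delta\}$ this would require $\int\varphi^2|\nabla u|^2/u<\infty$, essentially $\sqrt{u}\in\mH^1_{\mathrm{loc}}$, which you do not know a priori. So the form bound~\eqref{eq:a<1 bound} cannot be applied to it.

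\emph{The argument only covers small $a$.} Even an estimate of the shape you claim closes only when $a$ is below a threshold, whereas the lemma allows any $a>0$. This matters because Theorem~\ref{thm:density UCP} uses the lemma for arbitrary $v\in\mathcal{V}_{\rm fb}(\Omega)$. The remedies you sketch do not fill this. The competitors $ue^{t\theta}$ are not worked out. Splitting $v$ into an infinitesimally form-bounded part plus a bounded part is only available for $v\in\mathcal{V}(\Omega)$. Localization does not reduce the relative bound of scale-invariant potentials such as $c|x|^{-2}$.

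\textbf{The missing idea.} Do not write $\frac{u}{u+\delta}\varphi^2$ as a square; instead use the bilinear structure with unequal weights. Set $f=u\varphi/(u+\delta)$. Polarization applied to $\varepsilon^{1/2}f\pm\varepsilon^{-1/2}\varphi$, together with~\eqref{eq:a<1 bound}, gives
\[
|\inner{v,f\varphi}|\le\frac{a}{2}\Big(\varepsilon\norm{\nabla f}_{\mL^2}^2+\varepsilon^{-1}\norm{\nabla\varphi}_{\mL^2}^2\Big)+\frac{C}{2}\Big(\varepsilon\norm{f}_{\mL^2}^2+\varepsilon^{-1}\norm{\varphi}_{\mL^2}^2\Big).
\]
Since $0\le u/(u+\delta)\le 1$ and $|\nabla(u/(u+\delta))|\le|\nabla u|/(u+\delta)$, one has $\norm{\nabla f}_{\mL^2}^2\le 2\int\varphi^2|\nabla w_\delta|^2+2\norm{\nabla\varphi}_{\mL^2}^2$. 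Hence the dangerous term enters with coefficient $a\varepsilon$, which can be absorbed into the left-hand side for any finite $a$. The cost $\varepsilon^{-1}$ falls only on $\delta$-independent norms of $\varphi$, and no square root of $u$ ever appears. This is exactly the paper's estimate, and with it the rest of your proposal goes through.

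\textbf{Uniqueness.} Your uniqueness argument is correct but longer than necessary. Applying the positivity statement to $\psi_+$ (itself a minimizer) forces $\psi_-=0$ a.e. So every real minimizer has a strict sign, and two of them cannot be orthogonal; Lemma~\ref{lem:quasi-connected} is not needed.
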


For the proof of this lemma, we use the following version of the Poincare inequality. As the proof was omitted in \cite{BP03}, we sketch the short argument below.

\begin{lemma}[Poincare inequality] \label{lem:poincare}
Let $\omega \subset \R^d$ be an open bounded and connected Lipschitz domain and suppose that $E\subset \omega$ is a compact subset with positive capacity. Then there exists a constant $C= C(E,\omega)$ such that
\begin{align}
    \int_\omega |u|^2 \mathrm{d} x \leq C\int_\omega |\nabla u|^2 \mathrm{d} x,
\end{align}
for any $u\in \mH^1(\omega)$ with $u^\ast(x) = 0$ q.e. in $E$.
\end{lemma}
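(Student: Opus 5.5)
I will argue by contradiction, using the standard compactness argument behind Poincaré inequalities. Suppose no such constant exists. Then there is a sequence $u_k\in \mH^1(\omega)$ with $u_k^\ast = 0$ q.e. in $E$ such that
\begin{align*}
    \norm{u_k}_{\mL^2(\omega)} = 1 \quad\mbox{and}\quad \norm{\nabla u_k}_{\mL^2(\omega)} \rightarrow 0.
\end{align*}
The sequence is bounded in $\mH^1(\omega)$. Since $\omega$ is a bounded Lipschitz domain, Rellich--Kondrachov lets me pass to a subsequence with $u_k\rightharpoonup u$ weakly in $\mH^1(\omega)$ and $u_k \to u$ strongly in $\mL^2(\omega)$. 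Weak lower semicontinuity gives $\nabla u = 0$. Because $\omega$ is connected, $u \equiv c$ is constant, and $\norm{u}_{\mL^2}=1$ forces $c\neq 0$. In fact $u_k\to u$ strongly in $\mH^1(\omega)$, since the gradients tend to $0$ in $\mL^2$.

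It remains to transfer the condition $u_k^\ast=0$ q.e. on $E$ to the limit. This would give $c=0$ q.e. on $E$, which is impossible because $\Cp(E)>0$. To do this I localize away from $\partial\omega$. Fix a cut-off $\chi\in C^\infty_c(\omega)$ with $\chi=1$ on a neighbourhood of the compact set $E\subset\omega$. Then $\chi u_k\in \mH^1_0(\omega)$, and $\chi u_k \to \chi u = c\chi$ in $\mH^1$, hence also in $\mH^1(\R^d)$ after extension by zero.

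By Theorem~\ref{thm:quasi-representative}~\ref{it:q.e. convergence} (applied in $\omega$, or in any open set containing $\mathrm{supp}\,\chi$), a further subsequence satisfies $(\chi u_k)^\ast \to (c\chi)^\ast = c\chi$ quasi-everywhere. On $E$ we have $(\chi u_k)^\ast = u_k^\ast$ q.e. This needs the locality of precise representatives: where $\chi=1$ on an open neighbourhood, the Lebesgue averages of $\chi u_k$ and $u_k$ coincide for small radii. So $u_k^\ast$ vanishes q.e. on $E$ and converges q.e. to $c$. Since a countable union of polar sets is polar, $c=0$ outside a polar subset of $E$. As $\Cp(E)>0$ (and polarity does not depend on the chosen capacity by Proposition~\ref{prop:same quasi-topology}), $E$ contains a point where $c=0$. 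This contradicts $c\neq 0$.

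I expect no serious obstacle. The only delicate point is justifying the q.e. convergence of precise representatives for $\mH^1(\omega)$ functions rather than $\mH^1_0$ ones, and that is handled by the cut-off $\chi$ together with the local nature of the Lebesgue-point definition. The Lipschitz assumption is used only for the compact embedding $\mH^1(\omega)\hookrightarrow \mL^2(\omega)$.
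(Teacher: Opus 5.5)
Your proposal is correct and follows essentially the same route as the paper: argue by contradiction with a normalized sequence whose gradients vanish, use compactness to get strong $\mH^1$ convergence to a nonzero constant, and then use q.e.\ convergence of precise representatives (Theorem~\ref{thm:quasi-representative}~\ref{it:q.e. convergence}) to contradict the positive capacity of $E$. Your cut-off $\chi$, together with the locality of Lebesgue points, is a worthwhile extra detail: that theorem is stated for $\mH^1_0$ functions rather than $\mH^1(\omega)$ functions, and the paper passes over this point silently.
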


\begin{proof} By contradiction, suppose the inequality is not true. Then we can find a sequence $u_n \in \mH^1(\omega)$ with $\norm{u_n}_{\mL^2} = 1$ and $\norm{\nabla u_n}_{\mL^2} \rightarrow 0$. From the compact embedding $\mH^1(\omega)\subset \mL^2(\omega)$ (which holds for Lipschitz bounded domains), we can extract a subsequence converging in $\mL^2$ to some $u$. As $\nabla u_n \rightarrow 0$ in $\mL^2$, the convergence is actually strong in $\mH^1$ and $\nabla u = 0$. As $\omega$ is connected, $u$ is constant. Since $u_n^\ast(x) = 0$ on $E$ and $u_n^\ast\rightarrow u^\ast$ q.e., after extracting a subsequence by Theorem~\ref{thm:quasi-representative}~\ref{it:q.e. convergence}, we conclude that $u=0$, which is not possible because $\norm{u}_{\mL^2}=1$.
\end{proof}

We can now proceed with the proof of Lemma~\ref{lem:maximum principle}.

\begin{proof}[Proof of Lemma~\ref{lem:maximum principle}]  First, we note that by the polarization identity and~\eqref{eq:a<1 bound}, we can extend~\eqref{eq:a<1 bound} to any $\phi\in \mH^1_0(\Omega)$ by approximation, see Lemma~\ref{lem:continuous form}. In particular, the numerator in~\eqref{eq:form h(v)} is well-defined and any minimizer of~\eqref{eq:form h(v)} can be interpreted\footnote{Note that we do not investigate the closability of the quadratic form~\eqref{eq:form h(v)}, and therefore, $h(v)$ may not be a well-defined self-adjoint operator. However, whether the form is closed or not is irrelevant to our proof.} as the ground state of $h(v) = - \Delta +v$. 

Let $\psi$ be a minimizer of~\eqref{eq:form h(v)}. Without loss of generality, we can assume that $\psi$ is real-valued (as otherwise both $\mathrm{Re} \, \psi$ and $\mathrm{Im}\, \psi$ are minimizers). Moreover, assuming that $u = \psi_+ = \max\{\psi, 0\}$ is not trivial (otherwise, take $\psi_-$), the result will follow if we can show that
\begin{align}
    \Cp\left(\{u^\ast(x) = 0 \}\right) = 0. \label{eq:polar zero set}
\end{align}
Indeed, in this case, the ground state $\psi = u$ has a constant sign and there cannot be two functions with constant sign that are $\mL^2$-orthogonal. Hence, the ground state is simple, i.e., the minimizer of~\eqref{eq:form h(v)} is unique.

To show~\eqref{eq:polar zero set}, we first note that, since $\psi = u- (\psi)_-$ and $|\nabla \psi|^2 = |\nabla u|^2 + |\nabla \psi_-|^2$ and $|\psi|^2 = |u|^2 + |\psi_-|^2$ a.e., it is clear that both $u$ and $\psi_-$ are ground states of $h(v)$. Hence, assuming $\lambda_1(v) = 0$ (which is always possible by a constant shift on $v$), we have the weak ground state equation
\begin{align}
    \int_\Omega \nabla u(x) \scpr \nabla g(x) \mathrm{d} x + \inner{v,u g} = 0,\quad \mbox{for any $g\in \mH^1_0(\Omega)$. } \label{eq:eigenfunction}
\end{align}

We now take inspiration from \cite{BP03,OP16} and consider the trial states
\begin{align}
    g_\delta(x) = \frac{\phi(x)^2}{\delta + u(x)}, \quad \mbox{for $\delta>0$ and $\phi \in C^\infty_c(\Omega;\R)$ with $0\leq \phi \leq 1$.}
\end{align}
Note that $g_\delta \in \mH^1_0(\Omega)$, as it is a product of functions in $\mH^1_0(\Omega) \cap \mL^\infty(\Omega)$. Moreover, we have
\begin{align}
    \int_\Omega \nabla u \scpr \nabla g_\delta &= -\int_{\Omega}\frac{|\nabla u|^2}{(\delta + u)^2} \phi^2 + \int_\Omega \frac{2\phi \nabla u \scpr \nabla \phi}{\delta + u} \mathrm{d} x \nonumber \\
    &\leq -(1-\epsilon) \int_\Omega \frac{|\nabla u|^2 \phi^2}{(\delta + u)^2} + \frac{1}{\epsilon} \int_\Omega |\nabla \phi|^2 \mathrm{d} x,\label{eq:kinetic est}
\end{align}
for any $\varepsilon>0$ by Young's inequality.
We can now estimate the quadratic form of $v$ as follows. Since $v, u$ and $g_\delta$ are real-valued, from the polarization identity, Young's inequality, and~\eqref{eq:a<1 bound} we have
\begin{align*}
    |\inner{v,ug_\delta}| =& |\inner{v, \frac{u\phi}{u+\delta}  \phi}| = \frac14 \left|\inner{ v, \left(\varepsilon^{\frac12}\frac{u\phi}{u+\delta} + \frac{1}{\varepsilon^{\frac12}} \phi\right)^2} - \inner{v, \left(\varepsilon^{\frac12} \frac{u\phi}{u+\delta} - \frac{1}{\varepsilon^{\frac12}} \phi\right)^2} \right|\\
    \leq& \frac{a}{4} \left(  \bignorm{\nabla (\varepsilon^{\frac12}\frac{u \phi}{u+\delta} + \frac{1}{\varepsilon^{\frac12}}\phi)}_{\mL^2}^2 +  \bignorm{\nabla ( \varepsilon^{\frac12}\frac{u \phi}{u+\delta} - \frac{1}{\varepsilon^{\frac12}}\phi)}_{\mL^2}^2\right) \\
    &+ \frac{C}{4} \left( \bignorm{\varepsilon^{\frac12}\frac{u \phi}{u+\delta} + \frac{1}{\varepsilon^{\frac12}}\phi}_{\mL^2}^2 + \bignorm{ \varepsilon^{\frac12}\frac{u \phi}{u+\delta} - \frac{1}{\varepsilon^{\frac12}}\phi}_{\mL^2}^2\right) \\
    =& \frac{a}{2} \left(\varepsilon \bignorm{\nabla \frac{u\phi}{u+\delta}}_{\mL^2}^2 + \frac{1}{\varepsilon} \norm{\nabla \phi}_{\mL^2}^2\right) + \frac{C}{2} \left(\varepsilon\bignorm{\frac{u\phi}{u+\delta}}_{\mL^2}^2 + \frac{1}{\varepsilon} \norm{\phi}_{\mL^2}^2 \right) .
\end{align*}
Thus combining the above estimate with the simple estimate
\begin{align*}
    \int_{\Omega} \left| \nabla \left(\frac{u \phi}{u+\delta}\right)\right|^2 dx \leq  2\int_{\Omega} \frac{\phi^2 |\nabla u|^2}{(u+\delta)^2} +  2\int_{\Omega} |\nabla \phi|^2,
\end{align*}
we obtain
\begin{align}
    |\inner{v, ug_\delta}| \leq a\varepsilon \int_\Omega \frac{\phi^2 |\nabla u|^2}{(u+\delta)^2}  + a(\varepsilon+(2\varepsilon)^{-1}) \norm{\nabla \phi}_{\mL^2}^2 + \frac{C}{2}(\varepsilon + \varepsilon^{-1})  \norm{\phi^2}_{\mL^2}. \label{eq:v est}
\end{align}
Since $a<\infty$, we can pick $\varepsilon>0$ so small that $\left(1-\varepsilon - a\varepsilon \right) >0$. Thus, using~\eqref{eq:kinetic est} and~\eqref{eq:v est} in the eigenfunction equation~\eqref{eq:eigenfunction}, we conclude that
\begin{align}
    \int_\Omega \phi^2 \frac{|\nabla u|^2}{(\delta + u)^2} \lesssim \norm{\phi}_{\mH^1}^2, \label{eq:nice est}
\end{align}
where the implicit constant is independent of $\phi, u$ and $\delta>0$. Thus, assuming that the zero set $\mathcal{Z}_u = \{x: u^\ast(x) = 0\}$ has positive capacity, we can find a compact $K\subset \mathcal{Z}_u$ with positive capacity (by Proposition~\ref{prop:basic capacity}~\ref{it:inner regularity}). Let $K\subset \subset \omega \subset \subset \Omega$ be open bounded connected and Lipschitz\footnote{Notice that we can always find such a (smooth) $\omega$ by using superlevel sets of mollifications of the characteristic function on $K$ together with the regular value theorem and Sard's theorem}. Then, from the Poincare inequality in Lemma~\ref{lem:poincare}, we have
\begin{align}
    C \int_\omega \log(1+u/\delta)^2 \leq \int_\omega |\nabla \log (1+u/\delta)|^2 = \int_\omega \frac{|\nabla u|^2}{(u+\delta)^2}. \label{eq:Poincare est}
\end{align}
Thus, if we pick $\phi \in C^\infty_c(\Omega,[0,1])$ with $\phi = 1$ on $\omega$, we conclude from~\eqref{eq:Poincare est} and~\eqref{eq:nice est} that
\begin{align*}
    \int_\omega \log(1+u/\delta)^2 \lesssim C \norm{\phi}_{\mH^1}^2. 
\end{align*}
As the right-hand side is independent of $\delta>0$, we can take the limit $\delta \downarrow 0$ to conclude that $u = 0$ a.e. in $\omega$. As $\omega$ is arbitrary, we must have $u=0$ which give us a contradiction and completes the proof.
\end{proof}

To complete the proof of Theorem~\ref{thm:density UCP}, we now need the following version of the fermionic bathtub principle. As this result is usually stated for closed forms, which is not necessarily the case of the form in~\eqref{eq:form h(v)}, we recall its proof below.

\begin{lemma}[From many-particle to single-particle ground states] \label{lem:bathtub principle} Let $v\in\mathcal{V}_{\rm fb}(\Omega)$ and $\Gamma$ be a ground-state of $H_n(v,0)$ in the sense of Definition~\ref{def:mixed-states and ground-states}.
Then there exists a minimizer $\phi$ of~\eqref{eq:form h(v)} and this minimizer belongs to $\ker (\gamma_\Gamma - 1)$, where $\gamma_\Gamma$ is the single-particle density matrix of $\Gamma$.
\end{lemma}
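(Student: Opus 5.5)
The plan is a variational argument using the non-interacting structure: the energy of a $\mathcal{Q}_n$ state depends only on $\gamma_\Gamma$, through
\begin{align*}
    E_{v,0}(\Gamma) = \tr\, h(v)\gamma_\Gamma \coloneqq \sum_{j\in J} n_j \Big(\int_\Omega |\nabla \phi_j|^2 + \inner{v,|\phi_j|^2}\Big),
\end{align*}
where $\gamma_\Gamma=\sum_j n_j|\phi_j\rangle\langle\phi_j|$ with $0\le n_j\le 1$ and $\sum_j n_j=n$. This identity follows from Proposition~\ref{prop:single-particle density matrix} and Proposition~\ref{prop:continuous forms}. For the kinetic part one approximates by $(-\varepsilon\Delta+1)^{-1}(-\Delta)$ as in that proof. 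For the potential part one uses $\inner{v,\rho_\Gamma}=\sum_j n_j\inner{v,|\phi_j|^2}$, which is justified by continuity of the form and the convergence of $\sum_j n_j|\phi_j|^2$ to $\rho_\Gamma$ in the appropriate sense ($\sqrt{\cdot}$ converging in $\mH^1$).

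Next I compare with Slater determinants. For any orthonormal $\psi_1,\dots,\psi_n\in\mH^1_0(\Omega)$, the determinant $\psi_1\wedge\dots\wedge\psi_n$ is a $\mathcal{Q}_n$ state with $\gamma=\sum_{k\le n}|\psi_k\rangle\langle\psi_k|$. Hence the ground-state energy $E_0$ equals $\inf\{\tr\, h(v)\gamma\}$ over all admissible $\gamma$ with $0\le\gamma\le1$, $\tr\gamma=n$ and finite kinetic energy.

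The key step is a perturbation argument. Fix $\phi_1$ with $n_1=\max_j n_j$, and pick any normalized $f\in\mH^1_0(\Omega)$. Consider $\gamma_t=\gamma_\Gamma+t(|f\rangle\langle f|-|\phi_k\rangle\langle\phi_k|)$, mixed appropriately so that $0\le\gamma_t\le1$ and the trace is preserved. If $\gamma_t$ is representable by some $\mathcal{Q}_n$ state, minimality gives
\begin{align*}
    q(f)\ge q(\phi_k) \quad\text{for every occupied } \phi_k,
\end{align*}
where $q(f)=\int_\Omega|\nabla f|^2+\inner{v,|f|^2}$. Representability of such finite-rank modifications as $n$-particle fermionic states is the classical Coleman $N$-representability result, realized by a mixture of Slater determinants built from the spectral basis. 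It follows that each occupied $\phi_k$ minimizes~\eqref{eq:form h(v)}, so a minimizer exists. Lemma~\ref{lem:maximum principle} then says the minimizer $\phi$ is unique up to a constant, so every occupied orbital is a multiple of $\phi$. Since $\tr\gamma_\Gamma=n$ and $n_j\le1$, the occupied orbitals span at least $n$ directions when $n\ge2$. This forces a more refined argument, so I switch to an occupation-ordering statement. If $\phi\notin\ker(\gamma_\Gamma-1)$, move occupation $t$ from some orbital with $q(\phi_k)>\lambda(v)$ onto $\phi$. Write $\phi=\alpha\phi_0+\beta\phi^\perp$, with $\phi_0$ in the range where $\gamma_\Gamma<1$. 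Then the bathtub comparison
\begin{align*}
    \tr\, h(v)\gamma_\Gamma \ge \min\Big\{\tr\, h(v)\gamma : 0\le\gamma\le1,\ \tr\gamma=n\Big\}
\end{align*}
is strictly improved, because $q(\phi)=\lambda(v)<q(\phi_k)$ by uniqueness of the minimizer.

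I expect the main obstacle to be the existence of a single-particle minimizer, since the form may not be closed and there is no spectral theorem to fall back on. I would handle it by a direct variational construction. First show $\inf_j q(\phi_j)$, taken over the normalized occupied orbitals, equals $\lambda(v)$: if some $f$ had $q(f)<q(\phi_k)$ for an occupied $\phi_k$, the perturbation above would lower the energy below $E_0$. Then the orbital with the smallest $q$-value attains $\lambda(v)$. This needs only finitely many orbitals to matter, which holds because $q(\phi_j)$ is bounded below by the form bound while $\sum_j n_j=n$ with $n_j\le1$. Finally, the occupation of $\phi$ must equal $1$: otherwise, pushing weight onto $\phi$ from a strictly higher orbital lowers the energy, and if all occupied orbitals had $q=\lambda(v)$, uniqueness would contradict $\tr\gamma_\Gamma=n\ge2$ with $n_j\le1$.
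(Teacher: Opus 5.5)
There is a genuine gap, and it sits in the step that carries the real content of the lemma: the existence of a minimizer of \eqref{eq:form h(v)}.

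\textbf{The perturbation is not admissible.} Your perturbation $\gamma_t=\gamma_\Gamma+t(|f\rangle\langle f|-|\phi_k\rangle\langle\phi_k|)$ fails $\gamma_t\le 1$ for a general normalized $f$. Let $g\neq 0$ be the component of $f$ in $\ker(\gamma_\Gamma-1)$, and suppose $\phi_k\perp g$. Then $\inner{g,\gamma_t g}=\norm{g}^2+t\norm{g}^4>\norm{g}^2$ for every $t>0$. So $\gamma_t\not\le 1$, and the hypothesis of your representability step fails. Consequently the inequality $q(f)\ge q(\phi_k)$ for all normalized $f$ and all occupied $\phi_k$ is false. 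For instance, take a Slater-determinant ground state $\phi_1\wedge\phi_2$ with $q(\phi_1)<q(\phi_2)$, and set $f=\phi_1$, $\phi_k=\phi_2$.

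\textbf{The existence step reuses the false claim.} You notice the contradiction this produces, but your last paragraph proves $\inf_j q(\phi_j)=\lambda(v)$ with the very same claim. That statement is in fact false as written, because it depends on the choice of eigenbasis in degenerate eigenspaces of $\gamma_\Gamma$. Take $\gamma_\Gamma=P_E$ with $E=\mathrm{span}\{\phi,\chi\}$, where $\phi$ is the minimizer and $\inner{h(v)\phi,\chi}=0$. The equally valid spectral basis $(\phi\pm\chi)/\sqrt2$ gives $\min_j q(\phi_j)=(q(\phi)+q(\chi))/2>\lambda(v)$. The remark that only finitely many orbitals matter because $q(\phi_j)$ is bounded below is also not an argument for attainment.

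\textbf{The occupation-ordering step has the same defect.} Adding $t|\phi\rangle\langle\phi|$ violates $\gamma\le1$ as soon as $\beta\neq0$. The splitting $\phi=\alpha\phi_0+\beta\phi^\perp$ only helps once you know that $\inner{h(v)\phi_0,\phi^\perp}=0$, which you never derive.

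\textbf{What is missing.} You need a first-order condition that controls vectors straddling occupied and unoccupied directions, together with a diagonalization of $h(v)$ on a finite-dimensional occupied space.
\begin{itemize}
\item The paper passes to a Slater-determinant ground state $\Phi=\phi_1\wedge\dots\wedge\phi_n$ built from the $n$ most occupied natural orbitals, with $E=\mathrm{span}\{\phi_1,\dots,\phi_n\}$.
\item Varying one orbital, $\phi_j\mapsto\phi_j+\alpha\psi$ with $\psi\in E^\perp\cap\mH^1_0(\Omega)$, gives $\inner{h(v)\phi,\psi}=0$ for all $\phi\in E$.
\item Replacing an orbital by an element of $E^\perp\cap\mH^1_0(\Omega)$ shows that the Rayleigh quotient there is at least its minimum $m$ over $E$.
\item Hence $q(\psi)=q(\psi_E)+q(\psi_{E^\perp})\ge m\norm{\psi}_{\mL^2}^2$ for every $\psi\in\mH^1_0(\Omega)$. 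So $\lambda(v)=m$ is attained in the finite-dimensional space $E$, and the uniqueness from Lemma~\ref{lem:maximum principle} pins the minimizer down.
\end{itemize}

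\textbf{A repair at the level of $\gamma_\Gamma$.} An admissible substitute for your perturbation is the unitary variation $\gamma_\Gamma\mapsto e^{itA}\gamma_\Gamma e^{-itA}$. Here $A$ is self-adjoint, finite rank, with range in $\mH^1_0(\Omega)$, and the variation is realized by conjugating $\Gamma$ with the $n$-fold tensor power of $e^{itA}$. It preserves $0\le\gamma\le1$ and the trace. Its first-order condition shows that $h(v)$ maps each eigenspace $\ker(\gamma_\Gamma-\mu)$, $\mu>0$, into itself (compare Step~1 in the proof of Theorem~\ref{thm:regular states}), which kills the cross terms. Combined with occupation moves that genuinely respect $0\le\gamma\le1$, this should be enough to run a correct version of your argument.
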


\begin{proof}
    First, we note the identity
    \begin{align*}
        \tr H_n(v,0) \Gamma =  \tr\, h(v) \gamma_\Gamma = \sum_{j=1}^\infty n_j \inner{\phi_j, h(v) \phi_j}
    \end{align*}
    where $\gamma_\Gamma = \sum_{j} n_j |\phi_j \rangle \langle \phi_j |$ is the single-particle density matrix of $\Gamma$. Then, assuming $n_j \geq n_{j+1}$ without loss of generality, the Slater determinant $\Phi \coloneqq \phi_1 \wedge... \wedge \phi_n \in \mathcal{Q}_n$ is also a ground-state of $H_n(v,0)$. Moreover, as the minimizer of $h(v)$, if existing, is unique by Lemma~\ref{lem:maximum principle}, it suffices to show that 
    \begin{align}
        \inf \left\{ \frac{\inner{\psi, h(v) \psi}}{\norm{\psi}^2} : \psi \in \mH^1_0(\Omega) \right\} = \inf \left\{ \frac{\inner{\psi, h(v) \psi}}{\norm{\psi}^2} : \psi \in E \coloneqq \mathrm{span} \{ \phi_1,...,\phi_n\} \right\}. \label{eq:equal minimizers}
    \end{align}
    Indeed, in this case, a minimum of the latter exists by compactness, which is also a minimizer for the former. To prove~\eqref{eq:equal minimizers}, first note that, for any $\psi \in E^\perp \cap \mH^1_0(\Omega)$, the function 
    \begin{align*}
        \alpha \in f_j(\alpha) = \frac{\inner{h(v) (\phi_j + \alpha \psi), (\phi_j + \alpha \psi)}}{\norm{\phi_j + \alpha \psi}^2}, \quad j\leq n,
    \end{align*}
    has a minimum at $\alpha = 0$ by the minimality of $\Phi$. In particular, $\dot{f}(0) = 2 \mathrm{Re} \inner{h(v) \phi_j, \psi} = 0$, for any $\psi \in E^\perp \cap \mH^1_0(\Omega)$ and $j\leq n$, which implies that 
    \begin{align}
        \inner{h(v) \phi, \psi} = 0 \quad \mbox{for any $\psi \in E^\perp \cap \mH^1_0(\Omega)$ and $\phi \in E$.} \label{eq:orthogonality via minimization}
    \end{align}  Moreover, we have 
    \begin{align}
        \inf\left\{ \frac{\inner{\psi, h(v) \psi}}{\norm{\psi}^2} : \psi \in E^\perp \cap \mH^1_0(\Omega)\right\} \geq \inf \left\{ \frac{\inner{\phi, h(v) \phi}}{\norm{\phi}^2}: \phi \in E\right\}, \label{eq:minimization outside}
    \end{align}
    as otherwise, we could replace one of the $\phi_j$ in $\Phi$ by an orbital in $E^\perp \cap \mH^1_0(\Omega)$ to obtain a state with lower energy. Therefore, by writing $\psi = \psi_E + \psi_{E^\perp}$ for $\psi \in \mH^1_0(\Omega)$ and using~\eqref{eq:orthogonality via minimization} and~\eqref{eq:minimization outside}, we obtain~\eqref{eq:equal minimizers}, which completes the proof.
\end{proof}
We now prove Theorem~\ref{thm:density UCP}.

\begin{proof}[Proof of Theorem~\ref{thm:density UCP}]
Let $\Gamma$ be a ground-state of $H_n(v,0)$. Then by Lemma~\ref{lem:bathtub principle} the minimizer $\phi_1$ of $h(v) = -\Delta +v$ exists and is a natural orbital of $\gamma_\Gamma$ with occupation number $n_1 = 1$. In particular, since $\rho_\Gamma = \sum_{j\in J} n_j |\phi_j|^2$, we have $\sqrt{\rho_\Gamma} \geq |\phi_1|$ a.e.. By Proposition~\ref{prop:from q.e to a.e.}, the inequality holds for the precise representatives q.e.. The result now follows because $|\phi_1^\ast(x)| >0$ q.e. in $\Omega$ by Lemma~\ref{lem:maximum principle}. 
\end{proof}

%%%%%%%%%%%%%%%%%%%%%%%%%%%%%%%%%%%%%%%%%%%%%%%%%%
\addtocontents{toc}{\protect\setcounter{tocdepth}{-1}}
%%%%%%%%%%%%%%%%%%%%%%%%%%%%%%%%%%%%%%%%%%%%%%%%%%
\section*{Acknowledgements}

The author thanks Matthias Baur for pointing out reference~\cite{HM18}, which was very useful during this work, and Rafael Antonio Lainez Reyes and Markus Penz for some feedback on an earlier draft of this manuscript.

The author is very grateful to Mih\'aly A. Csirik for sharing the arguments provided by ChatGPT 5.6 Sol Ultra to extend Theorem~\ref{thm:regular states} to arbitrary states, which was an important addition to the paper.

T.~C.~Corso acknowledges funding by the \emph{Deutsche Forschungsgemeinschaft} (DFG, German Research Foundation) - Project number 442047500 through the Collaborative Research Center "Sparsity and Singular Structures" (SFB 1481) and Project number 572811220.

%%%%%%%%%%%%%%%%%%%%%%%%%%%%%%%%%%%%%%%%%%%%%%%%%%
\addtocontents{toc}{\protect\setcounter{tocdepth}{2}}
%%%%%%%%%%%%%%%%%%%%%%%%%%%%%%%%%%%%%%%%%%%%%%%%%%
\appendix
%%%%%%%%%%%%%%%%%%%%%%%%%%%%%%%%%%%%%%%%%%%%%%%%%%

\section*{Data availability}
No datasets were generated or analyzed during the current study.

\section*{Competing interests}

The author has no competing interests to declare that are relevant to the content of this article.

%%%%%%%%%%%%%%%%%%%%%%%%%%%%%%%%%%%%%%%%%%%%%%%%%%

\printbibliography
%\input{bibliography}
%%%%%%%%%%%%%%%%%%%%%%%%%%%%%%%%%%%%%%%%%%%%%%%%%%
%%%%%%%%%%%%%%%%%%%%%%%%%%%%%%%%%%%%%%%%%%%%%%%%%%
\end{document}